\documentclass[pdflatex,sn-mathphys-num]{sn-jnl}
\input{glyphtounicode}\input{glyphtounicode-cmr}\pdfgentounicode=1
\usepackage{etoolbox}
\patchcmd{\abstractheadfont}{\fontsize{9bp}{11bp}}{\fontsize{10bp}{11bp}}{}{\PackageWarning{readablefrontmatter}{Abstract heading font override failed}}
\patchcmd{\abstractfont}{\fontsize{9bp}{11bp}}{\fontsize{10bp}{12bp}}{}{\PackageWarning{readablefrontmatter}{Abstract font override failed}}
\patchcmd{\keywordfont}{\fontsize{8bp}{9.5bp}}{\fontsize{10bp}{12bp}}{}{\PackageWarning{readablefrontmatter}{Keyword font override failed}}
\AtBeginDocument{\expandafter\patchcmd\csname\string\spiproof\endcsname{\small}{\normalsize}{}{\PackageWarning{readableproofs}{Proof font override failed}}}

\usepackage[utf8]{inputenc}
\usepackage{microtype}
\usepackage{amsmath,amssymb,amsthm,mathtools}
\usepackage{graphicx}
\usepackage{xurl}
\hypersetup{colorlinks=true,citecolor=blue,linkcolor=blue,urlcolor=blue}

\allowdisplaybreaks
\newtheorem{theorem}{Theorem}[section]
\newtheorem{lemma}[theorem]{Lemma}
\newtheorem{proposition}[theorem]{Proposition}
\newtheorem{corollary}[theorem]{Corollary}
\newtheorem*{mainresult}{Main result}
\theoremstyle{definition}
\newtheorem{definition}[theorem]{Definition}
\theoremstyle{remark}
\newtheorem{remark}[theorem]{Remark}

\newcommand{\Hphys}{\mathcal H}
\newcommand{\HS}{\mathrm{HS}}
\newcommand{\wt}{\operatorname{wt}}
\newcommand{\Tr}{\operatorname{Tr}}
\newcommand{\id}{\operatorname{id}}
\newcommand{\rank}{\operatorname{rank}}
\newcommand{\cD}{\mathfrak D}
\newcommand{\cB}{\operatorname{Ball}}
\newcommand{\cE}{\mathcal E}
\newcommand{\cN}{\mathcal N}
\newcommand{\cR}{\mathcal R}
\newcommand{\Fent}{F_{\mathrm{ent}}}
\newcommand{\Mis}{\operatorname{Mis}}

\usepackage{hyperxmp}
\hypersetup{
  pdftitle={Pauli-ball discrepancy and stability of quantum codes},
  pdfauthor={Aryeh Lev Zabokritskiy (Yohananov)},
  pdflang={en-US},
  pdfdisplaydoctitle=true,
  pdfkeywords={quantum error correction, Pauli-ball discrepancy,
    quantum MDS codes, quantum weight enumerators, approximate recovery,
    multipartite entanglement}
}
\hypersetup{bookmarksnumbered=true,bookmarksdepth=2,keeppdfinfo}

\begin{document}
\title{Pauli-ball discrepancy and stability of quantum codes}
\author*[1,2]{\fnm{Aryeh Lev} \sur{Zabokritskiy (Yohananov)}}
\email{yuhanalev@telhai.ac.il}
\affil[1]{\orgdiv{Department of Computer Science},
\orgname{MIGAL--Galilee Research Institute},
\orgaddress{\city{Kiryat Shmona}, \country{Israel}}}
\affil[2]{\orgdiv{Department of Computer Science},
\orgname{Tel-Hai University of Kiryat Shmona and the Galilee},
\orgaddress{\city{Kiryat Shmona}, \country{Israel}}}

\abstract{
The Pauli-ball discrepancy, a quadratic discrepancy summed over all error
radii, measures how uniformly the Pauli error images of a $K$-dimensional
quantum code cover the $d$-dimensional Hilbert space.  At parameters attaining
equality in the quantum Singleton bound with minimum code distance $\geq3$, or in the
one-error Hamming bound with prime-power local dimension, the discrepancy
of every projector of the prescribed rank equals an explicit benchmark
plus an exact nonnegative remainder, which vanishes precisely for pure
quantum maximum-distance-separable codes and nondegenerate perfect codes,
respectively.  For a Pauli ball of radius $e$ below half the target distance
and size $V_e$, this remainder linearly bounds the squared Hilbert--Schmidt
distance of the error-image Gram matrix from the identity.
When $KV_e\leq d$, it linearly bounds the
diamond error and entanglement infidelity of a noise-independent polar
recovery under any noise in that ball's span.  For five qubits, we
obtain the sharp rank-two Gram constant and exact logical channels of rotated
codes attaining this linear order in diamond error.  For pure states, the
discrepancy equals a pure-state benchmark plus a positive weighted sum of
marginal-purity defects, so absolutely maximally entangled states, when they
exist, are exactly the minimizers.
}
\keywords{quantum error correction, Pauli-ball discrepancy, quantum MDS codes,
quantum weight enumerators, approximate recovery, multipartite entanglement}

\maketitle

\section{Introduction}

Discrepancy quantifies departure from uniform distribution.  In Hamming
spaces, quadratic ball discrepancy measures this departure over all centers
and radii; a Stolarsky-type identity expresses it through the distance
distribution of the point set \cite{Barg2021Stolarsky}.  The corresponding
quantum problem concerns a code subspace and its images under Pauli errors,
grouped by weight.  How uniformly do these images cover the physical Hilbert
space, which code projectors minimize the discrepancy, and what does
near-minimality imply for quantum error correction (QEC)?

Fix $n\geq1$, $q\geq2$ and $1\leq K\leq d\coloneqq q^n$, put $Q\coloneqq q^2$,
and let $P$ be a rank-$K$ orthogonal projector on
$\Hphys\coloneqq(\mathbb C^q)^{\otimes n}$.  Let $U_a$ run over the tensor
products of the standard $q$-dimensional shift and phase operators, with
phase ignored, and let $\wt(a)$ be the number of nonidentity tensor factors
of $U_a$.  For $0\leq r\leq n$ and operators $\sigma$ on $\Hphys$ put
\[
 \cB_r\coloneqq\{a:\wt(a)\leq r\},\qquad
 V_r\coloneqq|\cB_r|,\qquad
 T_r(\sigma)\coloneqq\sum_{a\in\cB_r}U_a\sigma U_a^\dagger,
\]
and call $\cB_r$ the Pauli ball of radius $r$.  With $I$ the physical
identity and $\|\cdot\|_{\HS}$ the Hilbert--Schmidt norm, the all-radii
Pauli-ball discrepancy is
\[
 \cD_Q(P)\coloneqq\sum_{r=0}^{n-1}\Phi_r(P),
 \qquad
 \Phi_r(P)\coloneqq\Bigl\|T_r(P)-\frac{KV_r}{d}I\Bigr\|_{\HS}^2,
\]
where $\Phi_r(P)$ is the radius-$r$ covering defect.  Uniform Pauli noise on
$\cB_r$ sends the code state $P/K$ to $V_r^{-1}T_r(P/K)$, so
\[
 \Phi_r(P)=K^2V_r^2\,\bigl\|V_r^{-1}T_r(P/K)-I/d\bigr\|_{\HS}^2 .
\]
We use $\cD_Q$ to compare code projectors; the probabilities of a particular
noise channel enter only the recovery estimates.

Our benchmarks, explicit lower bounds for $\cD_Q$, lift to code projectors
the classical optima of distance-dependent energies of codes in Hamming space
\cite{AshikhminBarg1999Binomial,CohnZhao2014Energy}, in three regimes:
maximum-distance-separable (MDS) parameters, Hamming parameters, and the
parameters of the classical additive codes behind Gottesman's distance-three
quantum family \cite{Gottesman1996Class}.  For classical codes, the
connection of these optima with quadratic discrepancy gives exact minima and
stability estimates \cite{Zabokritskiy2026Perfect,Zabokritskiy2026Stability}.
For a quantum code, $\|T_r(P)\|_{\HS}^2$ and its tightness condition belong
to fusion-frame theory \cite{CasazzaFickus2009Fusion,BachocEhler2013Tight},
whereas the noncommutative coloring problem also called
\emph{quantum discrepancy} \cite{AlishahiEtAl2020Quantum} is distinct from
our coverage measure.

We study stability over all projectors of the prescribed rank: how strongly
near-minimal discrepancy forces orthogonal error images and accurate
recovery.  We call $(n,K)$ formal quantum MDS parameters with target distance
$\Delta\geq3$ if $n\geq2\Delta-2$ and $K=q^{n-2\Delta+2}$, which is equality
in the quantum Singleton bound.  We call $(n,K)$ formal one-error Hamming
parameters if $n\geq3$ and, with $V\coloneqq V_1=1+n(Q-1)$,
\[
 K=q^n/V\in\mathbb N,\qquad V/Q\in\mathbb N,
\]
which is equality in the one-error quantum Hamming bound together with
integrality conditions; the second condition is automatic when $q$ is a
prime power.  Neither definition assumes that a code with these parameters
exists.

At such parameters let $\delta$ be the corresponding benchmark, an explicit
lower bound for $\cD_Q$ over rank-$K$ projectors, and call
$g(P)\coloneqq\cD_Q(P)-\delta\geq0$ the discrepancy excess.  For a radius
$e$, let $G_e$ be the Gram matrix of the vectors $U_ax_i$, $a\in\cB_e$,
$1\leq i\leq K$, for an orthonormal basis $x_1,\ldots,x_K$ of
$\operatorname{ran}P$; up to index order it is the QEC matrix of the error
set $\cB_e$.  Our main stability result is an explicit Gram constant $C_e$,
depending only on the fixed parameters, such that every rank-$K$ projector
$P$ satisfies
\[
 \|G_e-I_{KV_e}\|_{\HS}^2\leq C_e\,g(P)
\]
for $1\leq e<\Delta/2$ at quantum MDS parameters and for $e=1$ at Hamming
parameters.  We call the left side the squared Gram defect.  Here $C_e$ is
the constant $C_e^{\mathrm{Gram}}$ of Corollary~\ref{cor:QMDS-rigidity} at
quantum MDS parameters and $C_1\coloneqq1/c_{n,Q}$, with $c_{n,Q}$ from
Theorem~\ref{thm:main}, at Hamming parameters; the two agree where the
regimes overlap.

Sections~\ref{sec:pauli} and~\ref{sec:enumerators} develop the covering
defect and express $\cD_Q$ through the trace and overlap enumerators, with
coefficients $\sum_{\wt(a)=w}|\Tr(PU_a)|^2$ and
$\sum_{\wt(a)=w}\|PU_aP\|_{\HS}^2$, the $A$ and $B$ enumerators of
Shor--Laflamme and Rains \cite{ShorLaflamme1997MacWilliams,Rains1998Enumerators}.
Intrinsic enumerators and tensor networks describe and compute enumerators
\cite{KubischtaTeixeira2026Intrinsic,CaoEtAl2024Lego}, numerical-range
methods construct exact codes \cite{LipkaBartosikZyczkowski2017Nuclear}, and
signature norms and joint higher-rank numerical ranges describe exact
detection \cite{DuEtAl2026KLGeometry,GuptaEtAl2026Variance}.  Other
enumerator methods give linear-programming bounds under general Kraus models
\cite{OuyangLai2022ChannelLP} and finite-length fidelity bounds under
depolarizing noise \cite{Ashikhmin2014Fidelity}; our estimates instead
control the squared Gram defect and the recovery error of an arbitrary
projector through its discrepancy excess.

Section~\ref{sec:qmds} treats quantum MDS parameters.  Huber and Grassl
determine quantum MDS enumerators from their parameters and establish purity
for $\Delta\geq3$ \cite{HuberGrassl2020QMDS}.  For every rank-$K$
projector, including rank-one and nonstabilizer projectors, $g(P)$ equals an
explicit positive combination of these enumerator coefficients at the
weights where they vanish for a pure quantum MDS code, and
\[
 g(P)=0\iff PU_aP=0\quad\text{for }1\leq\wt(a)<\Delta ,
\]
that is, exactly when $P$ is a pure quantum MDS code; the benchmark is
attained if and only if such a code exists.  The excess also bounds linearly the squared
Hilbert--Schmidt distances $\|\rho_S-I_S/q^{|S|}\|_{\HS}^2$ of the marginals
$\rho_S$ of the code state $P/K$ on sets $S$ of at most $n-\Delta+1$ parties
from the maximally mixed state, where $I_S$ is the identity on $S$.

Sections~\ref{sec:stability} and~\ref{sec:equality} treat one-error Hamming
parameters.  There $KV=d$, so the squared radius-one Gram defect is
$\Phi_1(P)$, and $\Phi_1(P)=0$ means that the radius-one error images form an
orthogonal decomposition of $\Hphys$.  For every projector of this rank,
$g(P)$ equals an explicit nonnegative remainder that vanishes exactly when
$\Phi_1(P)=0$ and, by the Gram bound above, is at least $\Phi_1(P)/C_1$.
The benchmark is
therefore attained precisely by nondegenerate perfect codes; a degenerate
one-error-correcting code of this rank, if one exists, has
$\Phi_1(P)>0$ and does not attain it.
For stabilizer projectors, $\Phi_1(P)\leq C_1\,g(P)$ is the lower bound in
the classical two-sided inequality of \cite{Zabokritskiy2026Stability} for
the normalizer code; the new content is its validity for every projector of
the prescribed rank.

Sections~\ref{sec:qmds-recovery} and~\ref{sec:rigidity} treat recovery when
$KV_e\leq d$, in both regimes.  Let $\operatorname{Enc}$ embed
$\mathbb C^K$ onto $\operatorname{ran}P$.  We use one recovery $\cR_e$,
independent of the noise: the adjoint of the isometric polar factor of the
error-image map $(a,x)\mapsto U_ax$, followed by discarding the error label.
Polar and transpose/Petz recoveries are established
\cite{BarnumKnill2002Reversing,BenyOreshkov2010Approximate,
EldarForney2001SquareRoot,NgMandayam2010Transpose,
ZhengEtAl2024NearOptimal,MaTanWright2025Haar,ElimelechAlbertBarg2026ErrorSet};
on noisy code states $\cR_e$ coincides with the error-set Petz map of
Elimelech, Albert, and Barg, who give universal recovery guarantees when the
noise has Kraus operators in $\operatorname{span}\{U_a:a\in\cB_e\}$
\cite{ElimelechAlbertBarg2026UniversalRecovery}.

Let $\Fent$ be the entanglement fidelity at the maximally mixed logical
input, let $\|\cdot\|_\diamond$ be the unhalved diamond norm, and let
$\cN_{e,p}(\sigma)\coloneqq\sum_{a\in\cB_e}p_aU_a\sigma U_a^\dagger$ be a
stochastic Pauli mixture with largest probability $p_\ast$, including that
of the identity.  For every channel $\cN$ with Kraus operators in
$\operatorname{span}\{U_a:a\in\cB_e\}$, the average-case guarantee of
Elimelech, Albert, and Barg gives the first bound, and
\[
 \begin{aligned}
 1-\Fent(I_K/K,\cR_e\circ\cN\circ\operatorname{Enc})
 &\leq C_e\,g(P)/K,\\
 \|\cR_e\circ\cN\circ\operatorname{Enc}-\id_K\|_\diamond
 &\leq 2C_e\,g(P),\\
 \|\cR_e\circ\cN_{e,p}\circ\operatorname{Enc}-\id_K\|_\diamond
 &\leq 2p_\ast C_e\,g(P),\\
 1-\Fent(I_K/K,\cR_e\circ\cN_{e,p}\circ\operatorname{Enc})
 &\leq p_\ast C_e\,g(P)/K.
 \end{aligned}
\]
Since $p_\ast$ is at least the identity probability, the last two bounds do
not vanish with the noise.  For general channels in the span, the
worst-case guarantees of
\cite{MaTanWright2025Haar,ElimelechAlbertBarg2026UniversalRecovery} give
diamond-error bounds of order $g(P)^{1/2}$.

Section~\ref{sec:five} treats five qubits, where Pauli-rotation families
have been optimized for noise-tailored recovery \cite{ShaoEtAl2026VGQEC} and
the logical process matrices of the five-qubit code are known for
independent unital and coherent noise \cite{Liu2023FiveQubitCoherent}.
Here the quantum MDS and Hamming benchmarks coincide at $K=2$, and every
rank-two projector satisfies
\[
 \|G_1-I_{32}\|_{\HS}^2\leq g(P)/3 .
\]
The family obtained from the $[[5,1,3]]_2$ code projector
\cite{LaflammeEtAl1996Perfect} by the two-qubit rotation
$e^{-i\theta X_4X_5}$, of the type used in \cite{ShaoEtAl2026VGQEC} up to
local-Clifford and qubit-ordering conventions, attains this bound.  For every
stochastic Pauli mixture supported in $\cB_1$ we determine the logical Pauli
channel of each member under its own polar recovery, with its diamond error
and entanglement infidelity in terms of $g(P)$.  Under uniform noise on
$\cB_1$ the diamond error is $g(P)/192+O(g(P)^2)$ as $g(P)\to0$, so the
linear order of the general recovery bound is attained.

Section~\ref{sec:balanced-summary} treats balanced packing beyond perfect
codes, for Gottesman's $[[2^j,2^j-j-2,3]]_2$ codes, $j\geq3$, whose
radius-one error images are pairwise orthogonal but do not fill the space.
These codes minimize $\cD_Q$ at their rank, and every projector of that rank
exceeds their discrepancy by at least a positive multiple of its squared
radius-one Gram defect.

Section~\ref{sec:pure} treats rank-one projectors, building on the
established relation between quantum enumerators, marginal purities, and
multipartite entanglement \cite{Scott2004Multipartite,HuberEtAl2018AME,
FacchiEtAl2008MMES,HelwigCui2013AME,YuEtAl2021Marginal}.  For every $n\geq2$ there is an explicit pure-state benchmark
$\delta^{\mathrm{pure}}_{n,q}$ such that every rank-one projector $P_\psi$,
with marginals $\rho_S$ on $S\subseteq\{1,\ldots,n\}$, satisfies
\[
 \cD_Q(P_\psi)-\delta^{\mathrm{pure}}_{n,q}
 =q^{-n}\sum_{s=1}^{\lfloor n/2\rfloor}\gamma_s
  \sum_{|S|=s}\Bigl\|\rho_S-\frac{I_S}{q^s}\Bigr\|_{\HS}^2,
 \qquad\gamma_s>0 .
\]
Hence absolutely maximally entangled $\operatorname{AME}(n,q)$ states, whose
marginals on at most $\lfloor n/2\rfloor$ parties are maximally mixed, are
exactly the minimizers whenever they exist.

Section~\ref{sec:limitations} lists open questions.

\section{Pauli balls and the covering defect}
\label{sec:pauli}

We first express uniform coverage by the Pauli error images
$U_a\operatorname{ran}P$ of the code as an operator
condition.  The resulting covering defect has two forms: a centered sum of
projectors and a sum of pairwise overlaps.  The first gives its geometric
meaning; the second connects it to quantum weight enumerators.  A
constant shift of it at fixed rank, the packing defect, equals the squared
Gram defect used in the Introduction.

Fix $q\geq2$, put $d\coloneqq q^n$, let $[n]\coloneqq\{1,\ldots,n\}$, and let
$\Hphys\coloneqq(\mathbb C^q)^{\otimes n}$.
Let $X,Z$ be the generalized Weyl operators on $\mathbb C^q$,
\[
 X|j\rangle\coloneqq|j+1\bmod q\rangle,
 \qquad Z|j\rangle\coloneqq e^{2\pi i j/q}|j\rangle.
\]
Ignoring irrelevant phase factors, tensor products of the $X^uZ^v$ are
indexed by
\[
 a=((u_1,v_1),\ldots,(u_n,v_n))\in(\mathbb Z_q^2)^n.
\]
We write
$U_a$ for the corresponding unitary, define
$\operatorname{supp}(a)\coloneqq\{i\in[n]:(u_i,v_i)\ne(0,0)\}$, and set
$\wt(a)\coloneqq|\operatorname{supp}(a)|$.  The basis is
Hilbert--Schmidt orthogonal:
\begin{equation}
 \Tr(U_a^\dagger U_b)=d\,\mathbf 1_{a=b}.
 \label{eq:weyl-orthogonality}
\end{equation}
Put $Q\coloneqq q^2$.  For an integer $0\leq r\leq n$, put
\[
 \begin{aligned}
 \cB_r&\coloneqq\{a:\wt(a)\leq r\},\\
 V_r&\coloneqq|\cB_r|=\sum_{j=0}^r\binom nj(Q-1)^j.
 \end{aligned}
\]

\begin{definition}[Pauli-ball discrepancy]
Let $P$ be an orthogonal projector of rank $0\leq K\leq d$.  Define
\begin{align}
 T_r(P)&\coloneqq\sum_{a\in\cB_r}U_aPU_a^\dagger,\notag\\
 \alpha_r(P)&\coloneqq\frac{\Tr(P)V_r}{d}=\frac{KV_r}{d},
 \label{eq:Tr-def}\\
 \Phi_r(P)&\coloneqq\|T_r(P)-\alpha_r(P) I\|_{\HS}^2,\notag\\
 \cD_Q(P)&\coloneqq\sum_{r=0}^{n-1}\Phi_r(P).
 \label{eq:Phi-D-def}
\end{align}
\end{definition}

We call $\Phi_r(P)$ the radius-$r$ covering defect of $P$.

\begin{remark}[Rank-complement symmetry]
\label{rem:complement}
Let $P^c\coloneqq I-P$, of rank $d-K$.  Since
$T_r(P^c)=V_rI-T_r(P)$ and
$\alpha_r(P^c)=V_r-\alpha_r(P)$,
\begin{equation}
 T_r(P^c)-\alpha_r(P^c)I
 =-\bigl(T_r(P)-\alpha_r(P)I\bigr).
 \label{eq:complement-symmetry}
\end{equation}
Consequently $\Phi_r(P^c)=\Phi_r(P)$ for every $r$ and
$\cD_Q(P^c)=\cD_Q(P)$.  Thus minimizers at rank $K$ correspond exactly to
minimizers at rank $d-K$.  This is the operator analogue of complement
invariance for classical discrepancy.
\end{remark}

For the normalized enumerators and code states below, assume $K\geq1$.

The omitted $r=n$ summand is zero, since Weyl twirling gives
$T_n(P)=dK I$.  The $r=0$ summand is the constant $K-K^2/d$; retaining it
does not affect optimization and makes the radii $r=0,\dots,n-1$ run over
all coefficients of the polynomial in \eqref{eq:W-norm} below.

\begin{proposition}[Fusion-frame and overlap forms]
\label{prop:fusion}
For every $r$,
\begin{equation}
 \Phi_r(P)=\Tr T_r(P)^2-\frac{(KV_r)^2}{d}.
 \label{eq:fusion-excess}
\end{equation}
If $c_r(w)$ is the cardinality of the intersection of two radius-$r$ balls in
the $Q$-ary Hamming space of labels $(\mathbb Z_q^2)^n$, with distance
$\wt(b-a)$, whose centers are at distance $w$, then
\begin{equation}
 \Tr T_r(P)^2=K\sum_{w=0}^n c_r(w)B_w(P),
 \label{eq:overlap-form}
\end{equation}
where
\begin{equation}
 B_w(P)\coloneqq\frac1K\sum_{\wt(a)=w}\Tr(PU_aPU_a^\dagger).
 \label{eq:B-def}
\end{equation}
Thus $\Phi_r$ is the fusion-frame potential of the subspaces
$U_a\operatorname{ran}P$, $a\in\cB_r$, minus its tight-frame value
$(KV_r)^2/d$.
\end{proposition}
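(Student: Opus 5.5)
The plan is to prove the two identities by direct expansion, using only the Weyl orthogonality \eqref{eq:weyl-orthogonality} and the group structure of the labels.

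\emph{The first identity \eqref{eq:fusion-excess}.} Write $A\coloneqq T_r(P)$ and $\alpha\coloneqq\alpha_r(P)=KV_r/d$. Since $A$ is Hermitian, expanding the square gives
\[
\|A-\alpha I\|_{\HS}^2=\Tr A^2-2\alpha\Tr A+\alpha^2 d .
\]
Each conjugate $U_aPU_a^\dagger$ has trace $K$, so $\Tr A=KV_r$. Substituting,
\[
-2\alpha KV_r+\alpha^2d=-2(KV_r)^2/d+(KV_r)^2/d=-(KV_r)^2/d ,
\]
which yields \eqref{eq:fusion-excess}. Since $\Tr A^2=\sum_{a,b\in\cB_r}\Tr(P_aP_b)$ with $P_a\coloneqq U_aPU_a^\dagger$, this is the fusion-frame potential of the subspaces $U_a\operatorname{ran}P$. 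The value $(KV_r)^2/d$ is its tight-frame value: by Cauchy--Schwarz, $\Tr A^2\geq(\Tr A)^2/d$, with equality exactly when $A$ is a multiple of $I$. This gives the interpretive sentence.

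\emph{The overlap form \eqref{eq:overlap-form}.} Expand
\[
\Tr T_r(P)^2=\sum_{a,b\in\cB_r}\Tr\bigl(U_aPU_a^\dagger U_bPU_b^\dagger\bigr)
=\sum_{a,b}\Tr\bigl(PU_a^\dagger U_bPU_b^\dagger U_a\bigr).
\]
Up to a phase, $U_a^\dagger U_b=\omega\,U_{b-a}$. The phase cancels against its conjugate in $U_b^\dagger U_a=\bar\omega\,U_{b-a}^\dagger$, so each summand equals $\Tr(PU_cPU_c^\dagger)$ with $c=b-a$. This cancellation is the only point needing care, since the Weyl operators commute only up to roots of unity. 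I would verify it by writing $U_b=\lambda U_aU_c$ with $|\lambda|=1$, so that $U_a^\dagger U_b=\lambda U_c$ and $U_b^\dagger U_a=\bar\lambda U_c^\dagger$.

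\emph{Grouping the pairs.} Group the pairs $(a,b)\in\cB_r^2$ by $c=b-a$. The number of pairs with a given difference $c$ is
\[
|\{a:\wt(a)\leq r,\ \wt(a+c)\leq r\}|=|\cB_r\cap(\cB_r-c)| ,
\]
the intersection of two radius-$r$ balls whose centers $0$ and $-c$ are at distance $\wt(c)$. By translation and coordinatewise-permutation invariance of the Hamming metric on $(\mathbb Z_q^2)^n$, this count depends only on $w=\wt(c)$ and equals $c_r(w)$. Summing over $c$ by weight then gives
\[
\Tr T_r(P)^2=\sum_w c_r(w)\sum_{\wt(c)=w}\Tr(PU_cPU_c^\dagger)=K\sum_w c_r(w)B_w(P),
\]
which is \eqref{eq:overlap-form}.

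The statement is routine and I expect no substantive obstacle; the only subtle points are the phase bookkeeping above and the invariance argument showing that the ball-intersection count depends only on $\wt(c)$.
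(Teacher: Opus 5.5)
Your proof is correct and matches the paper's argument: you expand the square using $\Tr T_r(P)=KV_r$, then group the ordered pairs $(a,b)\in\cB_r^2$ by the difference $b-a$, using that $U_a^\dagger U_b$ is a phase times $U_{b-a}$ and that this phase cancels against its conjugate. Your explicit phase bookkeeping, and your argument that the pair count depends only on $\wt(b-a)$, spell out steps the paper leaves implicit. For that second argument, use coordinate permutations together with arbitrary per-coordinate permutations of the alphabet $\mathbb Z_q^2$ fixing $0$, all of which are Hamming isometries.
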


\begin{proof}
Since $\Tr T_r=KV_r$, expansion of the square proves
\eqref{eq:fusion-excess}.  In $\Tr T_r^2$, group ordered pairs of errors by
the weight of their difference; $U_a^\dagger U_b$ is a phase times
$U_{b-a}$.  For each fixed difference of weight $w$ the
number of pairs is $c_r(w)$, giving \eqref{eq:overlap-form}.
\end{proof}

The covering defect $\Phi_r$ measures covering uniformity.  It cannot vanish when
$KV_r<d$: since $T_r(P)-P$ is positive semidefinite and $K\geq1$, the
operator $T_r(P)$ has an eigenvalue at least $1>\alpha_r(P)$.  In that
regime, which includes the Gottesman codes of
Section~\ref{sec:balanced-summary}, the quantity that can vanish is the
packing defect
\begin{equation}
 \Psi_r(P)\coloneqq\Tr T_r(P)^2-KV_r
          =\Phi_r(P)-KV_r\bigl(1-\alpha_r(P)\bigr).
 \label{eq:packing-defect}
\end{equation}
The two quantities differ by a constant at fixed rank and coincide precisely
when $KV_r=d$.  If $G_r$ is the Gram matrix of the images of an orthonormal
code basis under $U_a$, $a\in\cB_r$, then $\Tr T_r(P)^2=\|G_r\|_{\HS}^2$ and
$\Tr G_r=KV_r$, so
$\Psi_r(P)=\|G_r-I_{KV_r}\|_{\HS}^2\geq0$ (the squared Gram defect), with equality exactly when the
images $U_a\operatorname{ran}P$, $a\in\cB_r$, are pairwise orthogonal.

There is also a smoothing interpretation.  For the code state $\rho_P\coloneqq P/K$
and the uniform-ball Pauli channel, the output is
\[
 \rho_r\coloneqq\frac1{V_r}\sum_{a\in\cB_r}U_a\rho_PU_a^\dagger
       =\frac{T_r(P)}{KV_r}.
\]
With $\tau\coloneqq I/d$ and the Hilbert--Schmidt quantum chi-square divergence,
\begin{equation}
 \chi^2_{\HS}(\rho_r\|\tau)
 \coloneqq d\|\rho_r-\tau\|_{\HS}^2
 =\frac{d\,\Phi_r(P)}{K^2V_r^2}.
 \label{eq:smoothing}
\end{equation}
\section{Quantum enumerators and the spectral identity}
\label{sec:enumerators}

To minimize the covering defect, we need a scalar description that
retains the constraints on quantum projectors.  Weyl expansion provides this
description: ball averaging acts separately on each weight, and discrepancy
becomes a positive weighted sum of squared Weyl coefficients.  The effective
alphabet size is $Q=q^2$.  The two quantum enumerators below record trace and
overlap data; their MacWilliams relation is the constraint that will make
scalar energy bounds applicable to every rank-$K$ projector.

Together with the overlap enumerator $B_w(P)$ in \eqref{eq:B-def}, we use the
trace enumerator in the Shor--Laflamme normalization
\begin{equation}
 A_w(P)\coloneqq\frac1{K^2}\sum_{\wt(a)=w}|\Tr(PU_a)|^2.
 \label{eq:A-def}
\end{equation}
For any projector $P$, write
\[
 A_P(z)\coloneqq\sum_{w=0}^nA_w(P)z^w,
 \qquad
 B_P(z)\coloneqq\sum_{w=0}^nB_w(P)z^w.
\]
Let
\begin{equation}
 K_j^{(n,Q)}(w)\coloneqq\sum_{\ell=0}^j(-1)^\ell(Q-1)^{j-\ell}
 \binom w\ell\binom{n-w}{j-\ell}
 \label{eq:krawtchouk}
\end{equation}
be the $Q$-ary Krawtchouk polynomial.

These enumerators satisfy the following feasibility and transform relations.

\begin{proposition}[Feasibility and MacWilliams normalization]
\label{prop:macwilliams}
For every rank-$K$ projector,
\begin{gather}
 A_0=B_0=1,\qquad A_w\geq0,\qquad B_w\geq A_w,\label{eq:AB-positive}\\
 \sum_{w=0}^nA_w=\frac dK,\qquad
 \sum_{w=0}^nB_w=dK,\label{eq:AB-masses}\\
 B_j=\frac Kd\sum_{w=0}^nA_wK_j^{(n,Q)}(w),
 \notag\\
 A_j=\frac1{dK}\sum_{w=0}^nB_wK_j^{(n,Q)}(w).
 \label{eq:quantum-macwilliams}
\end{gather}
Thus the vectors $A\coloneqq(A_w)_w$ and $B\coloneqq(B_w)_w$ are both
feasible, that is, nonnegative with nonnegative MacWilliams transforms:
$A$ has mass $d/K$ with transform $B$, while $B$ has mass $dK$ with
transform $A$.
\end{proposition}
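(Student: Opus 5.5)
The plan is to derive all statements from the Weyl expansion of $P$ together with the orthogonality relation \eqref{eq:weyl-orthogonality}. Since the $U_b$ form an orthogonal basis,
\[
 P=\frac1d\sum_b \Tr(U_b^\dagger P)\,U_b .
\]
Hence $A_w(P)=K^{-2}\sum_{\wt(b)=w}|\Tr(PU_b)|^2$ is a sum of squared moduli of the Weyl coefficients. Three of the stated facts are then immediate:
\begin{itemize}
 \item $A_w\geq0$ holds termwise.
 \item $A_0=1$ because $\Tr P=K$.
 \item $\sum_wA_w=d/K$ follows from Parseval: $\sum_b|\Tr(PU_b)|^2=d\Tr(P^2)=dK$, and dividing by $K^2$ gives $d/K$.
\end{itemize}

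For $B$, note that $\Tr(PU_0PU_0^\dagger)=\Tr P=K$, so $B_0=1$. Each term $\Tr(PU_aPU_a^\dagger)=\|PU_aP\|_{\HS}^2$ is nonnegative. The inequality $B_w\geq A_w$ will come termwise from Cauchy--Schwarz. We have $|\Tr(PU_a)|^2=|\Tr(P\cdot PU_aP)|^2\leq \|P\|_{\HS}^2\|PU_aP\|_{\HS}^2=K\,\Tr(PU_aPU_a^\dagger)$. Dividing by $K^2$ gives $A$-terms bounded by $B$-terms, since $B_w$ carries the factor $1/K$ and $A_w$ the factor $1/K^2$. For the mass of $B$, use the twirl identity $\sum_aU_aPU_a^\dagger=dK\,I$, which the paper already notes as $T_n(P)=dKI$. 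Multiplying by $P$ and taking the trace gives $\sum_a\Tr(PU_aPU_a^\dagger)=dK^2$, so $\sum_wB_w=dK$.

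The main step is the MacWilliams identity. I would expand $\Tr(PU_aPU_a^\dagger)$ in the Weyl basis. Write $U_aU_bU_a^\dagger=\omega^{\langle a,b\rangle}U_b$, where $\langle\cdot,\cdot\rangle$ is the symplectic form and $\omega=e^{2\pi i/q}$. This gives
\[
 \Tr(PU_aPU_a^\dagger)=\frac1d\sum_b|\Tr(PU_b)|^2\,\omega^{\langle a,b\rangle}
\]
up to a harmless sign convention in the exponent. Summing over $\wt(a)=j$ then reduces everything to the character sum $\sum_{\wt(a)=j}\omega^{\langle a,b\rangle}$. This sum factorizes over the $n$ sites. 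On each site, summing over the $Q-1$ nonzero pairs gives $Q-1$ if $b_i=0$ and $-1$ otherwise, because the symplectic form is nondegenerate on $\mathbb Z_q^2$. Collecting the sites by support yields $K_j^{(n,Q)}(\wt(b))$. Multiplying by $1/K$ and inserting $|\Tr(PU_b)|^2=K^2A_w$ gives
\[
 B_j=\frac Kd\sum_wA_wK_j^{(n,Q)}(w).
\]

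The inverse relation follows from the orthogonality of Krawtchouk polynomials, $\sum_jK_i(j)K_j(w)=Q^n\mathbf 1_{i=w}$, together with $Q^n=d^2$. Substituting gives $\frac1{dK}\sum_wB_wK_j(w)=\frac1{dK}\cdot\frac Kd\cdot d^2A_j=A_j$. The main obstacle is the bookkeeping in the symplectic character sum, namely the sign and phase conventions for $U_aU_bU_a^\dagger$ when $q$ is not prime. It suffices that the form is a nondegenerate pairing on $\mathbb Z_q^2$, which holds for the standard Weyl commutation relation $ZX=\omega XZ$.
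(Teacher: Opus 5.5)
Your proposal is correct and follows essentially the same route as the paper: Cauchy--Schwarz against $P$ gives $A_w\leq B_w$, the Weyl expansion with the conjugation character summed over each weight layer produces $K_j^{(n,Q)}$, and Krawtchouk orthogonality with $Q^n=d^2$ gives the inverse identity. The only difference is that you get the two masses directly from Parseval and the full twirl $T_n(P)=dKI$, whereas the paper reads them off the MacWilliams identities (the case $j=0$ and the sum over $j$); both are immediate.
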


\begin{proof}
Only the inequality in \eqref{eq:AB-positive} needs care.  For each $a$, put
$X\coloneqq PU_aP$ as an operator on the $K$-dimensional range of $P$.  Cauchy--Schwarz
gives
\[
 |\Tr X|^2\leq K\Tr(XX^\dagger)
 =K\Tr(PU_aPU_a^\dagger).
\]
Summing at fixed weight gives $A_w\leq B_w$.

Expand $P$ in the orthogonal Weyl basis \eqref{eq:weyl-orthogonality} and use
the character acquired by a Weyl coefficient under conjugation.  The weight
transform of this character is exactly \eqref{eq:krawtchouk}, which gives the
first identity in \eqref{eq:quantum-macwilliams}.  Since the Krawtchouk
matrix satisfies $\sum_jK_i^{(n,Q)}(j)K_j^{(n,Q)}(w)=Q^n\mathbf 1_{i=w}$,
inverting the first identity gives the second.  Its case $j=0$ and its sum
over $j$ give \eqref{eq:AB-masses}.  These are the standard quantum MacWilliams
identities \cite{ShorLaflamme1997MacWilliams,Rains1998Enumerators} in the
normalization \eqref{eq:A-def} and \eqref{eq:B-def}.
\end{proof}

\begin{definition}[Pure codes and distance]
\label{def:pure-enumerator-distance}
A rank-$K$ projector $P$ defines a \emph{pure code of distance at least}
$\Delta$ if $PU_aP=0$ for every nonidentity Weyl operator with
$\wt(a)<\Delta$.
Equivalently, $B_s(P)=0$ for $1\leq s<\Delta$, since each summand in
$B_s(P)$ is $\|PU_aP\|_{\HS}^2/K$.  We use this same pure-code convention
when $K=1$.  The scalar Knill--Laflamme condition by itself would be
vacuous in rank one.
\end{definition}

Define the radius-$r$ Lloyd polynomial and the all-radii spectral weight by
\begin{equation}
 L_r(w)\coloneqq\sum_{j=0}^rK_j^{(n,Q)}(w),
 \qquad
 W_{n,Q}(w)\coloneqq\sum_{r=0}^{n-1}L_r(w)^2.
 \label{eq:L-W}
\end{equation}

\begin{theorem}[Pauli spectral identity]
\label{thm:spectral}
For every rank-$K$ projector,
\begin{align}
 \Phi_r(P)&=\frac{K^2}{d}\sum_{w=1}^nA_w(P)L_r(w)^2,
 \label{eq:Phi-spectral}\\
 \cD_Q(P)&=\frac{K^2}{d}\sum_{w=1}^nA_w(P)W_{n,Q}(w).
 \label{eq:D-spectral}
\end{align}
\end{theorem}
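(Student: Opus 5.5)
The plan is to expand $P$ in the Weyl basis and observe that ball averaging acts diagonally on that basis. By \eqref{eq:weyl-orthogonality}, write $P=\frac1d\sum_b \Tr(U_b^\dagger P)\,U_b$. Conjugation gives $U_aU_bU_a^\dagger=\chi(a,b)U_b$, where $\chi(a,b)$ is the root of unity given by the symplectic form on $(\mathbb Z_q^2)^n$. Hence
\[
 T_r(P)=\frac1d\sum_b \Tr(U_b^\dagger P)\Bigl(\sum_{a\in\cB_r}\chi(a,b)\Bigr)U_b .
\]
The first thing to establish is that the inner character sum depends only on $w=\wt(b)$ and equals $L_r(w)$. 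The sum factorizes over the parties on each sphere $\wt(a)=j$. At a coordinate where $b_i=(0,0)$, the nonzero values of $a_i$ contribute $Q-1$. At a coordinate where $b_i\neq(0,0)$, the map $a_i\mapsto\chi_i(a_i,b_i)$ is a nontrivial character of $\mathbb Z_q^2$ by nondegeneracy of the symplectic form, so the nonzero $a_i$ contribute $-1$. Summing over supports of size $j$ then gives exactly $K_j^{(n,Q)}(w)$ as in \eqref{eq:krawtchouk}, and summing over $j\leq r$ gives $L_r(w)$. This is the same character computation the proof of Proposition~\ref{prop:macwilliams} already invokes, so I would cite it there and only spell out the per-coordinate step.

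Next I subtract the constant $\alpha_r(P)I$. The $b=0$ term of $T_r(P)$ is $\frac1d\Tr(P)\,V_r I=\frac{KV_r}{d}I=\alpha_r(P)I$, because $L_r(0)=V_r$. So $T_r(P)-\alpha_r(P)I$ is the same expansion restricted to $b\neq0$. Orthogonality \eqref{eq:weyl-orthogonality} then gives the squared norm:
\[
 \Phi_r(P)=\frac1{d^2}\sum_{b\neq0}|\Tr(PU_b)|^2L_r(\wt b)^2\cdot d
 =\frac1d\sum_{w=1}^n L_r(w)^2\sum_{\wt(b)=w}|\Tr(PU_b)|^2 .
\]
By \eqref{eq:A-def} the inner sum is $K^2A_w(P)$, which proves \eqref{eq:Phi-spectral}. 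Summing over $r=0,\dots,n-1$ and using the definition \eqref{eq:L-W} of $W_{n,Q}$ gives \eqref{eq:D-spectral}.

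The only delicate point is the character identity, specifically the sign convention and the nontriviality of $a_i\mapsto\chi_i(a_i,b_i)$ for every $q$, including composite $q$. For $b_i=(u,v)\neq(0,0)$ I would check nontriviality directly. Commuting $X^{u'}Z^{v'}$ past $X^uZ^v$ produces the phase $\omega^{u'v-uv'}$ with $\omega=e^{2\pi i/q}$. If $u\neq0$, taking $a_i=(0,1)$ gives the phase $\omega^{-u}\neq1$. If $u=0$, then $v\neq0$, and taking $a_i=(1,0)$ gives $\omega^{v}\neq1$. Phase factors in the labelling of the $U_a$ cancel in $U_aU_bU_a^\dagger$, so they play no role.

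As a consistency check, Proposition~\ref{prop:fusion} together with the MacWilliams relation \eqref{eq:quantum-macwilliams} should yield the same formula.
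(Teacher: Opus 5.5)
Your proposal is correct and follows the paper's own argument. You expand $P$ in the Weyl basis, sum the conjugation characters over $\cB_r$ to obtain $L_r(\wt(b))$, identify the $b=0$ term with $\alpha_r(P)I$, apply Parseval via \eqref{eq:weyl-orthogonality}, and sum over $r$; your per-coordinate check that $a_i\mapsto\chi_i(a_i,b_i)$ is a nontrivial character for every $q\geq2$, composite $q$ included, spells out the step the paper leaves implicit.
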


\begin{proof}
The Weyl expansion of $P$ and conjugation characters give
\[
 T_r(P)=\frac1d\sum_b\Tr(U_b^\dagger P)L_r(\wt(b))U_b.
\]
The $b=0$ term is $\alpha_r(P)I$.  Parseval with
\eqref{eq:weyl-orthogonality} proves \eqref{eq:Phi-spectral}; summing in $r$
 proves \eqref{eq:D-spectral}.
\end{proof}

Throughout the linear-programming arguments, $A$ is the energy variable:
by \eqref{eq:D-spectral}, $\cD_Q(P)=(K^2/d)E_W(A)$ with the energy
$E_W(x)\coloneqq\sum_{w\geq1}x_wW_{n,Q}(w)$ of a vector $x=(x_w)_w$, and $B$
is the MacWilliams transform of $A$.

The Fourier identity between squared Lloyd polynomials and ball
intersections, used in recent proofs of the quantum Hamming bound
\cite{ZhangChen2026Degeneracy,ZhangChen2026Dimension}, supplies the same
coefficient $L_r(w)^2$.  Theorem~\ref{thm:spectral} states it for arbitrary
projectors in the Shor--Laflamme normalization and adds the sum over all
radii.

For later coefficient estimates, set
$u(z)\coloneqq1+(Q-1)z$, $b(z)\coloneqq1-z$, and
$\|\sum_jp_jz^j\|_2^2\coloneqq\sum_j|p_j|^2$.  Since
$\sum_jK_j^{(n,Q)}(w)z^j=u(z)^{n-w}b(z)^w$, division by $1-z$ gives
$\sum_rL_r(w)z^r=u(z)^{n-w}b(z)^{w-1}$, a polynomial of degree $n-1$ when
$w\geq1$; hence the coefficient-norm representation
\begin{equation}
 W_{n,Q}(w)=\|u(z)^{n-w}b(z)^{w-1}\|_2^2,
 \qquad 1\leq w\leq n.
 \label{eq:W-norm}
\end{equation}

\section{Quantum MDS stability}
\label{sec:qmds}

The goal of this section is to minimize $\cD_Q$ at the Singleton rank
$K=q^{n-2\Delta+2}$ and to control its excess by the $A$- and
$B$-coordinates that vanish for a pure quantum MDS code.  The spectral identity reduces this operator
question to a Delsarte linear-programming certificate, that is, an auxiliary
function in the sense of \cite[Proposition~5 and Section~IV]{CohnZhao2014Energy}.  Recall that a
feasible Delsarte quasicode \cite[Definition~6]{CohnZhao2014Energy} of length
$n$, alphabet size $Q$, and mass $M$ is a vector
$a=(a_0,\ldots,a_n)$ satisfying, for $0\leq j\leq n$,
\[
 \begin{aligned}
 a_0&=1,\qquad a_w\geq0,\qquad \sum_{w=0}^na_w=M,\\
 a_j^\perp&\coloneqq\frac1M\sum_{w=0}^na_wK_j^{(n,Q)}(w)\geq0.
 \end{aligned}
\]
Write $A(P)\coloneqq(A_0(P),\ldots,A_n(P))$.  Proposition~\ref{prop:macwilliams}
says that $A(P)$ is such a quasicode with
$M=d/K$, and its dual is $B(P)$.

Since $Q=q^2\geq4$, the spectral weight $W_{n,Q}$ is a strictly completely
monotone potential on $\{1,\ldots,n\}$: with
$\nabla f(w)\coloneqq f(w+1)-f(w)$, one has
$(-1)^j\nabla^jW_{n,Q}(w)>0$ whenever $j\geq0$ and
$1\leq w\leq n-j$.  For real $x\geq0$, define the interpolation
\begin{equation}
 \widetilde W_{n,Q}(x+1)\coloneqq\frac1{2\pi}\int_0^{2\pi}
 \alpha_Q(\varphi)^{n-1}
 \left(\frac{\beta(\varphi)}{\alpha_Q(\varphi)}\right)^x\,d\varphi,
 \label{eq:W-complete-monotone}
\end{equation}
where
\[
 \alpha_Q(\varphi)\coloneqq|1+(Q-1)e^{i\varphi}|^2,
 \qquad \beta(\varphi)\coloneqq|1-e^{i\varphi}|^2.
\]
Parseval applied to \eqref{eq:W-norm} gives
$\widetilde W_{n,Q}(w)=W_{n,Q}(w)$ for every integer $1\leq w\leq n$.
Since $0\leq\beta/\alpha_Q\leq1$ when $Q\geq4$, with strict inequalities
except at finitely many $\varphi$, the alternating finite
differences of these integer values are strictly positive.  This is the
spectral form of the complete-monotonicity calculation in
\cite[Proposition~5.1]{Zabokritskiy2026Perfect}.

By \eqref{eq:D-spectral} and Proposition~\ref{prop:macwilliams}, a
Delsarte linear-programming certificate for the potential $W_{n,Q}$
transfers to the quantum $A$-enumerator in the following exact form.  It
gives the quantum MDS remainder below and the Gottesman remainder in
Appendix~\ref{sec:gottesman}.

\begin{lemma}[Certificate transfer]
\label{lem:certificate-transfer}
Let $0\leq m\leq n$ and $\widehat h(0),\ldots,\widehat h(m)\in\mathbb R$, and
put $h(w)\coloneqq\sum_{s=0}^m\widehat h(s)K_s^{(n,Q)}(w)$ and
$r(w)\coloneqq W_{n,Q}(w)-h(w)$ for $1\leq w\leq n$.  Let $(a_w)_{w=0}^n$ be
real, with $a_0=1$, $\sum_{w=0}^na_w=d/K$,
$\sum_{w=0}^na_wK_s^{(n,Q)}(w)=0$ for $1\leq s\leq m$, and $r(w)=0$ whenever
$w\geq1$ and $a_w\neq0$, and put
$\delta\coloneqq\frac{K^2}{d}\sum_{w=1}^na_wW_{n,Q}(w)$.  Then every
rank-$K$ projector $P$ satisfies
\begin{equation}
 \cD_Q(P)-\delta
 =\frac{K^2}{d}\sum_{w=1}^nr(w)A_w(P)
 +K\sum_{s=1}^m\widehat h(s)B_s(P).
 \label{eq:certificate-transfer}
\end{equation}
\end{lemma}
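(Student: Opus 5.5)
The plan is a direct computation from the spectral identity \eqref{eq:D-spectral}. Writing $W_{n,Q}(w)=h(w)+r(w)$ for $1\leq w\leq n$ splits the discrepancy as
\[
 \cD_Q(P)=\frac{K^2}{d}\sum_{w=1}^n r(w)A_w(P)+\frac{K^2}{d}\sum_{w=1}^n h(w)A_w(P).
\]
The first term already appears in \eqref{eq:certificate-transfer}. It remains to show that the second term equals $\delta+K\sum_{s=1}^m\widehat h(s)B_s(P)$.

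For the second term, extend $h$ to $w=0$ by the same Krawtchouk expansion, so $h(0)=\sum_s\widehat h(s)K_s^{(n,Q)}(0)$. Using $A_0=1$, write the sum over $w\geq1$ as the full sum over $0\leq w\leq n$ minus $h(0)$. Then exchange the order of summation and apply the first MacWilliams identity of Proposition~\ref{prop:macwilliams}, $\sum_w A_w(P)K_s^{(n,Q)}(w)=(d/K)B_s(P)$. This gives
\[
 \frac{K^2}{d}\sum_{w=0}^n h(w)A_w(P)=K\sum_{s=0}^m\widehat h(s)B_s(P),
\]
where the $s=0$ term is $K\widehat h(0)$ because $B_0=1$.

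The same manipulation applies to the auxiliary vector $a$. Since $r(w)a_w=0$ for all $w\geq1$, we have $\delta=\frac{K^2}{d}\sum_{w\geq1}a_wh(w)$. Expanding $h$ over all $w$ and subtracting the $w=0$ term $h(0)$, the hypotheses $\sum_w a_wK_s^{(n,Q)}(w)=0$ for $1\leq s\leq m$ and $\sum_wa_wK_0^{(n,Q)}(w)=\sum_wa_w=d/K$ give
\[
 \delta=\frac{K^2}{d}\Bigl(\widehat h(0)\,\frac dK-h(0)\Bigr)=K\widehat h(0)-\frac{K^2}{d}h(0).
\]

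Subtracting this from the expression for $P$, the terms $K\widehat h(0)$ and $-\frac{K^2}{d}h(0)$ cancel exactly, leaving $K\sum_{s=1}^m\widehat h(s)B_s(P)$. This proves \eqref{eq:certificate-transfer}. The only delicate point is the bookkeeping at $w=0$. $W_{n,Q}$ and $r$ are defined only for $w\geq1$, so $h(0)$ must be introduced consistently on both sides. The cancellation relies on $A_0(P)=a_0=1$ and on the two vectors having the same mass $d/K$. No positivity is used; the identity is purely linear.
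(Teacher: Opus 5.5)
Your proof is correct and is essentially the paper's argument: split $W_{n,Q}=h+r$, use the first MacWilliams identity to write $\frac{K^2}{d}\sum_{w\geq1}h(w)A_w(P)=K\sum_{s=0}^m\widehat h(s)B_s(P)-\frac{K^2}{d}h(0)$, and apply the same computation to $(a_w)$ to get $\delta=K\widehat h(0)-\frac{K^2}{d}h(0)$. Subtracting and using $B_0(P)=1$ then finishes, and your handling of the $w=0$ term matches the paper's.
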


\begin{proof}
By the first identity in \eqref{eq:quantum-macwilliams},
$\sum_{w=0}^nA_w(P)K_s^{(n,Q)}(w)=(d/K)B_s(P)$.  Multiplying by
$K^2\widehat h(s)/d$, summing over $0\leq s\leq m$, and separating
$A_0=1$ gives
\[
 \frac{K^2}{d}\sum_{w=1}^nh(w)A_w(P)
 =K\sum_{s=0}^m\widehat h(s)B_s(P)-\frac{K^2}{d}h(0).
\]
The same computation for $(a_w)$, whose mass is $d/K$ and whose transform
coordinates $1,\ldots,m$ vanish, gives
$\frac{K^2}{d}\sum_{w=1}^nh(w)a_w=K\widehat h(0)-\frac{K^2}{d}h(0)$; since
$h=W_{n,Q}$ wherever $w\geq1$ and $a_w\neq0$, the left side is $\delta$.
Inserting $W_{n,Q}=h+r$ in \eqref{eq:D-spectral}, subtracting, and using
$B_0(P)=1$ proves \eqref{eq:certificate-transfer}.
\end{proof}

Complete monotonicity enters
through the positivity of the interpolation coefficients $\mu_j$ in
\eqref{eq:QMDS-mu} below, since $\mu_j=(-1)^j\nabla^jW_{n,Q}(n-j)$.

\subsection{The exact remainder at Singleton rank}

The first application of Lemma~\ref{lem:certificate-transfer} holds at the
formal Singleton parameters below, for all projectors, including nonadditive and nonstabilizer ones.  Let
$\Delta\geq3$, the range in which every quantum MDS code is pure
\cite{HuberGrassl2020QMDS} and radii $1\leq e$ with $2e<\Delta$ exist; the
Huber--Grassl results are used only to identify the attaining projectors, not
for the identity \eqref{eq:QMDS-exact-gap} or the bound
\eqref{eq:QMDS-stability}.  Put
\begin{equation}
 t\coloneqq\Delta-1,
 \qquad d_*\coloneqq n-t+1=n-\Delta+2,
 \qquad Q\coloneqq q^2,
 \label{eq:QMDS-t-dstar}
\end{equation}
and consider the formal quantum MDS parameters, fixed by the quantum
Singleton equality whether or not a code with them exists,
\begin{equation}
 ((n,K,\Delta))_q,
 \qquad \ell\coloneqq n-2t\geq0,
 \qquad K=q^\ell.
 \label{eq:QMDS-parameters}
\end{equation}
These formal parameters include $\ell=0$.

To define the benchmark at every prescribed parameter set, we use the formal
MDS enumerator determined by those parameters:
\begin{equation}
 \begin{aligned}
 A_0^*&\coloneqq1,\\
 A_w^*&\coloneqq0 \qquad(1\leq w<d_*),\\
 A_w^*&\coloneqq\binom nw(Q-1)
 \sum_{u=0}^{w-d_*}(-1)^u\binom{w-1}{u}Q^{w-d_*-u}
 \qquad(d_*\leq w\leq n).
 \end{aligned}
 \label{eq:QMDS-Astar}
\end{equation}
When no quantum MDS code exists, $A^*$ may have negative entries (at
$((6,4,3))_2$, $A_6^*=-3$).  Only its moments enter below, so the benchmark
defined from it is a lower bound but need not be the energy of a feasible
quasicode.

The binomial moments determine the low-weight MacWilliams coordinates
through the standard identity
\begin{equation}
 \binom{n-w}{j}=Q^{-j}\sum_{s=0}^j
 \binom{n-s}{n-j}K_s^{(n,Q)}(w).
 \label{eq:QMDS-binomial-Krawtchouk}
\end{equation}
The next lemma applies this relation to $A^*$.

\begin{lemma}[Formal MDS moments and quantum normalization]
\label{lem:QMDS-formal-moments}
The vector $A^*$ in \eqref{eq:QMDS-Astar} satisfies
\begin{equation}
 \sum_{w=0}^n\binom{n-w}{j}A_w^*
 =\binom njQ^{t-j}
 \qquad(0\leq j\leq t).
 \label{eq:QMDS-formal-moments}
\end{equation}
Consequently,
\begin{equation}
 \sum_{w=0}^nA_w^*=Q^t,
 \qquad
 \sum_{w=0}^nA_w^*K_s^{(n,Q)}(w)=0
 \quad(1\leq s\leq t).
 \label{eq:QMDS-formal-transform}
\end{equation}
If a quantum MDS projector $P_{\mathrm{MDS}}$ with parameters
\eqref{eq:QMDS-parameters} exists, then, by the purity theorem and enumerator
formula of Huber and Grassl, which hold for every (not necessarily additive)
quantum MDS code of distance at least $3$
\cite[Theorems~5 and~9]{HuberGrassl2020QMDS},
$A(P_{\mathrm{MDS}})=A^*$ and $B_s(P_{\mathrm{MDS}})=0$ for $1\leq s\leq t$.
\end{lemma}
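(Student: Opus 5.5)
The plan is to prove the moment identity \eqref{eq:QMDS-formal-moments} first, by a purely combinatorial computation with the explicit formula \eqref{eq:QMDS-Astar}. Then I will deduce \eqref{eq:QMDS-formal-transform} from it by a triangularity argument based on \eqref{eq:QMDS-binomial-Krawtchouk}. The final sentence about $P_{\mathrm{MDS}}$ is a direct citation of \cite[Theorems~5 and~9]{HuberGrassl2020QMDS}. Their enumerator formula, written in the Shor--Laflamme normalization \eqref{eq:A-def}, is exactly \eqref{eq:QMDS-Astar}, and purity gives $B_s=0$ for $1\leq s\leq t$; only a check of normalizations is needed there.

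For \eqref{eq:QMDS-formal-moments}, fix $0\leq j\leq t$. Since $\binom{n-w}{j}=0$ for $w>n-j$, the $w=0$ term contributes $\binom nj$ and the remaining terms have $d_*\leq w\leq n-j$. When $j=t$ this range is empty, because $n-t<d_*=n-t+1$, and the identity reads $\binom nt=\binom nt Q^0$.

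For $j\leq t-1$ I must show
\[
 \sum_{w=d_*}^{n-j}\binom{n-w}{j}A_w^*=\binom nj\bigl(Q^{t-j}-1\bigr).
\]
I would do this as a binomial inversion. The values $A_w^*$ for $w\geq d_*$ are, formally, the weight distribution of a $Q$-ary MDS code of length $n$ and size $Q^t$. For such a code the left side counts pairs (nonzero codeword, $j$-set of zero coordinates), which equals $\binom nj(Q^{t-j}-1)$.

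To avoid relying on existence, I would instead verify the identity algebraically. Substitute \eqref{eq:QMDS-Astar}, use $\binom{n-w}{j}\binom nw=\binom nj\binom{n-j}{w}$, and reindex by $m=w-d_*$. This reduces the claim to an identity for finite sums of binomial coefficients times powers of $Q$. I would compare both sides as polynomials in $Q$, or equivalently prove them via the generating function $\sum_w\binom{n-j}{w}z^w$ together with the standard Vandermonde-type identity $\sum_w(-1)^{w}\binom{N}{w}\binom{w}{k}=(-1)^k\mathbf 1_{N=k}$. This algebraic verification is the main obstacle. If the direct manipulation is messy, a fallback is to check that the triangular system "$\sum_w\binom{n-w}{j}x_w$ prescribed for $0\leq j\leq t$, $x_0=1$, $x_w=0$ for $1\leq w<d_*$" has a unique solution. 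Since $n-d_*+1=t$, the unknowns $x_{d_*},\dots,x_n$ are indexed exactly by $j=t-1,\dots,0$ through the triangular coefficients $\binom{n-w}{j}$. One then checks that \eqref{eq:QMDS-Astar} is that unique solution by the usual inclusion--exclusion inversion.

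For \eqref{eq:QMDS-formal-transform}, the case $j=0$ of \eqref{eq:QMDS-formal-moments} gives the mass $Q^t$. Put $S_s\coloneqq\sum_wA_w^*K_s^{(n,Q)}(w)$. Summing \eqref{eq:QMDS-binomial-Krawtchouk} against $A^*$ gives
\[
 \sum_w\binom{n-w}{j}A^*_w=Q^{-j}\sum_{s=0}^j\binom{n-s}{n-j}S_s,
\]
a lower-triangular system in $(S_0,\dots,S_t)$ with unit diagonal. The vector $S=(Q^t,0,\dots,0)$ satisfies it, because it yields $Q^{t-j}\binom{n}{n-j}=\binom njQ^{t-j}$. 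By triangularity this solution is unique, so $S_s=0$ for $1\leq s\leq t$.
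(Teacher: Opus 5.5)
Your outline follows the paper's proof. With $z=n-w$ you reduce \eqref{eq:QMDS-formal-moments} to $\sum_{z=j}^{t-1}\binom zjA^*_{n-z}=\binom nj(Q^{t-j}-1)$ for $0\le j<t$, and you observe that this triangular system has a unique solution. You then deduce \eqref{eq:QMDS-formal-transform} from \eqref{eq:QMDS-binomial-Krawtchouk} by triangularity. That deduction is correct, although the diagonal entries there are $Q^{-j}$, not $1$.

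\textbf{The gap.} The one substantive step of the lemma is never carried out: showing that the explicit vector \eqref{eq:QMDS-Astar} is that unique solution. You call it the main obstacle, and the fallback ends with ``one then checks''. The orthogonality relation you quote justifies the inversion. It does not, by itself, produce the shape of \eqref{eq:QMDS-Astar}.

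Concretely, inversion with $\binom jz\binom nj=\binom nz\binom{n-z}{j-z}$ gives the unique solution as
\[
 x_{n-z}=\binom nz\sum_{h=0}^{t-z-1}(-1)^h\binom{n-z}{h}\bigl(Q^{t-z-h}-1\bigr),\qquad 0\le z\le t-1 .
\]
By contrast, \eqref{eq:QMDS-Astar} has the factor $Q-1$ pulled out and $\binom{w-1}{u}$ in place of $\binom wh$.

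\textbf{The missing idea} is the conversion between these two forms, which is how the paper proceeds:
\begin{enumerate}
\item Write $Q^{t-z-h}-1=(Q-1)\sum_{i=0}^{t-z-h-1}Q^i$.
\item Collect the coefficient of $Q^{t-1-z-u}$; it receives exactly the terms with $h\le u$.
\item Apply the partial alternating sum $\sum_{h=0}^{u}(-1)^h\binom wh=(-1)^u\binom{w-1}{u}$ with $w=n-z$.
\end{enumerate}

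Your direct-substitution route lacks the same conversion. After $\binom{n-w}{j}\binom nw=\binom nj\binom{n-j}{w}$, the claim becomes the mass identity for the formal MDS distribution of length $n-j$ with the same $d_*$. That identity follows from $\sum_h(-1)^h\binom{N-m}{h}=\mathbf 1_{m=N}$ only once the formula has been put in the $\binom wh(Q^{\cdot}-1)$ form.

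\textbf{An alternative closure.} Your coding-theoretic count can be made rigorous without assuming existence at the given $Q$. Both sides are polynomials in $Q$, and Reed--Solomon codes of length $n$ and size $Q^t$ exist for every prime power $Q\ge n$, so the identity holds identically. This route, however, imports the classical MDS weight-distribution formula, whose standard proof is the same inversion.
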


The binomial inversion and the normalization comparison are given in
Appendix~\ref{app:enumerators}.

Define the associated formal benchmark by
\begin{equation}
 \delta^{\mathrm{QMDS}}_{n,q,\Delta}
 \coloneqq\frac{K^2}{q^n}\sum_{w=d_*}^nA_w^*W_{n,Q}(w).
 \label{eq:QMDS-benchmark}
\end{equation}

Abbreviate $W(w)\coloneqq W_{n,Q}(w)$.  For $0\leq j\leq n-1$, let
\begin{equation}
 \mu_j\coloneqq\sum_{u=0}^j(-1)^u\binom ju W(n-j+u)>0.
 \label{eq:QMDS-mu}
\end{equation}
Define the forbidden-coordinate defect, which collects the $A$-coordinates
below the formal dual distance $d_*$, where $A^*$ vanishes, and the
$B$-coordinates below $\Delta$, which purity forbids:
\begin{align}
 \operatorname{Def}_\Delta(P)
 &\coloneqq\frac{K^2}{q^n}\sum_{w=1}^{d_*-1}A_w(P)
   +\frac K{Q^t}\sum_{s=1}^tB_s(P)
 \label{eq:QMDS-operator-defect}\\
 &=\frac1{q^n}\sum_{1\leq\wt(a)<d_*}|\Tr(PU_a)|^2
   \notag\\
 &\quad+\frac1{Q^t}\sum_{1\leq\wt(a)\leq t}
      \|PU_aP\|_{\HS}^2.\notag
\end{align}

\begin{mainresult}[Preview of the exact quantum MDS theorem]
At the formal Singleton parameters \eqref{eq:QMDS-parameters},
Theorem~\ref{thm:QMDS} proves
\[
 \cD_Q(P)-\delta^{\mathrm{QMDS}}_{n,q,\Delta}
 \geq\frac{\mu_t}{2}\,\operatorname{Def}_\Delta(P).
\]
It also gives the full nonnegative remainder.  The formal benchmark is
attained precisely when $P$ defines a pure code of distance at least $\Delta$.
\end{mainresult}

The remainder must detect both low-weight trace and overlap coordinates.
Two consecutive MDS interpolants provide complementary terms: one detects
the last forbidden trace layer, and the other detects the last forbidden
overlap layer.  Their average will control both.  We write $R_m$ for the
trace-coordinate residual and $\widehat h_m$ for the overlap coefficients.
For $m\in\{t-1,t\}$, define
\begin{align}
 h_m(w)&\coloneqq\sum_{j=0}^m\mu_j\binom{n-w}{j},
 \notag\\
 R_m(w)&\coloneqq W(w)-h_m(w),
 \label{eq:QMDS-h-R}\\
 \widehat h_m(s)&\coloneqq\sum_{j=s}^m\mu_jQ^{-j}
 \binom{n-s}{n-j},
 &&0\leq s\leq m,
 \label{eq:QMDS-hhat}
\end{align}
with $\widehat h_m(s)\coloneqq0$ when $s>m$.  Finally set
\begin{equation}
 \overline R\coloneqq\frac{R_{t-1}+R_t}{2},
 \qquad
 \overline h_s\coloneqq\frac{\widehat h_{t-1}(s)+\widehat h_t(s)}2.
 \label{eq:QMDS-averages}
\end{equation}

These are the two consecutive MDS interpolation polynomials of Cohn and Zhao
\cite[Lemma~16 and Proposition~21]{CohnZhao2014Energy}, written for the
potential $W$.

\begin{theorem}[Exact quantum MDS remainder]
\label{thm:QMDS}
Under the formal parameter assumptions
\eqref{eq:QMDS-t-dstar}--\eqref{eq:QMDS-parameters}, every rank-$K$ projector
$P$ satisfies the exact identity
\begin{equation}
 \begin{aligned}
 \cD_Q(P)-\delta^{\mathrm{QMDS}}_{n,q,\Delta}
 &=\frac{K^2}{q^n}\sum_{w=1}^{d_*-1}\overline R(w)A_w(P)
 +K\sum_{s=1}^t\overline h_sB_s(P).
 \end{aligned}
 \label{eq:QMDS-exact-gap}
\end{equation}
The coefficients obey
\begin{equation}
 \min_{1\leq w<d_*}\overline R(w)=\frac{\mu_t}{2},
 \qquad
 \min_{1\leq s\leq t}\overline h_s
 =\frac{\mu_t}{2Q^t},
 \label{eq:QMDS-coefficient-minima}
\end{equation}
and hence
\begin{equation}
 \cD_Q(P)-\delta^{\mathrm{QMDS}}_{n,q,\Delta}
 \geq\frac{\mu_t}{2}\operatorname{Def}_\Delta(P)\geq0.
 \label{eq:QMDS-stability}
\end{equation}
Equality with the formal benchmark,
$\cD_Q(P)=\delta^{\mathrm{QMDS}}_{n,q,\Delta}$, holds if and only if
\begin{equation}
 B_s(P)=0\qquad(1\leq s<\Delta).
 \label{eq:QMDS-equality}
\end{equation}
Thus, using \cite[Section~6]{HuberGrassl2020QMDS}, the formal benchmark is
attained precisely by the pure quantum MDS projectors with these parameters.  If no such projector exists, then
$\cD_Q(P)>\delta^{\mathrm{QMDS}}_{n,q,\Delta}$ for every rank-$K$ projector.
\end{theorem}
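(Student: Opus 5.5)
The plan is to apply Lemma~\ref{lem:certificate-transfer} twice, with $m=t-1$ and $m=t$, and then average. For each $m\in\{t-1,t\}$ take $h=h_m$. By \eqref{eq:QMDS-binomial-Krawtchouk},
\[
 h_m(w)=\sum_{j\le m}\mu_j\binom{n-w}{j}=\sum_{s\le m}\widehat h_m(s)K_s^{(n,Q)}(w),
\]
so $h_m$ has exactly the Krawtchouk coefficients \eqref{eq:QMDS-hhat}. As the vector $a$ take $A^*$. Lemma~\ref{lem:QMDS-formal-moments} gives $a_0=1$ and mass $Q^t$, which equals $d/K$ because $q^n/q^{n-2t}=Q^t$. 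It also gives vanishing transform coordinates for $1\le s\le t\supseteq\{1,\dots,m\}$.

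The remaining hypothesis is $R_m(w)=0$ for $w\ge d_*$. This is the Newton forward-difference interpolation of $W$ at the nodes $n,n-1,\dots,n-m$. Writing $w=n-k$, one has $h_m(n-k)=\sum_{j\le m}\binom kj\mu_j$, and $\mu_j=(-1)^j\nabla^jW(n-j)$ is the $j$-th backward difference of $W$ at $n$. Hence $h_m$ agrees with $W$ at $k=0,\dots,m$, that is, at $w\in[n-m,n]$.

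For $m=t$ this range is $[d_*-1,n]$, and for $m=t-1$ it is exactly $[d_*,n]$; both contain the support of $A^*$ in positive weight. The lemma then yields two exact identities with the same $\delta=\delta^{\mathrm{QMDS}}$. Averaging them gives \eqref{eq:QMDS-exact-gap}. The sum over $w$ may be restricted to $w<d_*$, since $\overline R=0$ for $w\ge d_*$; at $w=d_*-1$ only $R_{t-1}$ survives. The sum over $s$ stops at $t$ because $\widehat h_{t-1}(t)=0$.

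Next come the coefficient minima. For $\overline h_s$, every $\mu_j>0$, so $\widehat h_m(s)\ge\mu_mQ^{-m}\binom{n-s}{n-m}\ge\mu_mQ^{-m}$, with further positive terms for $s<m$. At $s=t$ we get $\overline h_t=\tfrac12\mu_tQ^{-t}$ exactly. For $s<t$ I must show $\overline h_s\ge\mu_t/(2Q^t)$, and this follows because already $\widehat h_t(s)\ge\mu_tQ^{-t}\binom{n-s}{n-t}\ge\mu_tQ^{-t}$.

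For $\overline R$ I use the Newton remainder. With $k=n-w$, the remainder formula for the forward differences of $W(n-k)$ in $k$ gives
\[
 R_m(n-k)=\sum_{j>m}\binom kj\,(-1)^j\nabla^jW(n-j),
\]
an exact finite sum over $j\le k$ that uses the complete monotonicity from \eqref{eq:W-complete-monotone} on the needed range. Thus $R_m(n-k)=\sum_{j=m+1}^{k}\binom kj\mu_j$, a positive combination.

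At $w=d_*-1$ (that is, $k=t$) this gives $R_{t-1}=\mu_t$ and $R_t=0$, so $\overline R=\mu_t/2$. For $k>t$ both remainders contain the term $\binom kt\mu_t$ or larger terms, so $\overline R\ge\mu_t/2$. I expect the main obstacle here to be checking that the range condition $w\ge1$, i.e.\ $k\le n-1$, keeps every $\mu_j$ with $j\le n-1$ positive, as \eqref{eq:QMDS-mu} asserts; if it does not, I would fall back to bounding $R_{t-1}\ge\binom kt\mu_t\ge\mu_t$ directly.

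The inequality \eqref{eq:QMDS-stability} now follows by inserting the minima into \eqref{eq:QMDS-exact-gap} and comparing with \eqref{eq:QMDS-operator-defect}, using $A_w,B_s\ge0$ from Proposition~\ref{prop:macwilliams}.

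Equality requires all $A_w=0$ for $1\le w<d_*$ and all $B_s=0$ for $s\le t$. Conversely, $B_s=0$ forces $A_s=0$ because $0\le A_s\le B_s$. The $A_w$ for $t<w<d_*$ vanish as well: $B_s=0$ for $1\le s\le t$ means $A(P)$ has the same first $t$ transform moments as $A^*$, hence the same binomial moments \eqref{eq:QMDS-formal-moments}. Rather than argue uniqueness from this, the cleaner route is to note that the $B$-term in \eqref{eq:QMDS-exact-gap} vanishes, and then apply the $m=t$ identity alone. That identity is valid with its own nonnegative coefficients, and it shows the gap equals $\tfrac{K^2}{d}\sum_w R_t(w)A_w$; averaging against the $m=t-1$ identity pins this to zero. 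So I would check equality by observing that $B_s=0$ for $s\le t$ makes $A(P)$ satisfy the hypotheses of Lemma~\ref{lem:certificate-transfer} in place of $a$ for both $m$. Its energy is then $\delta$ only if its support avoids the positive-$R$ weights, which I expect to follow from the MDS moment system determining the enumerator once weights below $d_*$ vanish. The final sentences follow from Lemma~\ref{lem:QMDS-formal-moments} and Huber--Grassl purity.
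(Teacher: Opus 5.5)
Your derivation of \eqref{eq:QMDS-exact-gap}, of the coefficient minima \eqref{eq:QMDS-coefficient-minima} and of \eqref{eq:QMDS-stability} follows the paper's own proof. You apply Lemma~\ref{lem:certificate-transfer} twice with $a=A^*$, at $m=t-1$ and $m=t$, use Newton interpolation of $W$ at the endpoint $n$, and then average. Your reading of $\mu_j$ as the $j$-th forward difference of $k\mapsto W(n-k)$ at $k=0$ is exactly \eqref{eq:QMDS-Newton}. One wording slip: for $w<d_*-1$ only $R_{t-1}$ contains the term $\binom{n-w}{t}\mu_t$, because $R_t$ starts at $j=t+1$. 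Still, $R_{t-1}(w)\geq\mu_t$ and $R_t(w)\geq0$ already give $\overline R(w)\geq\mu_t/2$.

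\textbf{The gap.} The gap is in the converse of the equality statement: $B_s(P)=0$ for $1\leq s\leq t$ implies $\cD_Q(P)=\delta^{\mathrm{QMDS}}_{n,q,\Delta}$. For this you must show $A_w(P)=0$ for all $1\leq w<d_*$, and the inequality $A_s\leq B_s$ only covers $w\leq t$. Your closing argument says that $A(P)$ satisfies the hypotheses of Lemma~\ref{lem:certificate-transfer} and that its support avoids the weights where $R$ is positive ``once weights below $d_*$ vanish''. That assumes exactly what is to be proved. Also, averaging the two identities cannot force any term to vanish.

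\textbf{The fix.} The observation you made and then set aside closes the step in one line. For $A(P)$, the case $j=t$ of \eqref{eq:QMDS-formal-moments} reads $\sum_{w=0}^n\binom{n-w}{t}A_w(P)=\binom nt$. The $w=0$ term alone equals $\binom nt$, so $\sum_{w\geq1}\binom{n-w}{t}A_w(P)=0$. Since $\binom{n-w}{t}\geq1$ for $w\leq n-t=d_*-1$ and $A_w(P)\geq0$, every $A_w(P)$ with $1\leq w<d_*$ vanishes, and hence every term of \eqref{eq:QMDS-exact-gap} vanishes. Equivalently, \emph{subtract} rather than average the two identities from Lemma~\ref{lem:certificate-transfer} for $m=t$ and $m=t-1$. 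Once the $B$-terms are zero, $R_{t-1}-R_t=\mu_t\binom{n-w}{t}$ gives $\mu_t\sum_{w\geq1}\binom{n-w}{t}A_w(P)=0$.

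\textbf{Comparison with the paper.} The paper closes this step differently. $B_s(P)=0$ makes $P$ a pure code of distance at least $\Delta$, the quantum Singleton bound makes it quantum MDS, and the Huber--Grassl enumerator formula then gives $A(P)=A^*$. The moment argument above is self-contained. It shows that the equivalence with \eqref{eq:QMDS-equality} needs no enumerator input from Huber and Grassl; their results enter only when naming the attaining projectors as pure quantum MDS codes.
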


\begin{proof}
The strict complete monotonicity in \eqref{eq:W-complete-monotone} and Newton's
interpolation formula at the endpoint $n$ give
\begin{equation}
 \begin{aligned}
 W(w)&=\sum_{j=0}^{n-w}\mu_j\binom{n-w}{j},\\
 R_m(w)&=\sum_{j=m+1}^{n-w}\mu_j\binom{n-w}{j},
 \end{aligned}
 \label{eq:QMDS-Newton}
\end{equation}
for $1\leq w\leq n$, with an empty sum interpreted as zero.  In particular,
$R_{t-1}$ vanishes for $w\geq d_*$, whereas $R_t$ vanishes already for
$w\geq d_*-1$.  Equation~\eqref{eq:QMDS-binomial-Krawtchouk} shows that
$h_m=\sum_{s=0}^m\widehat h_m(s)K_s^{(n,Q)}$, with every coefficient positive.

To isolate the discrepancy excess, fix $m\in\{t-1,t\}$ and apply
Lemma~\ref{lem:certificate-transfer} with $\widehat h=\widehat h_m$,
$h=h_m$, $r=R_m$ and $a=A^*$.  Its hypotheses hold: by
Lemma~\ref{lem:QMDS-formal-moments}, $A^*$ has mass $Q^t=q^n/K$, is supported
on $\{0,d_*,\ldots,n\}$, and its transform coordinates $1\leq s\leq t$
vanish, while $R_m(w)=0$ for $w\geq d_*$ by \eqref{eq:QMDS-Newton}.  By
\eqref{eq:QMDS-benchmark}, the resulting benchmark is
$\delta^{\mathrm{QMDS}}_{n,q,\Delta}$, so
\begin{equation}
 \begin{aligned}
 \cD_Q(P)-\delta^{\mathrm{QMDS}}_{n,q,\Delta}
 &=\frac{K^2}{q^n}\sum_{w=1}^nR_m(w)A_w(P)\\
 &\quad+K\sum_{s=1}^m\widehat h_m(s)B_s(P).
 \end{aligned}
 \label{eq:QMDS-two-gaps}
\end{equation}
Averaging the two identities proves \eqref{eq:QMDS-exact-gap}.

We now read the stability constant $\mu_t/2$ from the two boundary coefficients.
At $w=d_*-1=n-t$ one has
$R_{t-1}(w)=\mu_t$ and $R_t(w)=0$.  At every smaller $w$, the second expression
in \eqref{eq:QMDS-Newton} contains this positive contribution and additional
nonnegative terms.  Hence the first minimum in
\eqref{eq:QMDS-coefficient-minima} is $\mu_t/2$.  Similarly,
$\overline h_t=\mu_t/(2Q^t)$, while for $s<t$ the $j=t$ summand in
$\widehat h_t(s)$, together with the other nonnegative summands, gives a strictly
larger value.  This proves the second minimum and
\eqref{eq:QMDS-stability}; the second line of
\eqref{eq:QMDS-operator-defect} follows directly from
\eqref{eq:A-def} and \eqref{eq:B-def}.

It remains to identify the equality projectors.  All terms in
\eqref{eq:QMDS-exact-gap} are nonnegative, so benchmark equality
implies
$B_s(P)=0$ for $1\leq s\leq t$.  Conversely, these equations say that every
nonidentity Pauli error of weight below $\Delta$ has zero compression to the
code space.  Thus $P$ is a pure code of distance at least $\Delta$.  Its rank
$K=q^{n-2\Delta+2}$ meets the quantum Singleton bound
\cite[Section~6]{HuberGrassl2020QMDS}, so its distance is exactly $\Delta$ and
it is quantum MDS.  The Huber--Grassl enumerator formula
then gives $A(P)=A^*$, and all terms in \eqref{eq:QMDS-exact-gap} vanish.
\end{proof}

\subsection{Error-image overlaps and subsystem purities}

The two enumerator sums in Theorem~\ref{thm:QMDS} have direct geometric
meanings.  They control both the mutual position of Pauli error images
of radius $e$ with $2e<\Delta$ and the mixedness of small subsystem reductions.

Under the assumptions of Theorem~\ref{thm:QMDS}, fix a rank-$K$ projector
$P$ and an isometry $J:\mathbb C^K\to\Hphys$ with $JJ^\dagger=P$.
For every integer $e\geq1$
with $2e<\Delta$, define
\begin{equation}
 \begin{aligned}
 Y_e&:\mathbb C^{V_e}\otimes\mathbb C^K\longrightarrow\Hphys,\\
 Y_e(|a\rangle\otimes x)&\coloneqq U_aJx,\qquad a\in\cB_e,
 \end{aligned}
 \label{eq:QMDS-Ye}
\end{equation}
where the standard basis of $\mathbb C^{V_e}$ is indexed by $\cB_e$, and put
\begin{equation}
 C_e^{\mathrm{Gram}}
 \coloneqq\max_{1\leq w\leq2e}\frac{c_e(w)}{\widehat h_t(w)}.
 \label{eq:QMDS-exact-Gram-constant}
\end{equation}
Here $c_e(w)$ is the intersection size of two radius-$e$ Hamming balls whose
centers are at distance $w$, as in Proposition~\ref{prop:fusion}.

Put $\rho\coloneqq P/K$ and, for $S\subseteq[n]$, let
$\rho_S\coloneqq\Tr_{S^c}\rho$ and let $I_S$ denote the identity on
$(\mathbb C^q)^{\otimes S}$.  The following exact identities are the bridges
from the two enumerator sums in Theorem~\ref{thm:QMDS} to the geometric
quantities used below:
\begin{align}
 \|Y_e^\dagger Y_e-I_{KV_e}\|_{\HS}^2
 &=K\sum_{w=1}^{2e}c_e(w)B_w(P),
 \label{eq:QMDS-Gram-bridge}\\
 \sum_{\substack{S\subseteq[n]\\|S|=s}}
 \|\rho_S-q^{-s}I_S\|_{\HS}^2
 &=q^{-s}\sum_{w=1}^s\binom{n-w}{s-w}A_w(P).
 \label{eq:QMDS-marginal-bridge}
\end{align}
The proof of Corollary~\ref{cor:QMDS-rigidity} below derives these
identities and combines them with the exact identities
\eqref{eq:QMDS-two-gaps}.

\begin{corollary}[Higher-radius Gram and marginal rigidity]
\label{cor:QMDS-rigidity}
For every integer $e\geq1$ with $2e<\Delta$, the error-image map $Y_e$
satisfies
\begin{equation}
 \begin{aligned}
 \|Y_e^\dagger Y_e-I_{KV_e}\|_{\HS}^2
 &\leq C_e^{\mathrm{Gram}}
 \bigl(\cD_Q(P)-\delta^{\mathrm{QMDS}}_{n,q,\Delta}\bigr)\\
 &\leq\frac{V_eq^n}{K\mu_t}
 \bigl(\cD_Q(P)-\delta^{\mathrm{QMDS}}_{n,q,\Delta}\bigr).
 \end{aligned}
 \label{eq:QMDS-Gram-rigidity}
\end{equation}
Moreover, for $1\leq s<d_*$,
\begin{equation}
 \begin{aligned}
 \sum_{|S|=s}\|\rho_S-q^{-s}I_S\|_{\HS}^2
 &\leq\frac{q^{n-s}}{K^2\mu_t}\binom{n-1}{s-1}\binom{n-1}{t}^{-1}
 \bigl(\cD_Q(P)-\delta^{\mathrm{QMDS}}_{n,q,\Delta}\bigr).
 \end{aligned}
 \label{eq:QMDS-marginal-rigidity}
\end{equation}
\end{corollary}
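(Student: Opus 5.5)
The plan is to first derive the two bridge identities \eqref{eq:QMDS-Gram-bridge} and \eqref{eq:QMDS-marginal-bridge}, and then bound their right sides using the single exact identity \eqref{eq:QMDS-two-gaps} with $m=t$. The averaged identity \eqref{eq:QMDS-exact-gap} is not needed here. With $m=t$ every term of \eqref{eq:QMDS-two-gaps} is nonnegative: $R_t\geq0$ by \eqref{eq:QMDS-Newton}, and $\widehat h_t(s)>0$ for $s\leq t$. Hence each single term is at most the excess $g(P)\coloneqq\cD_Q(P)-\delta^{\mathrm{QMDS}}_{n,q,\Delta}$, and so is every nonnegative subcombination of terms.

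\textbf{Gram bridge and Gram bound.}
\begin{itemize}
\item The Gram matrix $Y_e^\dagger Y_e$ has blocks $JU_a^\dagger U_bJ$. Its diagonal blocks are $I_K$, so its trace is $KV_e$, and $\|Y_e^\dagger Y_e\|_{\HS}^2=\Tr T_e(P)^2$.
\item Hence $\|Y_e^\dagger Y_e-I\|_{\HS}^2=\Tr T_e(P)^2-KV_e$. By \eqref{eq:overlap-form}, with $c_e(0)=V_e$ and $B_0=1$, this equals $K\sum_{w\geq1}c_e(w)B_w(P)$.
\item Two radius-$e$ balls with centers at distance $w>2e$ are disjoint, so $c_e(w)=0$ there. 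This gives \eqref{eq:QMDS-Gram-bridge}.
\item Since $2e<\Delta$, we have $2e\leq t$. Therefore $\widehat h_t(w)>0$ for every $w\leq 2e$, and $c_e(w)\leq C_e^{\mathrm{Gram}}\widehat h_t(w)$.
\item It follows that $K\sum_{w\leq 2e}c_e(w)B_w\leq C_e^{\mathrm{Gram}}\,K\sum_{s\leq t}\widehat h_t(s)B_s\leq C_e^{\mathrm{Gram}}g(P)$.
\end{itemize}

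For the second inequality in \eqref{eq:QMDS-Gram-rigidity}, use $c_e(w)\leq V_e$ together with the lower bound $\widehat h_t(w)\geq\mu_tQ^{-t}$ for $w\leq t$. The lower bound comes from the $j=t$ summand $\mu_tQ^{-t}\binom{n-w}{n-t}\geq\mu_tQ^{-t}$. Finally, $Q^t=q^n/K$ by Lemma~\ref{lem:QMDS-formal-moments}, which gives the constant $V_eq^n/(K\mu_t)$.

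\textbf{Marginal bridge.}
\begin{itemize}
\item Expand $\rho$ in the Weyl basis: $\rho_S=q^{-s}\sum_{\operatorname{supp}(a)\subseteq S}K^{-1}\Tr(PU_a^\dagger)\,U_a|_S$.
\item By Hilbert--Schmidt orthogonality on $S$, $\|\rho_S-q^{-s}I_S\|_{\HS}^2=q^{-s}K^{-2}\sum_{\emptyset\neq\operatorname{supp}(a)\subseteq S}|\Tr(PU_a)|^2$.
\item Summing over $|S|=s$, a fixed $a$ of weight $w$ is counted $\binom{n-w}{s-w}$ times. This gives \eqref{eq:QMDS-marginal-bridge}.
\end{itemize}

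\textbf{Marginal bound.} For $s<d_*$, only weights $w\leq s\leq d_*-1=n-t$ occur. The $A$-part of \eqref{eq:QMDS-two-gaps} with $m=t$ bounds $\frac{K^2}{q^n}\sum_w R_t(w)A_w\leq g(P)$. It then suffices to show
\[
\binom{n-w}{s-w}\leq\binom{n-1}{s-1}\binom{n-1}{t}^{-1}\frac{R_t(w)}{\mu_t}\qquad(1\leq w\leq s).
\]

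This comparison is the main obstacle. At $w=d_*-1$ we have $R_t(w)=0$, yet $\binom{n-w}{s-w}$ can be positive when $s=d_*-1$. So the $m=t$ identity alone fails at the last layer, and there I would instead use \eqref{eq:QMDS-two-gaps} with $m=t-1$, or the averaged coefficient $\overline R\geq\mu_t/2$.

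I would first try the simple pointwise route, though it may only give a constant weaker by a factor $2$:
\begin{itemize}
\item Take $m=t-1$. Then $R_{t-1}(w)\geq\mu_t\binom{n-w}{t}$, since all $\mu_j$ are positive.
\item Check $\binom{n-w}{s-w}\binom{n-1}{t}\leq\binom{n-1}{s-1}\binom{n-w}{t}$ for $1\leq w\leq s\leq n-t$. The $w=1$ case is equality.
\item The ratio of the two sides is monotone in $w$, which can be verified by comparing consecutive values of $w$.
\end{itemize}

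If that inequality fails, I would fall back on using the full Newton tail $\sum_{j>t-1}\mu_j\binom{n-w}{j}$ rather than only its $j=t$ term.
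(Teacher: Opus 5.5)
Your proposal is correct and follows the paper's proof. The Gram bound comes from the bridge identity and the $m=t$ case of \eqref{eq:QMDS-two-gaps}; the marginal bound uses the $m=t-1$ case, $R_{t-1}(w)\geq\mu_t\binom{n-w}{t}$, and the fact that $\binom{n-w}{s-w}/\binom{n-w}{t}$ changes by the factor $(s-w)/(n-t-w)\leq1$ from $w$ to $w+1$. Your worry about losing a factor $2$ is unfounded, since that loss arises only from the averaged $\overline R$, and the single $m=t-1$ identity gives exactly the stated constant (also, the Gram blocks are $J^\dagger U_a^\dagger U_bJ$, not $JU_a^\dagger U_bJ$).
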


\begin{proof}
Expanding the Gram matrix and grouping ordered error pairs by the weight of
their difference proves \eqref{eq:QMDS-Gram-bridge}.  The definition of
$C_e^{\mathrm{Gram}}$ and the case $m=t$ of \eqref{eq:QMDS-two-gaps}, whose
terms are nonnegative, then give
\[
 \|Y_e^\dagger Y_e-I_{KV_e}\|_{\HS}^2
 \leq C_e^{\mathrm{Gram}}
 \bigl(\cD_Q(P)-\delta^{\mathrm{QMDS}}_{n,q,\Delta}\bigr).
\]
Since $c_e(w)\leq V_e$ and the $j=t$ summand in \eqref{eq:QMDS-hhat} gives
$\widehat h_t(w)\geq\mu_t/Q^t$, one has
\[
 C_e^{\mathrm{Gram}}\leq\frac{V_eQ^t}{\mu_t}
 =\frac{V_eq^n}{K\mu_t}.
\]
This proves \eqref{eq:QMDS-Gram-rigidity}.  Pauli Parseval on each subsystem,
followed by counting the $s$-subsets containing a fixed support, proves
\eqref{eq:QMDS-marginal-bridge}.  For $1\leq w<s$, the ratio
$\binom{n-w}{s-w}/\binom{n-w}{t}$ changes by the factor
$(s-w)/(n-t-w)\leq1$ from $w$ to $w+1$.  This comparison, the case $m=t-1$ of
\eqref{eq:QMDS-two-gaps} and $R_{t-1}(w)\geq\mu_t\binom{n-w}{t}$ from
\eqref{eq:QMDS-Newton} prove \eqref{eq:QMDS-marginal-rigidity}.
\end{proof}

The coefficientwise comparison can be much sharper than its uniform upper
bound.  At the $((5,2,3))_2$ parameters one has
$\mu_2=96$, $\widehat h_2(1)=34$, $\widehat h_2(2)=6$ and
$c_1(1)=4,c_1(2)=2$, so
\[
 C_1^{\mathrm{Gram}}=\frac13,
\]
whereas the uniform upper bound in \eqref{eq:QMDS-Gram-rigidity} is $8/3$.
At the $((6,1,4))_2$ parameters the value of $C_1^{\mathrm{Gram}}$ is $1/30$.
Each rigidity bound uses the single identity in \eqref{eq:QMDS-two-gaps} that
retains its layer, $m=t$ for the Gram bound and $m=t-1$ for the marginal
bound; since $\widehat h_t(w)\geq\overline h_w$, the Gram constant is at most
the one obtained from the averaged certificate
\eqref{eq:QMDS-averages}.  Only \eqref{eq:QMDS-stability}, which
needs both layers, uses the average: among weighted averages of the two
identities \eqref{eq:QMDS-two-gaps}, equal weights maximize its constant
(weights $\lambda,1-\lambda$ give $\min\{\lambda,1-\lambda\}\mu_t$).

\subsection{Polar recovery from the discrepancy excess}
\label{sec:qmds-recovery}

Gram control says that the error images are almost mutually orthogonal.
To construct a recovery, we replace their joint map by its polar isometry and
decode by its adjoint.  The same construction serves the one-error regime of
Section~\ref{sec:rigidity}, so we state it first for an arbitrary set
$\cE$ of Weyl labels and then specialize to $\cE=\cB_e$.

For errors in the span of the $U_a$, a polar recovery produces a channel
whose Kraus operators are linear combinations of blocks of a positive square
root: if $Y=\widetilde YG^{1/2}$ with $\widetilde Y$ an isometry, then
$\widetilde Y^\dagger Y(|a\rangle\otimes x)=\sum_b|b\rangle\otimes(G^{1/2})_{ba}x$,
with $K$-by-$K$ blocks $(G^{1/2})_{ba}$.
That square root is Hermitian, so trace preservation
turns the first-order terms into the quadratic term
$-\frac12(\Theta X+X\Theta)$ of \eqref{eq:polar-quadratic-identity} below,
and the channel error is quadratic in the deviation of $G^{1/2}$ from the
identity.  The following elementary estimate makes this explicit.
Here $\id_K$ denotes the identity channel on operators over $\mathbb C^K$,
and $\|\cdot\|_\diamond$ denotes the unhalved diamond norm.

\begin{lemma}[Quadratic error of a polar recovery]
\label{lem:polar-linear-diamond}
Let $G\succeq0$ act on $\mathbb C^v\otimes\mathbb C^K$, put
$S\coloneqq\sqrt G$ and $T\coloneqq S-I_{vK}$, and let
$c_1,\ldots,c_m\in\mathbb C^v$ satisfy
\[
 \sum_{j=1}^mC_j^\dagger C_j=I_K=\sum_{j=1}^mC_j^\dagger GC_j,
 \qquad C_j\coloneqq c_j\otimes I_K .
\]
Define the channel
\[
 \Lambda(X)\coloneqq\sum_{j=1}^m\sum_{b=1}^vK_{jb}XK_{jb}^\dagger,
 \qquad K_{jb}\coloneqq(\langle b|\otimes I_K)SC_j,
\]
put $T_{jb}\coloneqq(\langle b|\otimes I_K)TC_j$, and
\[
 \Theta\coloneqq\sum_{j=1}^mC_j^\dagger T^2C_j,
 \qquad
 \mathcal A(X)\coloneqq\sum_{j=1}^m\sum_{b=1}^vT_{jb}XT_{jb}^\dagger.
\]
Then
\begin{equation}
 \Lambda(X)-X=\mathcal A(X)-\frac12(\Theta X+X\Theta),
 \label{eq:polar-quadratic-identity}
\end{equation}
and
\begin{equation}
 \|\Lambda-\id_K\|_\diamond
 \leq2\min\{1,\|\Theta\|_{\mathrm{op}}\}
 \leq2\min\{1,\Tr\Theta\}.
 \label{eq:polar-linear-diamond}
\end{equation}
\end{lemma}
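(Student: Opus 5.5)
Write $S=I+T$, with $T$ Hermitian because $S=\sqrt G$ is.  Then $K_{jb}=(\langle b|\otimes I_K)C_j+T_{jb}=c_j(b)I_K+T_{jb}$, where $c_j(b)$ is the $b$th coordinate of $c_j$.  We expand $\Lambda(X)$ in this decomposition and obtain three kinds of terms.

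\emph{Zeroth order.}  The terms without $T$ sum to $\sum_j\|c_j\|^2X=\sum_jC_j^\dagger C_jX=X$.

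\emph{Second order.}  The terms quadratic in $T$ are exactly $\mathcal A(X)$.

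\emph{First order.}  The cross terms are $\sum_{j,b}\bigl(\overline{c_j(b)}\,T_{jb}X+c_j(b)\,XT_{jb}^\dagger\bigr)$.  Since $\sum_b\overline{c_j(b)}T_{jb}=C_j^\dagger TC_j$, they equal $MX+XM^\dagger$ with $M\coloneqq\sum_jC_j^\dagger TC_j$.  Hermiticity of $T$ gives $M=M^\dagger$, so the cross terms are $MX+XM$.

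To eliminate $M$, use the second normalization:
\[
I_K=\sum_jC_j^\dagger(I+T)^2C_j=I_K+2M+\Theta .
\]
Hence $M=-\Theta/2$, and the three contributions together give \eqref{eq:polar-quadratic-identity}.

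For the diamond bound, $\Lambda$ is a channel.  Its Kraus operators satisfy $\sum K_{jb}^\dagger K_{jb}=\sum_jC_j^\dagger GC_j=I_K$, so the trivial bound $\|\Lambda-\id_K\|_\diamond\le2$ holds.

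For the bound by $2\|\Theta\|_{\mathrm{op}}$, treat the two parts of \eqref{eq:polar-quadratic-identity} separately.
\begin{itemize}
\item $\mathcal A$ is completely positive with $\mathcal A^\dagger(I)=\sum T_{jb}^\dagger T_{jb}=\sum_jC_j^\dagger T^2C_j=\Theta$, using $\sum_b|b\rangle\langle b|=I$.  A completely positive map satisfies $\|\mathcal A\|_\diamond=\|\mathcal A^\dagger(I)\|_{\mathrm{op}}$, so $\|\mathcal A\|_\diamond=\|\Theta\|_{\mathrm{op}}$.
\item For the map $X\mapsto\frac12(\Theta X+X\Theta)$, tensor with an identity on an ancilla and apply Hölder: $\|\Theta\otimes I\,Y\|_1\le\|\Theta\|_{\mathrm{op}}\|Y\|_1$.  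Its diamond norm is therefore at most $\|\Theta\|_{\mathrm{op}}$.
\end{itemize}
Adding the two gives $\|\Lambda-\id_K\|_\diamond\le2\|\Theta\|_{\mathrm{op}}$.  Finally, $\Theta\succeq0$, so $\|\Theta\|_{\mathrm{op}}\le\Tr\Theta$, which completes \eqref{eq:polar-linear-diamond}.

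The only delicate point is the first-order cancellation.  It needs both normalizations, $\sum_jC_j^\dagger C_j=I_K$ and $\sum_jC_j^\dagger GC_j=I_K$, together with the Hermiticity of $S$.
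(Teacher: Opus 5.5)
Your proposal is correct and follows the paper's proof essentially step for step. It uses the same splitting $K_{jb}=c_j(b)I_K+T_{jb}$, the same use of both normalizations to get $\sum_jC_j^\dagger TC_j=-\Theta/2$, and the same diamond-norm argument via $\|\mathcal A\|_\diamond=\|\mathcal A^\dagger(I_K)\|_{\mathrm{op}}=\|\Theta\|_{\mathrm{op}}$ plus the bound on the multiplication maps; the only cosmetic difference is that you spell out the ancilla/H\"older step that the paper states in one line.
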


\begin{proof}
Since $\sum_b|b\rangle\langle b|\otimes I_K=I_{vK}$, one has
$\sum_{j,b}K_{jb}^\dagger K_{jb}=\sum_jC_j^\dagger GC_j=I_K$, proving trace
preservation.  If $c_{jb}$ is the $b$-th entry of $c_j$, then
$K_{jb}=c_{jb}I_K+T_{jb}$, $\sum_{j,b}|c_{jb}|^2=1$ by the first hypothesis,
and $\sum_{j,b}\overline{c_{jb}}T_{jb}=H\coloneqq\sum_jC_j^\dagger TC_j$,
which is Hermitian because $T$ is.  Hence
$\Lambda(X)=X+HX+XH+\mathcal A(X)$.  Since $G=(I_{vK}+T)^2$, the two
hypotheses give $I_K=I_K+2H+\Theta$, and substituting $H=-\Theta/2$ proves
\eqref{eq:polar-quadratic-identity}.  For the completely positive map
$\mathcal A$, Stinespring factorization gives
$\|\mathcal A\|_\diamond=\|\mathcal A^\dagger(I_K)\|_{\mathrm{op}}
=\|\Theta\|_{\mathrm{op}}$.  Left or right multiplication by $\Theta$ has
diamond norm at most $\|\Theta\|_{\mathrm{op}}$.  The triangle inequality
therefore bounds the channel difference by $2\|\Theta\|_{\mathrm{op}}$.
The universal channel-distance bound $2$ and
$\|\Theta\|_{\mathrm{op}}\leq\Tr\Theta$ give
\eqref{eq:polar-linear-diamond}.
\end{proof}

For a positive operator $X$, let $X^{+1/2}$ denote its Moore--Penrose inverse
square root on its support.

On the range of an error-image map, the recovery below applies the adjoint of
its polar isometry and discards the error label; there it is the polar
decoder of \cite[Definition~2.2]{MaTanWright2025Haar}, the square-root
construction of \cite{EldarForney2001SquareRoot} applied to the error-image
map.  On the images of code states under every channel whose Kraus
operators lie in $\operatorname{span}\{U_a:a\in\cE\}$, it coincides with the
error-set Petz map \cite[Definition~6, Proposition~6 and
Lemma~9]{ElimelechAlbertBarg2026UniversalRecovery}.  We write
$\Fent(\rho,\Lambda)$ for the entanglement fidelity of the channel $\Lambda$
at the state $\rho$.

\begin{lemma}[Fixed polar recovery of an error set]
\label{lem:polar-decoder}
Let $J:\mathbb C^K\to\Hphys$ be an isometry with $JJ^\dagger=P$, put
$\operatorname{Enc}(\rho)\coloneqq J\rho J^\dagger$, and let $\cE$ be a set of
$v$ Weyl labels with $vK\leq d$, where $K$ divides $d$.  Index the standard
basis of $\mathbb C^v$ by $\cE$ and put
\[
 Y_\cE(|a\rangle\otimes x)\coloneqq U_aJx\ (a\in\cE),\qquad
 G_\cE\coloneqq Y_\cE^\dagger Y_\cE,\qquad
 T_\cE(P)\coloneqq Y_\cE Y_\cE^\dagger=\sum_{a\in\cE}U_aPU_a^\dagger.
\]
Then $Y_\cE=\widetilde Y_\cE G_\cE^{1/2}$ for an isometry
$\widetilde Y_\cE:\mathbb C^v\otimes\mathbb C^K\to\Hphys$, and
\begin{equation}
 \|Y_\cE-\widetilde Y_\cE\|_{\HS}^2
 =\Tr\bigl(G_\cE^{1/2}-I_{vK}\bigr)^2
 \leq\|G_\cE-I_{vK}\|_{\HS}^2.
 \label{eq:polar-decoder-error}
\end{equation}
Choose a unitary $D:\Hphys\to\mathbb C^{d/K}\otimes\mathbb C^K$ and an
isometric flag embedding $\iota:\mathbb C^v\to\mathbb C^{d/K}$ such that
$D\widetilde Y_\cE(|a\rangle\otimes x)=\iota|a\rangle\otimes x$ for every
$a\in\cE$; when $vK=d$ one may take $\iota$ to be the identity and
$D=\widetilde Y_\cE^\dagger$.  The CPTP recovery
$\cR(\sigma)\coloneqq\Tr_{\mathbb C^{d/K}}(D\sigma D^\dagger)$ depends on $P$,
$\cE$ and $J$, but on no noise parameter.
\begin{enumerate}
\item For a probability vector $p=(p_a)_{a\in\cE}$, put
$\cN_p(\sigma)\coloneqq\sum_{a\in\cE}p_aU_a\sigma U_a^\dagger$,
$p_\ast\coloneqq\max_ap_a$, and the probability-weighted decoder mismatch
$\Mis_p\coloneqq\sum_{a\in\cE}p_a\|DU_aJ-\iota|a\rangle\otimes I_K\|_{\HS}^2$.
Then $\Mis_p\leq p_\ast\|Y_\cE-\widetilde Y_\cE\|_{\HS}^2$,
\[
 \|\cR\circ\cN_p\circ\operatorname{Enc}-\id_K\|_\diamond
 \leq2\min\{1,\Mis_p\},
 \qquad
 1-\Fent(I_K/K,\cR\circ\cN_p\circ\operatorname{Enc})
 \leq\min\{1,\Mis_p/K\}.
\]
\item On $\operatorname{supp}T_\cE(P)$ the block rows of $D$ are
$((\langle a|\iota^\dagger)\otimes I_K)D=J^\dagger U_a^\dagger T_\cE(P)^{+1/2}$,
so for the uniform distribution $\cR$ is the transpose/Petz recovery there, and
$\Fent(I_K/K,\cR\circ\cN_{\mathrm{unif}}\circ\operatorname{Enc})
=(K^2v)^{-1}\|\Tr_{\mathbb C^K}\sqrt{G_\cE}\|_{\HS}^2$.
\item For every channel $\cN$ whose Kraus operators lie in
$\operatorname{span}\{U_a:a\in\cE\}$,
$\|\cR\circ\cN\circ\operatorname{Enc}-\id_K\|_\diamond
\leq2\min\{1,\Tr(G_\cE^{1/2}-I_{vK})^2\}$ and
$1-\Fent(I_K/K,\cR\circ\cN\circ\operatorname{Enc})
\leq K^{-1}\Tr(G_\cE^{1/2}-I_{vK})^2$.
\end{enumerate}
\end{lemma}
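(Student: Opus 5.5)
The proof has three stages. First build the polar isometry and prove the norm bound \eqref{eq:polar-decoder-error}. Then construct $D$ and reduce each of items 1--3 to a statement about $S\coloneqq G_\cE^{1/2}$. For the polar factor, if $G_\cE$ is invertible put $\widetilde Y_\cE\coloneqq Y_\cE G_\cE^{-1/2}$, which is an isometry. If $G_\cE$ is singular, use the partial isometry of the polar decomposition on $\operatorname{supp}G_\cE$ and extend it isometrically on $\ker G_\cE$. This extension is possible because $vK\leq d$, so $\Hphys$ has room for an orthogonal complement of the required dimension.

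Then $\|Y_\cE-\widetilde Y_\cE\|_{\HS}^2=\|\widetilde Y_\cE(S-I)\|_{\HS}^2=\Tr(S-I)^2$. The inequality follows spectrally from $(\sqrt\lambda-1)^2\leq(\lambda-1)^2(\sqrt\lambda+1)^{-2}\leq(\lambda-1)^2$ for $\lambda\geq0$.

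Next build $D$. Because $K\mid d$, write $\Hphys\cong\mathbb C^{d/K}\otimes\mathbb C^K$. Complete the orthonormal family $\widetilde Y_\cE(|a\rangle\otimes e_i)$ to an orthonormal basis of $\Hphys$, organised in $d/K-v$ further blocks of size $K$. Define $D$ to send the block of $a$ to $\iota|a\rangle\otimes e_i$. When $vK=d$ this is just $D=\widetilde Y_\cE^\dagger$. By construction, $\cR$ depends only on $P$, $\cE$ and $J$.

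For item 1, note that $\cR\circ\cN_p\circ\operatorname{Enc}$ has Kraus operators $(\langle f|\otimes I_K)DU_aJ\sqrt{p_a}$. Write $DU_aJ=\iota|a\rangle\otimes I_K+E_a$. The ideal part gives the identity channel. The mismatch terms are controlled by the general estimate that a channel with Kraus operators $F_k=F^0_k+E_k$, where $\sum F_k^{0\dagger}F^0_k=I$ and the $F^0_k$ form an identity channel, differs from the identity in diamond norm by at most $2\min\{1,\sum\|E_k\|_{\HS}^2\}$. I would derive this from a purification/Bures argument, or by mimicking Lemma~\ref{lem:polar-linear-diamond}.

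The bound $\Mis_p\leq p_\ast\sum_a\|DU_aJ-\iota|a\rangle\otimes I\|^2$ is immediate. Since $D$ is unitary, $\sum_a\|DU_aJ-\iota|a\rangle\otimes I\|_{\HS}^2=\|Y_\cE-\widetilde Y_\cE\|_{\HS}^2$. For the fidelity, $\Fent(I/K,\cdot)=K^{-2}\sum_k|\Tr F_k|^2$. I would bound it below by $\sum_a p_a|\Re\Tr(\langle a|\iota^\dagger\otimes I)DU_aJ)|^2/K^2$ and use $\|E_a\|_{\HS}^2=2K-2\Re\Tr(\cdots)$, which gives $1-F\leq \Mis_p/K$ after Cauchy--Schwarz.

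For item 2, on $\operatorname{ran}Y_\cE=\operatorname{supp}T_\cE(P)$ one has $\widetilde Y_\cE^\dagger=S^{-1}Y_\cE^\dagger$ restricted appropriately. This equals $Y_\cE^\dagger(Y_\cE Y_\cE^\dagger)^{+1/2}$, which is the standard polar identity, and its block rows are $J^\dagger U_a^\dagger T_\cE(P)^{+1/2}$. That expression is the Petz/transpose map for the uniform mixture. The uniform-noise Kraus operators after recovery are $v^{-1/2}(\langle b|\otimes I)SC_a$ with $C_a=|a\rangle\otimes I$. Taking traces gives $\Fent=(K^2v)^{-1}\sum_{a,b}|\Tr S_{ba}|^2$, which is $(K^2v)^{-1}\|\Tr_{\mathbb C^K}S\|_{\HS}^2$.

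For item 3, write the Kraus operators of $\cN$ as $\sum_a c_{ja}U_a$. Then $N_jJ=Y_\cE C_j$ with $C_j=c_j\otimes I_K$. Trace preservation on the code space gives $\sum C_j^\dagger GC_j=I_K$. The recovered Kraus operators are $(\langle f|\otimes I)DY_\cE C_j=(\langle b|\otimes I)SC_j$. Lemma~\ref{lem:polar-linear-diamond} then applies, provided the other hypothesis $\sum C_j^\dagger C_j=I_K$ holds, i.e.\ $\sum_j\|c_j\|^2=1$.

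This is the main obstacle. In general only $\sum_j c_j^\dagger G c_j=I$ is known. I would first try to note that Lemma~\ref{lem:polar-linear-diamond} only uses the first hypothesis through $\sum|c_{jb}|^2=1$. Failing that, I would rescale: put $\lambda\coloneqq\sum\|c_j\|^2$ and absorb the discrepancy into $\Theta$ via $I=\lambda I+2H+\Theta$. A cleaner route is to bound directly $\Tr\Theta\leq\sum_j\|C_j\|_{\mathrm{op}}^2\Tr T^2$, or to normalise so that $\|c_j\|\leq1$ and use $\Tr\Theta\leq\Tr T^2$. Either route gives the diamond bound $2\min\{1,\Tr(S-I)^2\}$. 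The fidelity bound follows from \eqref{eq:polar-quadratic-identity} by evaluating on the maximally entangled state, which gives $1-\Fent\leq\Tr\Theta/K\leq K^{-1}\Tr(S-I)^2$.
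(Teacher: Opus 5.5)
Your reduction of item~3 to Lemma~\ref{lem:polar-linear-diamond} through $DY_\cE=(\iota\otimes I_K)\sqrt{G_\cE}$ is the paper's, but the step you flag as the main obstacle stays open, and none of your workarounds closes it. The missing input is that $\cN$ is trace preserving on all of $\Hphys$, not only on the code space. With $L_j=\sum_{a\in\cE}c_{ja}U_a$ and Weyl orthogonality \eqref{eq:weyl-orthogonality},
\[
 d=\Tr\sum_jL_j^\dagger L_j=\sum_j\sum_{a,b\in\cE}\overline{c_{ja}}\,c_{jb}\Tr(U_a^\dagger U_b)=d\sum_j\|c_j\|^2 .
\]
Hence $\rho\coloneqq\sum_jc_jc_j^\dagger$ is a density matrix, and $\Tr\Theta=\Tr\bigl((\rho\otimes I_K)(G_\cE^{1/2}-I_{vK})^2\bigr)\leq\Tr(G_\cE^{1/2}-I_{vK})^2$. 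Your rescaling does leave \eqref{eq:polar-quadratic-identity} intact: the factor $\lambda\coloneqq\sum_j\|c_j\|^2$ cancels, since that identity needs only $\sum_jC_j^\dagger G_\cE C_j=I_K$. But the final comparison still needs $\lambda\leq1$ (or $\rho\preceq I_v$), whether you use $\Tr\Theta\leq\lambda\Tr(G_\cE^{1/2}-I_{vK})^2$ or $\|\Theta\|_{\mathrm{op}}\leq\lambda\|G_\cE^{1/2}-I_{vK}\|_{\mathrm{op}}^2$. The code-space constraint alone does not supply this: it only pins $\rho$ against a possibly nearly singular $G_\cE$, and you cannot renormalize the Kraus coefficients of a given channel. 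Once the normalization is in place, your item-3 fidelity argument is correct. Evaluating \eqref{eq:polar-quadratic-identity} on a maximally entangled input gives $1-\Fent=\Tr\Theta/K-K^{-2}\sum_{j,b}|\Tr T_{jb}|^2\leq\Tr\Theta/K$, which is more self-contained than the paper's appeal to Theorem~7 of Elimelech, Albert and Barg.

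In item~1, the general estimate you propose to invoke is false as stated. On a qubit, the single Kraus operator $F=e^{i\epsilon Z}=I+E$ has $\|E\|_{\HS}^2=4(1-\cos\epsilon)=O(\epsilon^2)$, yet $\|\operatorname{Ad}_F-\id_2\|_\diamond=2\sin\epsilon$. A purification/Bures argument gives only order $\Mis_p^{1/2}$. Write $F_k^0=c_kI$. A quadratic bound needs the first-order term $\sum_k\overline{c_k}E_k$ to be Hermitian, so that trace preservation forces it to equal $-\Theta/2$. Here that holds because $DU_aJ=(\iota\otimes I_K)\sqrt{G_\cE}\,(|a\rangle\otimes I_K)$, i.e.\ the decoded Kraus operators are blocks of a Hermitian matrix. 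The paper therefore obtains item~1 as the case $c_a=\sqrt{p_a}\,|a\rangle$ of the same reduction, where $\sum_ap_a=1$ and $\Tr\Theta=\Mis_p$; you should argue this way rather than through a generic perturbation estimate. Your elementary proof of $1-\Fent\leq\Mis_p/K$ is correct: with $\tau_a\coloneqq\Tr\bigl(((\langle a|\iota^\dagger)\otimes I_K)DU_aJ\bigr)$, one has $\operatorname{Re}\tau_a=K-\|E_a\|_{\HS}^2/2$, hence $|\tau_a|^2\geq K^2-K\|E_a\|_{\HS}^2$. Your direct trace computation of the uniform fidelity in item~2 is also correct. Together these replace the paper's purification-plus-monotonicity argument and its citation of the Zheng et al.\ fidelity formula.
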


\begin{proof}
Since $vK\leq d$, the polar partial isometry of $Y_\cE$ completes on its
kernel to an isometry $\widetilde Y_\cE$ with
$Y_\cE=\widetilde Y_\cE G_\cE^{1/2}$.  If $\lambda_1,\ldots,\lambda_{vK}$ are
the eigenvalues of $G_\cE$, then
$\|Y_\cE-\widetilde Y_\cE\|_{\HS}^2=\sum_i(\sqrt{\lambda_i}-1)^2
\leq\sum_i(\lambda_i-1)^2$, which is \eqref{eq:polar-decoder-error}.  The
isometry from $\operatorname{ran}(\widetilde Y_\cE)$ onto
$\iota(\mathbb C^v)\otimes\mathbb C^K$ extends to the unitary $D$ because
both ambient spaces have dimension $d$.

For item~1, unitarity of $D$ gives
\[
 \sum_{a\in\cE}\|DU_aJ-\iota|a\rangle\otimes I_K\|_{\HS}^2
 =\|Y_\cE-\widetilde Y_\cE\|_{\HS}^2,
\]
hence $\Mis_p\leq p_\ast\|Y_\cE-\widetilde Y_\cE\|_{\HS}^2$.  The polar identity gives
$DY_\cE=(\iota\otimes I_K)\sqrt{G_\cE}$, including when $G_\cE$ is singular.
Hence, if a channel has Kraus operators $L_j=\sum_{a\in\cE}c_{ja}U_a$ and
$c_j\coloneqq(c_{ja})_{a\in\cE}$, then
$DL_jJ=(\iota\otimes I_K)\sqrt{G_\cE}\,(c_j\otimes I_K)$; since $\iota$ is an
isometry, the recovered channel is the channel $\Lambda$ of
Lemma~\ref{lem:polar-linear-diamond} with $G=G_\cE$.  Its hypotheses hold:
$\sum_jC_j^\dagger G_\cE C_j=J^\dagger(\sum_jL_j^\dagger L_j)J=I_K$, and
\eqref{eq:weyl-orthogonality} gives
$d=\Tr\sum_jL_j^\dagger L_j=d\sum_j\|c_j\|^2$.  For $\cN_p$ take
$c_a=\sqrt{p_a}\,|a\rangle$; then $\Tr\Theta=\Mis_p$, and that lemma gives
the diamond bound.  Let $E$ be a copy of $\mathbb C^v$ with
basis indexed by $\cE$, and let $F\coloneqq\mathbb C^{d/K}$ be the decoded
flag register.  The isometries
$V_{D,p}x\coloneqq\sum_{a\in\cE}\sqrt{p_a}\,|a\rangle_E\otimes DU_aJx$ and
$V_{0,p}x\coloneqq\sum_{a\in\cE}\sqrt{p_a}\,|a\rangle_E\otimes\iota|a\rangle_F\otimes x$
are Stinespring isometries of the recovered and identity channels, and
$\|V_{D,p}-V_{0,p}\|_{\HS}^2=\Mis_p$.  On a maximally entangled logical
input the squared distance between these purifications is $\Mis_p/K$, and
monotonicity of fidelity under partial trace bounds entanglement infidelity
by this quantity \cite{KretschmannEtAl2008Continuity}.  This
entanglement-infidelity estimate, in the form $\Mis_p/K$, can also be read
from \cite[Theorem~7 and Eqs.~(26)--(27) in its
proof]{ElimelechAlbertBarg2026UniversalRecovery}, since $\cR$ is the
error-set Petz map on the images of code states; we include the direct
purification argument for completeness.

For item~2, $((\langle a|\iota^\dagger)\otimes I_K)D
=(\langle a|\otimes I_K)\widetilde Y_\cE^\dagger
=J^\dagger U_a^\dagger T_\cE(P)^{+1/2}$ on $\operatorname{supp}T_\cE(P)$.
These are the transpose/Petz Kraus operators
\cite{ZhengEtAl2024NearOptimal}; that the polar decoder equals the
error-set Petz map is
\cite[Proposition~6]{ElimelechAlbertBarg2026UniversalRecovery}.  The QEC
matrix of the uniform channel, with Kraus operators $v^{-1/2}U_aJ$, is
$G_\cE/v$ up to the order of the tensor factors (error label $\otimes$
logical here, exchanged in \cite{ZhengEtAl2024NearOptimal}), so the fidelity
formula is \cite[Theorem~1]{ZhengEtAl2024NearOptimal}.

For item~3, the argument for item~1 shows that
$\cR\circ\cN\circ\operatorname{Enc}$ is a channel $\Lambda$ of
Lemma~\ref{lem:polar-linear-diamond}.  Since
$\rho\coloneqq\sum_jc_jc_j^\dagger$ has trace one,
$\Tr\Theta=\Tr\bigl((\rho\otimes I_K)(G_\cE^{1/2}-I_{vK})^2\bigr)
\leq\Tr(G_\cE^{1/2}-I_{vK})^2$, which gives the diamond bound.
For the fidelity bound, $\cN$ is controlled by the error set $\{U_a\}_{a\in\cE}$
(in the sense of \cite[Definition~2]{ElimelechAlbertBarg2026UniversalRecovery})
by \eqref{eq:weyl-orthogonality} and
\cite[Proposition~4]{ElimelechAlbertBarg2026ErrorSet}, and on the images of
code states $\cR$ is the error-set Petz map
\cite[Lemma~9 and Proposition~6]{ElimelechAlbertBarg2026UniversalRecovery}.
Let $M_\cE\coloneqq\Pi^\dagger G_\cE\Pi$, where
$\Pi:\mathbb C^K\otimes\mathbb C^v\to\mathbb C^v\otimes\mathbb C^K$ exchanges the
tensor factors; $M_\cE$ is the QEC matrix of
\cite[Eq.~(4)]{ElimelechAlbertBarg2026UniversalRecovery} for the code basis
$J|i\rangle$.  By \cite[Theorem~7 and
Definition~4]{ElimelechAlbertBarg2026UniversalRecovery},
\[
 1-\Fent(I_K/K,\cR\circ\cN\circ\operatorname{Enc})
 \leq\frac1K\min_{\lambda\succeq0}
 \bigl\|\sqrt{M_\cE}-I_K\otimes\sqrt\lambda\bigr\|_{\HS}^2,
\]
the minimum running over positive semidefinite $v\times v$ matrices $\lambda$.
Taking $\lambda=I_v$ and conjugating by $\Pi$ bounds the right side by
$K^{-1}\Tr(G_\cE^{1/2}-I_{vK})^2$.
\end{proof}

\begin{lemma}[Gram control and polar approximate isometry]
\label{lem:QMDS-polar}
Under the hypotheses of Theorem~\ref{thm:QMDS}, let $1\leq e$ satisfy
$2e<\Delta$, and assume
\begin{equation}
 KV_e\leq q^n.
 \label{eq:QMDS-polar-dimension}
\end{equation}
Put
\[
 \begin{aligned}
 g(P)&\coloneqq\cD_Q(P)-\delta^{\mathrm{QMDS}}_{n,q,\Delta},
 &G_e&\coloneqq Y_e^\dagger Y_e,\\
 &&C_e&\coloneqq C_e^{\mathrm{Gram}}.
 \end{aligned}
\]
Then $Y_e=\widetilde Y_eG_e^{1/2}$ for an isometry
$\widetilde Y_e:\mathbb C^{V_e}\otimes\mathbb C^K\to\Hphys$, and
\begin{align}
 \|Y_e-\widetilde Y_e\|_{\HS}^2
 &=\Tr\bigl(G_e^{1/2}-I_{KV_e}\bigr)^2
 \label{eq:QMDS-polar-error}\\
 &\leq\|G_e-I_{KV_e}\|_{\HS}^2
 \leq C_e\,g(P).
 \notag
\end{align}
Moreover,
\begin{equation}
 \gamma_e\coloneqq\|G_e-I_{KV_e}\|_{\mathrm{op}}
 \leq\sqrt{C_e\,g(P)}.
 \label{eq:QMDS-gamma}
\end{equation}
If $C_e g(P)<1$, then the singular values of $Y_e$ lie in
$[\sqrt{1-\gamma_e},\sqrt{1+\gamma_e}]$ and
\begin{equation}
 \begin{aligned}
\|G_e^{1/2}-I_{KV_e}\|_{\mathrm{op}}
 &\leq1-\sqrt{1-\gamma_e}\\
 &\leq1-\sqrt{1-\sqrt{C_e g(P)}}.
 \end{aligned}
 \label{eq:QMDS-delta}
\end{equation}
\end{lemma}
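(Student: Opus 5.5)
The plan is to assemble this lemma from three earlier results: Lemma~\ref{lem:polar-decoder} applied to $\cE=\cB_e$, Corollary~\ref{cor:QMDS-rigidity}, and a spectral argument for the operator-norm statements.

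\textbf{Polar factorization and the first bound.} Take $\cE=\cB_e$, so that $v=V_e$, $Y_\cE=Y_e$ and $G_\cE=G_e$. Lemma~\ref{lem:polar-decoder} needs $vK\leq d$; this is exactly \eqref{eq:QMDS-polar-dimension}. Its divisibility condition also holds, since $K=q^\ell$ divides $q^n$. That lemma then gives the isometry $\widetilde Y_e$ with $Y_e=\widetilde Y_eG_e^{1/2}$, together with
\[
 \|Y_e-\widetilde Y_e\|_{\HS}^2=\Tr\bigl(G_e^{1/2}-I_{KV_e}\bigr)^2\leq\|G_e-I_{KV_e}\|_{\HS}^2 .
\]
This is the first line of \eqref{eq:QMDS-polar-error}. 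For completeness I would recall the scalar fact behind it: for $\lambda\geq0$ one has $(\sqrt\lambda-1)^2\leq(\lambda-1)^2$, because $|\lambda-1|=|\sqrt\lambda-1|(\sqrt\lambda+1)\geq|\sqrt\lambda-1|$. The remaining inequality $\|G_e-I_{KV_e}\|_{\HS}^2\leq C_e\,g(P)$ is the first inequality of \eqref{eq:QMDS-Gram-rigidity} in Corollary~\ref{cor:QMDS-rigidity}, with $C_e=C_e^{\mathrm{Gram}}$.

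\textbf{Operator-norm bound.} Since $G_e-I_{KV_e}$ is Hermitian, its operator norm is at most its Hilbert--Schmidt norm. Hence $\gamma_e^2\leq\|G_e-I_{KV_e}\|_{\HS}^2\leq C_e\,g(P)$, which gives \eqref{eq:QMDS-gamma}.

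\textbf{Singular values and the square-root estimate.} Assume $C_e g(P)<1$. Then $\gamma_e<1$, so every eigenvalue $\lambda_i$ of $G_e$ lies in $[1-\gamma_e,1+\gamma_e]$. The singular values of $Y_e$ are the $\sqrt{\lambda_i}$, so they lie in $[\sqrt{1-\gamma_e},\sqrt{1+\gamma_e}]$. Consequently
\[
 \|G_e^{1/2}-I_{KV_e}\|_{\mathrm{op}}=\max_i|\sqrt{\lambda_i}-1|\leq\max\bigl\{1-\sqrt{1-\gamma_e},\ \sqrt{1+\gamma_e}-1\bigr\}.
\]
The only point needing care is comparing these two terms; it follows from concavity of the square root. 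With $x=\gamma_e\in[0,1)$, the function $f(x)=2-\sqrt{1-x}-\sqrt{1+x}$ satisfies $f(0)=0$ and $f'(x)\geq0$. Therefore $\sqrt{1+x}-1\leq1-\sqrt{1-x}$, and the maximum is $1-\sqrt{1-\gamma_e}$. Finally, $t\mapsto1-\sqrt{1-t}$ is increasing, so the bound \eqref{eq:QMDS-gamma} yields the last inequality in \eqref{eq:QMDS-delta}.

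No step is a genuine obstacle: everything reduces to the earlier corollary and elementary spectral facts, and the two-sided square-root comparison is the only part needing a short argument.
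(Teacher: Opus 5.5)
Your proposal is correct and follows essentially the same route as the paper: Lemma~\ref{lem:polar-decoder} with $\cE=\cB_e$ gives the factorization and the first part of \eqref{eq:QMDS-polar-error}, Corollary~\ref{cor:QMDS-rigidity} gives the last inequality, operator norm is bounded by Hilbert--Schmidt norm, and scalar functional calculus gives the rest. Your explicit concavity comparison $\sqrt{1+x}-1\leq1-\sqrt{1-x}$ spells out a step the paper leaves implicit.
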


\begin{proof}
Lemma~\ref{lem:polar-decoder} with $\cE=\cB_e$ and $v=V_e$ (here
$K=q^\ell$ divides $q^n$) gives the polar factorization and the identity and
first inequality in \eqref{eq:QMDS-polar-error}.
Corollary~\ref{cor:QMDS-rigidity} gives the final inequality in
\eqref{eq:QMDS-polar-error}, and operator norm is bounded by Hilbert--Schmidt
norm, proving \eqref{eq:QMDS-gamma}.  Under the stated small-excess condition,
scalar functional calculus gives the singular-value interval and
\eqref{eq:QMDS-delta}.
\end{proof}

The coefficientwise constant in Corollary~\ref{cor:QMDS-rigidity} therefore
has an operational consequence at every radius $e$ with $2e<\Delta$ and
$KV_e\leq q^n$.  Let $\cR_e$ be the recovery of
Lemma~\ref{lem:polar-decoder} for $\cE=\cB_e$, with unitary $D_e$ and flag
embedding $\iota_e$.

\begin{corollary}[Recovery of channels in the error span]
\label{cor:QMDS-span-infidelity}
Let $2e<\Delta$ and $KV_e\leq q^n$ at the parameters
\eqref{eq:QMDS-parameters}.  For every channel $\cN$ whose Kraus operators lie in
$\operatorname{span}\{U_a:a\in\cB_e\}$, the logical channel
$\Lambda\coloneqq\cR_e\circ\cN\circ\operatorname{Enc}$ satisfies
\[
 \|\Lambda-\id_K\|_\diamond\leq2\min\{1,C_e\,g(P)\},
 \qquad
 1-\Fent(I_K/K,\Lambda)\leq\frac{C_e\,g(P)}{K},
\]
with no condition on the size of $g(P)$.
\end{corollary}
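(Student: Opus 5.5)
The plan is to specialize item~3 of Lemma~\ref{lem:polar-decoder} to the error set $\cE=\cB_e$, and then to feed in the Gram control of Lemma~\ref{lem:QMDS-polar}.

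First, check the hypotheses of Lemma~\ref{lem:polar-decoder} with $v=V_e$.
\begin{itemize}
\item $J$ is an isometry with $JJ^\dagger=P$.
\item $KV_e\leq q^n$ holds by assumption.
\item $K=q^\ell$ divides $d=q^n$.
\end{itemize}
The recovery $\cR_e$ is by definition the one built there from $D_e$ and $\iota_e$. Item~3 then applies to every channel $\cN$ whose Kraus operators lie in $\operatorname{span}\{U_a:a\in\cB_e\}$. It gives
\[
 \|\Lambda-\id_K\|_\diamond\leq2\min\bigl\{1,\Tr(G_e^{1/2}-I_{KV_e})^2\bigr\},
 \qquad
 1-\Fent(I_K/K,\Lambda)\leq K^{-1}\Tr(G_e^{1/2}-I_{KV_e})^2 .
\]
Here $G_e=Y_e^\dagger Y_e$ is the same Gram matrix as $G_\cB$ for $\cE=\cB_e$, since $Y_e$ in \eqref{eq:QMDS-Ye} is exactly $Y_\cE$.

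Second, bound the common quantity $\Tr(G_e^{1/2}-I_{KV_e})^2$. The chain \eqref{eq:QMDS-polar-error} of Lemma~\ref{lem:QMDS-polar} gives
\[
 \Tr(G_e^{1/2}-I_{KV_e})^2\leq\|G_e-I_{KV_e}\|_{\HS}^2\leq C_e\,g(P).
\]
The first inequality is the scalar estimate $(\sqrt\lambda-1)^2\leq(\lambda-1)^2$ for $\lambda\geq0$. The second is Corollary~\ref{cor:QMDS-rigidity}, which rests on the $m=t$ identity \eqref{eq:QMDS-two-gaps} having nonnegative terms.

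Third, substitute this bound into both estimates. Since $x\mapsto\min\{1,x\}$ is monotone, the diamond bound becomes $2\min\{1,C_e g(P)\}$, and the fidelity bound becomes $C_e g(P)/K$.

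No smallness assumption on $g(P)$ is needed at any point. Item~3 and \eqref{eq:QMDS-polar-error} are unconditional; only the operator-norm refinements \eqref{eq:QMDS-gamma}--\eqref{eq:QMDS-delta} required $C_e g(P)<1$, and they are not used.

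The one step needing care is the identification of objects in the first step, which also delivers the fidelity bound. Item~3 is stated for a general $\cE$, so one must check two things. The Kraus operators of $\cN$ being in the Weyl span makes $\cR_e\circ\cN\circ\operatorname{Enc}$ a channel of the form in Lemma~\ref{lem:polar-linear-diamond} with $G=G_e$. The Elimelech--Albert--Barg fidelity bound also applies with the choice $\lambda=I_v$. Both checks were already done in the proof of Lemma~\ref{lem:polar-decoder}, so here it suffices to note that $\cB_e$ is a set of Weyl labels.
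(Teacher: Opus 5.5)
Your proposal is correct and matches the paper's proof: apply Lemma~\ref{lem:polar-decoder}(3) with $\cE=\cB_e$, then use \eqref{eq:QMDS-polar-error} to bound $\Tr(G_e^{1/2}-I_{KV_e})^2\leq C_e\,g(P)$. One small typo: your ``$G_\cB$'' should read $G_\cE$.
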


\begin{proof}
Lemma~\ref{lem:polar-decoder}(3) gives both bounds, since
$\Tr(G_e^{1/2}-I_{KV_e})^2\leq C_e\,g(P)$ by \eqref{eq:QMDS-polar-error}.
\end{proof}

For Pauli mixtures on $\cB_e$, Corollary~\ref{cor:QMDS-recovery} below adds
the factor $p_\ast$, the largest probability in the mixture including that of
the identity.  In the entanglement-fidelity bound, the factor $p_\ast$ can also
be read from
\cite[Eqs.~(26)--(27)]{ElimelechAlbertBarg2026UniversalRecovery}.  Since
$p_\ast$ is at least the identity probability, these bounds do not vanish with
the noise.  This is partly a property of the recovery: if $PU_aP=0$ for
$1\leq\wt(a)\leq e$, the identity column of $G_e^{1/2}$ is
$(I_K,0,\ldots,0)$ and $\cR_e\circ\operatorname{Enc}=\id_K$, but in general
$\cR_e\circ\operatorname{Enc}\neq\id_K$.  For comparison, the worst-case
guarantees of \cite[Theorem~1.10]{MaTanWright2025Haar}
and \cite[Theorem~11]{ElimelechAlbertBarg2026UniversalRecovery} give bounds
of order $g(P)^{1/2}$ on the diamond error here.

\begin{corollary}[Higher-radius polar recovery at quantum MDS parameters]
\label{cor:QMDS-recovery}
Under the hypotheses of Theorem~\ref{thm:QMDS}, let \(1\leq e\) satisfy
\(2e<\Delta\) and \(KV_e\leq q^n\), let \(g(P)\), \(G_e\) and \(C_e\) be as in
Lemma~\ref{lem:QMDS-polar}, and let \(\cR_e\) be the fixed recovery defined
before Corollary~\ref{cor:QMDS-span-infidelity}.
For a probability vector
\(p=(p_a)_{a\in\cB_e}\), put
\begin{equation}
 \cN_{e,p}(\sigma)\coloneqq\sum_{a\in\cB_e}p_aU_a\sigma U_a^\dagger,
 \qquad p_\ast\coloneqq\max_a p_a.
 \label{eq:QMDS-diagonal-noise}
\end{equation}
The same recovery satisfies the following estimates:
\begin{gather}
 \begin{aligned}
 &\|\cR_e\circ\cN_{e,p}\circ\operatorname{Enc}-\id_K\|_\diamond
 \leq2\min\{1,p_\ast C_e\,g(P)\}
 \end{aligned}
 \label{eq:QMDS-diagonal-diamond}\\
 \begin{aligned}
 &1-\Fent(I_K/K,\cR_e\circ\cN_{e,p}\circ\operatorname{Enc})
 \leq\min\left\{1,\frac{p_\ast C_e\,g(P)}K\right\}.
 \end{aligned}
 \label{eq:QMDS-diagonal-fidelity}
\end{gather}
For the uniform distribution on \(\cB_e\), this recovery is the
transpose/Petz recovery on $\operatorname{ran}(Y_e)$, and
\begin{gather}
 \begin{aligned}
 &\Fent(I_K/K,\cR_e\circ\cN_{e,\mathrm{unif}}
       \circ\operatorname{Enc})
 =\frac1{K^2V_e}
 \left\|\Tr_{\mathbb C^K}\sqrt{G_e}\right\|_{\HS}^2
 \end{aligned}
 \label{eq:QMDS-uniform-exact-fidelity}\\
 \begin{aligned}
 &\|\cR_e\circ\cN_{e,\mathrm{unif}}\circ\operatorname{Enc}
       -\id_K\|_\diamond
 \leq2\min\left\{1,\frac{C_e\,g(P)}{V_e}\right\}
 \end{aligned}
 \label{eq:QMDS-uniform-diamond}\\
 \begin{aligned}
 &1-\Fent(I_K/K,\cR_e\circ\cN_{e,\mathrm{unif}}
       \circ\operatorname{Enc})
 \leq\frac{C_e\,g(P)}{KV_e}.
 \end{aligned}
 \label{eq:QMDS-uniform-recovery}
\end{gather}
\end{corollary}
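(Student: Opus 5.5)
The plan is to specialize Lemma~\ref{lem:polar-decoder} to $\cE=\cB_e$ and $v=V_e$. Then we insert the Hilbert--Schmidt estimate of Lemma~\ref{lem:QMDS-polar}. The hypotheses of Lemma~\ref{lem:polar-decoder} hold: $K=q^\ell$ divides $q^n$, and $KV_e\leq q^n$ is assumed. The recovery $\cR_e$ is by definition the recovery built there, with $D_e$ and $\iota_e$.

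For the Pauli mixtures $\cN_{e,p}$ I will apply item~1 of Lemma~\ref{lem:polar-decoder}. It gives $\Mis_p\leq p_\ast\|Y_e-\widetilde Y_e\|_{\HS}^2$. By \eqref{eq:QMDS-polar-error} this is at most $p_\ast C_e\,g(P)$. The diamond bound $2\min\{1,\Mis_p\}$ and the infidelity bound $\min\{1,\Mis_p/K\}$ are monotone in $\Mis_p$, so substituting yields \eqref{eq:QMDS-diagonal-diamond} and \eqref{eq:QMDS-diagonal-fidelity} directly.

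For the uniform distribution, item~2 of Lemma~\ref{lem:polar-decoder} gives two things. First, on $\operatorname{supp}T_e(P)=\operatorname{ran}(Y_e)$ the block rows of $D_e$ are the transpose/Petz Kraus operators. Second, it gives the exact fidelity formula \eqref{eq:QMDS-uniform-exact-fidelity} with $G_\cE=G_e$. The remaining two bounds are the case $p_a=1/V_e$ of the previous paragraph: here $p_\ast=1/V_e$. This gives $\Mis_p\leq C_e\,g(P)/V_e$, hence \eqref{eq:QMDS-uniform-diamond}. It also gives $1-\Fent\leq\min\{1,C_e g(P)/(KV_e)\}$, which implies \eqref{eq:QMDS-uniform-recovery}.

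There is no genuine obstacle, since all analytic work sits in Lemmas~\ref{lem:polar-linear-diamond}, \ref{lem:polar-decoder} and \ref{lem:QMDS-polar}. The only point needing care is that the identity $\operatorname{supp}T_e(P)=\operatorname{ran}(Y_e)$ matches the phrase ``on $\operatorname{ran}(Y_e)$'' in the statement. This follows from $T_e(P)=Y_eY_e^\dagger$. A second check is that the fidelity formula in item~2 does not require $G_e$ to be invertible. Both checks are immediate from the statement of Lemma~\ref{lem:polar-decoder}.
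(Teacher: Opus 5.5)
Your proposal is correct and matches the paper's proof essentially step for step. You use Lemma~\ref{lem:polar-decoder}(1) together with \eqref{eq:QMDS-polar-error} to get $\Mis_p\leq p_\ast C_e\,g(P)$ and hence the two mixture bounds, Lemma~\ref{lem:polar-decoder}(2) for the transpose/Petz identification and the exact fidelity formula, and the substitution $p_\ast=1/V_e$ for the two uniform bounds, exactly as the paper does.
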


\begin{proof}
By Lemma~\ref{lem:polar-decoder}(1) with $\cE=\cB_e$ and by
\eqref{eq:QMDS-polar-error}, the decoder mismatch $\Mis_{e,p}\coloneqq\Mis_p$
of that lemma satisfies
$\Mis_{e,p}\leq p_\ast\|Y_e-\widetilde Y_e\|_{\HS}^2\leq p_\ast C_e\,g(P)$,
which proves \eqref{eq:QMDS-diagonal-diamond} and
\eqref{eq:QMDS-diagonal-fidelity}.  Lemma~\ref{lem:polar-decoder}(2) with
$v=V_e$ gives the transpose/Petz identification and
\eqref{eq:QMDS-uniform-exact-fidelity}.  Substituting \(p_\ast=1/V_e\) gives
\eqref{eq:QMDS-uniform-diamond} and \eqref{eq:QMDS-uniform-recovery}.
\end{proof}

The same discrepancy estimate extends to full stochastic Pauli noise by
separating the selected ball from its outside probability mass.
Appendix~\ref{app:noise} gives these tail estimates, their depolarizing
specialization, and a comparison for coherent errors in the selected span.

For prime-dimensional stabilizer projectors, the low-weight enumerators
take integer values constrained by scalar orbits.  This yields a positive
gap above an unattained benchmark; Corollary~\ref{cor:QMDS-stabilizer-gap}
in Appendix~\ref{app:stabilizer} states the exact bound.

\section{One-error all-radii stability}
\label{sec:stability}

At one-error Hamming equality, uniform coverage means that the error images
tile the physical space.  We seek a bound that measures the failure of this
tiling directly from the all-radii discrepancy excess.  The arithmetic
conditions below specify the ranks and interpolation nodes for this bound.

Assume henceforth that
\begin{equation}
 \begin{aligned}
 Q&=q^2,\qquad V\coloneqq1+n(Q-1),\\
 K&=\frac dV\in\mathbb N,\qquad
 k\coloneqq\frac VQ\in\mathbb N,\qquad n\geq3.
 \end{aligned}
 \label{eq:admissible}
\end{equation}
The integrality of $k$ is equivalent to
\begin{equation}
 n=Qs+1,\qquad k=1+s(Q-1)
 \label{eq:s-param}
\end{equation}
for an integer $s\geq1$.  Put $W(w)\coloneqq W_{n,Q}(w)$ and
\begin{equation}
 \delta_{n,Q}\coloneqq\frac{K^2}{d}(V-1)W(k).
 \label{eq:benchmark}
\end{equation}
The condition $K\in\mathbb N$ makes the prescribed projector rank integral,
whereas $k\in\mathbb N$ places the root of
$L_1(w)=V-Qw$ on the Hamming lattice and permits interpolation at the adjacent
nodes $k-1,k$.  For prime-power $q$, rank integrality implies
$k\in\mathbb N$: if $q=p^a$ and $V\mid q^n$, then $V=p^b$, while
$n=(p^b-1)/(p^{2a}-1)\in\mathbb N$ forces $2a\mid b$.  For a general integer
local dimension the two conditions are independent.  Attainment of the formal
benchmark is left open at this stage.
We first record the moment identities that connect quantum feasibility to a
classical quadratic minorant on the integer distance set; the minorant's
coefficient $\kappa_{n,Q}$ enters Theorem~\ref{thm:main}.

\begin{lemma}[One-error moments]
\label{lem:moments}
Under \eqref{eq:admissible}, every rank-$K$ orthogonal projector $P$ satisfies
\begin{align}
 \sum_{w=1}^nA_w&=V-1,\label{eq:mass}\\
 \sum_{w=1}^n(w-k)A_w&=-kB_1,\label{eq:first-moment}\\
 \sum_{w=1}^n(w-k)^2A_w&=kB_1+\frac{2k}{Q}B_2.
 \label{eq:second-moment}
\end{align}
Moreover,
\begin{equation}
 \Phi_1(P)=K(QB_1+2B_2)
 =\frac{K^2}{d}Q^2\sum_{w=1}^n(w-k)^2A_w.
 \label{eq:Phi-one-moments}
\end{equation}
\end{lemma}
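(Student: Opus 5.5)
The plan is to derive all three moment identities from the MacWilliams relation in Proposition~\ref{prop:macwilliams}, namely $\sum_w A_wK_j^{(n,Q)}(w)=(d/K)B_j=VB_j$, for $j=0,1,2$, together with $A_0=B_0=1$. Everything then reduces to writing $1$, $w$ and $w^2$ as combinations of $K_0,K_1,K_2$.

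\begin{enumerate}
\item \emph{Mass.} For $j=0$ we have $K_0\equiv1$, so $\sum_{w=0}^nA_w=V$. Removing $A_0=1$ gives \eqref{eq:mass}.
\item \emph{First moment.} Since $K_1(w)=n(Q-1)-Qw=V-1-Qw$, we get $w=(V-1-K_1(w))/Q$. Using $j=0,1$,
\[
 \sum_{w=0}^n wA_w=\frac{V(V-1)-VB_1}{Q}=k(V-1)-kB_1,
\]
because $V/Q=k$. Subtracting $k\sum_{w\ge1}A_w=k(V-1)$ yields \eqref{eq:first-moment}.
\item \emph{Second moment.} Expand $K_2(w)$ from \eqref{eq:krawtchouk} as a quadratic in $w$:
\[
 K_2(w)=\tbinom{n-w}{2}(Q-1)^2-w(n-w)(Q-1)+\tbinom w2 .
\]
Its leading coefficient is $Q^2/2$. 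Express $w^2$ as a combination of $K_2$, $K_1$ and $1$, and substitute the $j=0,1,2$ identities. Then simplify using $V=1+n(Q-1)=Qk$, or the parametrization \eqref{eq:s-param} if convenient. The $w=0$ term has to be handled separately, since $(0-k)^2A_0=k^2$ must cancel against the constant parts.
\end{enumerate}

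The cleanest route for the last step is to note that $L_1(w)=K_0+K_1=V-Qw=Q(k-w)$. Squaring gives $Q^2(w-k)^2=L_1(w)^2$. It then suffices to expand $L_1^2$ in the Krawtchouk basis via the linearization $K_1^2=\sum_j c_jK_j$. This expansion is the same as the ball-intersection computation $c_1(w)$ of Proposition~\ref{prop:fusion}.

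The final display then follows quickly. Theorem~\ref{thm:spectral} gives $\Phi_1=(K^2/d)\sum_{w\ge1}A_wL_1(w)^2=(K^2/d)Q^2\sum(w-k)^2A_w$, which is the second equality in \eqref{eq:Phi-one-moments}. Proposition~\ref{prop:fusion} gives the first. Here $\Phi_1=\Tr T_1^2-(KV)^2/d=K\sum_w c_1(w)B_w-KV$ with $KV=d$. The intersection sizes are $c_1(0)=V$, $c_1(1)=Q$ (the two centers plus $Q-2$ common neighbors), $c_1(2)=2$, and $c_1(w)=0$ for $w\ge3$. Since $B_0=1$, this gives $\Phi_1=K(QB_1+2B_2)$.

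Comparing the two expressions for $\Phi_1$, and using $K^2/d=K/V=1/(Qk)\cdot K$, gives
\[
 Q^2\sum(w-k)^2A_w=Qk(QB_1+2B_2)/Q\cdot Q,
\]
that is, $\sum(w-k)^2A_w=kB_1+(2k/Q)B_2$. This is exactly \eqref{eq:second-moment}, so that identity falls out of \eqref{eq:Phi-one-moments} and the direct $K_2$ expansion becomes unnecessary. The order of proof will therefore be: mass, first moment, $\Phi_1$ via the two propositions, then the second moment.

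The main obstacle is bookkeeping: getting the ball-intersection counts $c_1(w)$ right, and tracking the constants $K^2/d=K/V$ and $V=Qk$.
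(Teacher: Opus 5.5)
Your proof is correct. It differs from the paper's only in how you obtain \eqref{eq:second-moment}.

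The paper derives the second moment directly from the $j=2$ MacWilliams identity, using the Krawtchouk relation $2K_2^{(n,Q)}=(K_1^{(n,Q)})^2-(Q-2)K_1^{(n,Q)}-(Q-1)n$. It then gets $\Phi_1=\frac{K^2}{d}Q^2\sum_w(w-k)^2A_w$ from Theorem~\ref{thm:spectral}, so the first expression $K(QB_1+2B_2)$ follows from the second moment. The ball-intersection count $c_1(1)=Q$, $c_1(2)=2$ appears there only as an alternative derivation of that first expression.

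You reverse the order. You prove both expressions for $\Phi_1$ independently, from Theorem~\ref{thm:spectral} and from the overlap form in Proposition~\ref{prop:fusion}; neither depends on this lemma, so there is no circularity. You then read off \eqref{eq:second-moment} by equating them: $Q^2\sum_w(w-k)^2A_w=V(QB_1+2B_2)$. Your display $Qk(QB_1+2B_2)/Q\cdot Q$ is correct but would read better in this form.

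What your route buys is brevity. It replaces the quadratic Krawtchouk linearization with the elementary count $c_1=V,Q,2,0$; the two are Fourier-dual, since equating the spectral and overlap forms is exactly the MacWilliams transform of $L_1^2$.

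What the paper's route buys is generality. All three moment identities come purely from the linear relations of Proposition~\ref{prop:macwilliams}. They therefore hold verbatim for any MacWilliams-dual pair $(A,B)$ with mass $V$, which is the level at which the later linear-programming and classical-comparison arguments operate. Your argument passes through the operator $T_1(P)$, which suffices here because the lemma is stated for projectors.
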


\begin{proof}
The mass is \eqref{eq:AB-masses}.  Since
$K_1^{(n,Q)}(w)=(Q-1)n-Qw=Q(k-w)-1$, the $j=1$ MacWilliams identity, after separating
the $w=0$ term, gives \eqref{eq:first-moment}.  The identity
\[
 2K_2^{(n,Q)}(w)=K_1^{(n,Q)}(w)^2-(Q-2)K_1^{(n,Q)}(w)-(Q-1)n
\]
and the $j=2$ MacWilliams identity give \eqref{eq:second-moment} by direct
substitution.  Finally $L_1(w)=V-Qw=-Q(w-k)$, so
\eqref{eq:Phi-spectral} gives the second expression in
\eqref{eq:Phi-one-moments}.  Alternatively, two radius-one $Q$-ary balls have
intersection sizes $Q,2,0$ at center distances $1,2,>2$; hence
\eqref{eq:overlap-form} gives the first expression.
\end{proof}

The corresponding classical coefficient estimate and pointwise minorant appear
in \cite[Proposition~4.3 and Theorem~4.2, Eq.~(41)]{Zabokritskiy2026Stability}.
Let
\begin{align}
 a_{n,Q}&\coloneqq W(k-1)-W(k),\label{eq:one-error-slope-def}\\
 \eta_{n,Q}&\coloneqq
 \min_{\substack{1\leq w\leq n\\w\notin\{k-1,k\}}}
 \frac{W(w)-W(k)+a_{n,Q}(w-k)}{(w-k)(w-k+1)},\label{eq:eta-def}\\
 \kappa_{n,Q}&\coloneqq\min\{a_{n,Q},\eta_{n,Q}\}.
 \label{eq:kappa-def}
\end{align}

\begin{proposition}[Classical spectral coefficient]
\label{prop:kappa}
Under \eqref{eq:s-param}, the minimum defining $\eta_{n,Q}$ is attained
uniquely at $w=n$, and
\begin{equation}
 \kappa_{n,Q}=\eta_{n,Q}
 =\frac{W(n)+sW(k-1)-(s+1)W(k)}{s(s+1)}.
 \label{eq:kappa-closed}
\end{equation}
In particular,
\begin{equation}
 \kappa_{n,Q}\geq Q^2.
 \label{eq:kappa-bound}
\end{equation}
For every integer $1\leq w\leq n$,
\begin{equation}
 W(w)\geq W(k)-a_{n,Q}(w-k)+\kappa_{n,Q}(w-k)(w-k+1).
 \label{eq:minorant}
\end{equation}
\end{proposition}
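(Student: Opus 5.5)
The argument works directly with the second divided differences of $W$ relative to the node pair $\{k-1,k\}$. For integer $w\notin\{k-1,k\}$ put
\[
 F(w)\coloneqq\frac{W(w)-W(k)+a_{n,Q}(w-k)}{(w-k)(w-k+1)} .
\]
The numerator is $W$ minus the linear interpolant of $W$ through $k-1$ and $k$. The denominator is the product $(w-k)(w-k+1)$ of the distances to those nodes. Hence $F(w)$ is exactly the second divided difference $W[k-1,k,w]$. Recall from \eqref{eq:W-complete-monotone} that $W$ is strictly completely monotone on $\{1,\dots,n\}$. In particular $W$ is strictly convex, so $F(w)>0$.

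To show that the minimum is attained uniquely at $w=n$, I will use the recurrence $W[k-1,k,w]-W[k-1,k,w']=(w-w')\,W[k-1,k,w,w']$. The third divided difference of a function with $-\nabla^3W>0$ is strictly negative. Therefore $F$ is strictly decreasing along the integers in the order obtained by placing $w$ anywhere in $\{1,\dots,n\}\setminus\{k-1,k\}$. This includes comparisons across the node pair, since divided differences are symmetric and the Newton formula is valid for any node order.

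Concretely, strict complete monotonicity gives $W[x_0,\dots,x_3]<0$ for distinct integer nodes. Then for $w<w'$ we get $F(w)>F(w')$. Consequently $F$ is minimised exactly at the largest admissible node $w=n$; note $n>k$ because $n=Qs+1>1+s(Q-1)=k$. Substituting $w=n$, with $n-k=s$ and $n-k+1=s+1$, gives
\[
 \eta_{n,Q}=\frac{W(n)-W(k)+a_{n,Q}s}{s(s+1)}=\frac{W(n)+sW(k-1)-(s+1)W(k)}{s(s+1)} .
\]
This is \eqref{eq:kappa-closed} for $\eta$.

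Next I must show $a_{n,Q}\geq\eta_{n,Q}$, so that $\kappa_{n,Q}=\eta_{n,Q}$. Write $a_{n,Q}=-W[k-1,k]$. A limiting form of the same monotonicity applies: the first divided difference is compared with the second via $W[k-1,k]-W[k-1,n]=(k-n)\,W[k-1,k,n]$. This gives $-W[k-1,k]=-W[k-1,n]+s\,\eta$. Since $-W[k-1,n]>0$ by strict decrease of $W$, we have $a_{n,Q}>s\eta\geq\eta$.

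The minorant \eqref{eq:minorant} then follows. For $w\in\{k-1,k\}$ it is an equality, because both sides agree and the quadratic term vanishes. For the other $w$, the product $(w-k)(w-k+1)>0$ at integers outside the two nodes, so $F(w)\geq\kappa$ rearranges to the inequality.

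The main obstacle is the explicit lower bound $\kappa_{n,Q}\geq Q^2$. My plan is to expand $\eta$ by Newton's formula through the nodes $k-1,k,k+1,\dots,n$. Because every third and higher difference of $W$ has a fixed sign, $W[k-1,k,n]$ is a positive combination of differences. This bounds it below by the integral representation. Using \eqref{eq:W-complete-monotone} and the Peano kernel for divided differences,
\[
 W[k-1,k,n]=\frac1{2\pi}\int\alpha_Q^{\,n-1}\,\bigl[\,\rho^{x}\,\bigr]_{k-2,k-1,n-1}\,d\varphi,\qquad \rho\coloneqq\beta/\alpha_Q .
\]
Here $[\,\cdot\,]_{k-2,k-1,n-1}$ denotes the second divided difference in $x$ at those nodes. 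Each integrand is a nonnegative second divided difference of the exponential $\rho^x$.

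A cruder route may be easier. The quantity $W(n)=\|b^{n-1}\|_2^2=\binom{2n-2}{n-1}$ grows fastest, so the bound should come from the $W(n)$ term dominating $(s+1)W(k)$. I would first try checking $\eta\geq Q^2$ through this dominance, using the coefficient-norm form \eqref{eq:W-norm} to bound $W(k)$ and $W(k-1)$ crudely. If that fails, I would fall back on citing \cite[Proposition~4.3]{Zabokritskiy2026Stability}, whose classical estimate is for the same potential.
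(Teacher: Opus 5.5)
\textbf{Gap: the bound $\kappa_{n,Q}\geq Q^2$ is not proved.}

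Your argument for the closed form, the unique minimizer at $w=n$, $\kappa_{n,Q}=\eta_{n,Q}$, and the minorant \eqref{eq:minorant} is correct. It is essentially the paper's argument in divided-difference language. Your identity $F(w)-F(n)=(w-n)\,W[k-1,k,w,n]$ is the paper's interpolation error $W(w)-H(w)$ at the nodes $k-1,k,n$, divided by $(w-k)(w-k+1)$. The negativity of third divided differences at arbitrary distinct integer nodes is asserted rather than shown. It follows most directly, as in the paper, from $f_\varphi'''=(\ln\rho)^3f_\varphi<0$ in the integral representation.

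The problem is \eqref{eq:kappa-bound}; neither of your routes reaches it. The Peano-kernel representation only exhibits $W[k-1,k,n]$ as an average of nonnegative quantities. That gives $\eta_{n,Q}\geq0$, with no constant.

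The ``cruder route'' rests on a false premise. $W$ is strictly decreasing on $\{1,\dots,n\}$, as you yourself use when showing $a_{n,Q}>s\eta$. So $W(n)=\binom{2n-2}{n-1}$ is the \emph{smallest} value of $W$, not the fastest-growing one, and $W(n)-(s+1)W(k)<0$. The numerator $W(n)+sW(k-1)-(s+1)W(k)$ is positive only because of the slope term $s\bigl(W(k-1)-W(k)\bigr)$. At five qubits, for instance, $W(5)=70$ while $2W(4)=220$. No dominance estimate based on $W(n)$ can produce $Q^2$, and falling back on a citation leaves this step unproved.

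\textbf{The missing ingredient} is a uniform lower bound on the consecutive second differences $C_w\coloneqq W(w+2)-2W(w+1)+W(w)$, for $1\leq w\leq n-2$.

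On $|z|=1$ one has $|u|^2-|b|^2=Qz^{-1}R_Q(z)$, where $R_Q(z)=z^2+(Q-2)z+1$. Writing $W$ through \eqref{eq:W-norm} and applying Parseval gives $C_w=\frac1{2\pi}\int|u|^{2(n-w-2)}|b|^{2(w-1)}\bigl(|u|^2-|b|^2\bigr)^2\,d\varphi=Q^2\|u^{n-w-2}b^{w-1}R_Q\|_2^2$. The polynomial $u^{n-w-2}b^{w-1}R_Q$ has constant coefficient $1$ and leading coefficient of modulus $(Q-1)^{n-w-2}\geq1$ in degree $n-1\geq1$. Hence $C_w\geq2Q^2$.

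In your integral language, the second difference of $\alpha_Q^{n-1}\rho^{x-1}$ at $w$ is $\alpha_Q^{n-w-2}\beta^{w-1}(\alpha_Q-\beta)^2$. The point is to recognize $(\alpha_Q-\beta)^2=Q^2|R_Q(e^{i\varphi})|^2$ and return to coefficients.

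Telescoping from $a_{n,Q}=W(k-1)-W(k)$ then gives $W(n)-W(k)+a_{n,Q}s=\sum_{h=0}^{s-1}(s-h)C_{k-1+h}\geq Q^2s(s+1)$, that is, $\eta_{n,Q}\geq Q^2$. Equivalently, in your Newton/Peano framework, $W[k-1,k,n]$ is a convex combination of the consecutive second divided differences $C_j/2$ for $k-1\leq j\leq n-2$. The bound $C_j\geq2Q^2$ is exactly what that representation needs to finish.
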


The exact formula is the $Q\geq4$ case of
\cite[Proposition~4.3, Eq.~(45)]{Zabokritskiy2026Stability}.
Appendix~\ref{app:coefficient} includes its proof: comparing $W$ with its
quadratic interpolant at $k-1,k,n$ places the minimum at the endpoint, so
the coefficient can be evaluated from three values of $W$.

Inserting the scalar minorant into the moment identities gives the desired
stability bound for arbitrary projectors.

\begin{theorem}[Arbitrary-projector stability]
\label{thm:main}
Assume \eqref{eq:admissible}, and for $1\leq w\leq n$ let
\[
 \rho(w)\coloneqq W(w)-W(k)+a_{n,Q}(w-k)-\kappa_{n,Q}(w-k)(w-k+1)\geq0
\]
be the slack of the minorant \eqref{eq:minorant}.  Every rank-$K$ orthogonal
projector $P$ on $(\mathbb C^q)^{\otimes n}$ has the exact nonnegative
remainder
\begin{equation}
 \cD_Q(P)-\delta_{n,Q}
 =\frac{K^2}{d}\Bigl(\sum_{w=1}^n\rho(w)A_w
 +\kappa_{n,Q}\sum_{w=1}^n(w-k)^2A_w+(a_{n,Q}-\kappa_{n,Q})kB_1\Bigr),
 \label{eq:main-remainder}
\end{equation}
with $A_w=A_w(P)$ and $B_1=B_1(P)$, and it satisfies
\begin{equation}
 \begin{aligned}
 \cD_Q(P)-\delta_{n,Q}
 &\geq c_{n,Q}\Phi_1(P)\geq\Phi_1(P),\\
 c_{n,Q}&\coloneqq\frac{\kappa_{n,Q}}{Q^2}\geq1.
 \end{aligned}
 \label{eq:main-stability}
\end{equation}
Furthermore,
\begin{equation}
 \cD_Q(P)=\delta_{n,Q}\quad\Longleftrightarrow\quad \Phi_1(P)=0.
 \label{eq:main-equality}
\end{equation}
\end{theorem}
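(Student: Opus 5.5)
The plan is to start from the spectral identity \eqref{eq:D-spectral}, $\cD_Q(P)=(K^2/d)\sum_{w\geq1}A_wW(w)$, and substitute the pointwise decomposition supplied by the slack of the minorant,
\[
 W(w)=W(k)-a_{n,Q}(w-k)+\kappa_{n,Q}(w-k)(w-k+1)+\rho(w).
\]
Summing against $A_w$, the constant term gives $W(k)\sum_{w\geq1}A_w=(V-1)W(k)$ by \eqref{eq:mass}; after the prefactor $K^2/d$ this is exactly $\delta_{n,Q}$. Next write $(w-k)(w-k+1)=(w-k)^2+(w-k)$. The linear terms then combine to
\[
 (\kappa_{n,Q}-a_{n,Q})\sum_{w\geq1}(w-k)A_w=(a_{n,Q}-\kappa_{n,Q})kB_1
\]
by \eqref{eq:first-moment}. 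The quadratic term stays as $\kappa_{n,Q}\sum(w-k)^2A_w$. This gives \eqref{eq:main-remainder} directly, with no further input.

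For nonnegativity, each of the three terms is checked separately. First, $\rho\geq0$ by \eqref{eq:minorant} and $A_w\geq0$. Second, $\kappa_{n,Q}>0$. Third, $a_{n,Q}-\kappa_{n,Q}\geq0$ by the definition $\kappa=\min\{a,\eta\}$ in \eqref{eq:kappa-def}, and $B_1\geq0$.

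For \eqref{eq:main-stability}, drop the first and third terms and use the second expression in \eqref{eq:Phi-one-moments}. This gives $(K^2/d)\kappa\sum(w-k)^2A_w=(\kappa/Q^2)\Phi_1(P)$, and $c_{n,Q}\geq1$ follows from \eqref{eq:kappa-bound}.

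For \eqref{eq:main-equality}, the forward direction follows from the stability inequality: if $\cD_Q=\delta$, then $\Phi_1(P)\leq0$, hence $\Phi_1(P)=0$. The converse is the main obstacle. From $\Phi_1=K(QB_1+2B_2)=0$ we get $B_1=B_2=0$, so the third term vanishes. Then \eqref{eq:second-moment} gives $\sum(w-k)^2A_w=0$, which forces $A_w=0$ for $w\neq k$, so the second term vanishes too. It remains to show $\sum\rho(w)A_w=0$, i.e.\ that $\rho(k)=0$. This is immediate from the definition: at $w=k$ both the linear and quadratic pieces vanish, so $\rho(k)=W(k)-W(k)=0$. (One also has $\rho(k-1)=0$, though it is not needed.) Hence all three terms vanish.

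The only point needing care is confirming that $\kappa=\eta$ does not break the sign of $a-\kappa$. Whichever of $a,\eta$ is smaller, the definition of $\kappa$ as their minimum guarantees $a-\kappa\geq0$, and Proposition~\ref{prop:kappa} supplies the remaining facts ($\rho\geq0$ and $\kappa\geq Q^2$).
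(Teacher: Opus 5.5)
Your proof is correct and follows the paper's argument essentially step for step. You substitute the minorant decomposition into the spectral identity and use \eqref{eq:mass} and \eqref{eq:first-moment} to obtain the exact remainder. You then drop the first and third terms and apply the second form of \eqref{eq:Phi-one-moments} for the stability bound, and finally read off the equality case. The only cosmetic difference is in the direction $\Phi_1(P)=0\Rightarrow\cD_Q(P)=\delta_{n,Q}$: you check that each remainder term vanishes (using $\rho(k)=0$), whereas the paper substitutes the enumerator concentrated at $w=k$ back into \eqref{eq:D-spectral}.
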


\begin{proof}
Write
$S_2\coloneqq\sum_{w=1}^n(w-k)^2A_w$.  By \eqref{eq:D-spectral},
\eqref{eq:mass}, and \eqref{eq:benchmark},
\[
 \cD_Q(P)-\delta_{n,Q}
 =\frac{K^2}{d}\sum_{w=1}^nA_w\bigl(W(w)-W(k)\bigr).
\]
Summing the defining identity
$W(w)-W(k)=\rho(w)-a_{n,Q}(w-k)+\kappa_{n,Q}(w-k)(w-k+1)$
against $A_w$ and using \eqref{eq:first-moment} gives
\eqref{eq:main-remainder}. Its three terms are nonnegative by
\eqref{eq:minorant}, \eqref{eq:AB-positive},
\eqref{eq:kappa-def}, and \eqref{eq:kappa-bound}.
Dropping the first and third terms and applying
\eqref{eq:Phi-one-moments} yields
\[
 \cD_Q(P)-\delta_{n,Q}
 \geq\frac{K^2}{d}\kappa_{n,Q}S_2
 =\frac{\kappa_{n,Q}}{Q^2}\Phi_1(P).
\]
Together with \eqref{eq:kappa-bound}, this proves
\eqref{eq:main-stability}.

If $\Phi_1=0$, then $S_2=0$.  Positivity of $A_w$ forces its nonzero-weight
mass to be supported at $w=k$, and \eqref{eq:D-spectral} gives
$\cD_Q=\delta_{n,Q}$.  The converse follows immediately from
\eqref{eq:main-stability}.
\end{proof}

\begin{remark}[Concordance with the classical stability inequality]
\label{rem:one-error-companion-concordance}
For stabilizer projectors, Theorem~\ref{thm:main} agrees exactly with the
lower-bound side of the classical two-sided inequality in
\cite[Theorem~4.2]{Zabokritskiy2026Stability}.  Set the alphabet size in that
paper to $Q=q^2$, take the phase-free normalizer code of size
$dK$, and multiply
its normalized discrepancy by $dK^2$; the result is
\eqref{eq:main-stability}, with the same coefficient $c_{n,Q}$.  The imported
ingredient is the scalar quadratic minorant, reproduced in
Proposition~\ref{prop:kappa}; quantum MacWilliams feasibility lifts its lower
bound from stabilizer distance distributions to every rank-$K$ orthogonal
projector.
\end{remark}

\begin{remark}[Overlap of the two formal benchmarks]
\label{rem:QMDS-one-error-overlap}
Set $\Delta\coloneqq3$ and $n\coloneqq Q+1$.  Then
\[
 V=Q^2,\qquad K=q^{n-4},\qquad k=d_*=Q,
\]
and the formal MDS distribution has the single nonzero positive-weight
coordinate $A_Q^*=Q^2-1$.  Hence
\[
 \delta^{\mathrm{QMDS}}_{n,q,3}=\delta_{n,Q}.
\]
Here $\widehat h_2(2)=\mu_2/Q^2$.  With $s=1$ and $k=Q$ in
\eqref{eq:kappa-closed}, one has $\kappa_{n,Q}=\mu_2/2$, hence
$\widehat h_2(2)=2c_{n,Q}$.  Moreover
$\widehat h_2(1)=\mu_1/Q+\mu_2/Q\geq Q\widehat h_2(2)/2$, so with
$c_1(1)=Q$ and $c_1(2)=2$ the maximum in
\eqref{eq:QMDS-exact-Gram-constant} is attained at distance two and
$C_1^{\mathrm{Gram}}=1/c_{n,Q}$, the same radius-one Gram factor as
Theorem~\ref{thm:main}; at
$q=2$ this is $1/3$, shown sharp in Section~\ref{sec:five}.
\end{remark}

\section{Equality and nondegenerate perfect codes}
\label{sec:equality}

The discrepancy bound has a concrete equality case: the radius-one error
images tile the physical space.  The following short argument identifies
this tiling with the nondegenerate Knill--Laflamme condition.

Fix a rank-$K$ orthogonal projector $P$ at the parameters
\eqref{eq:admissible}.  Then $KV=d$, so $\alpha_1=1$.
For $a\in\cB_1$ let $P_a\coloneqq U_aPU_a^\dagger$.

\begin{theorem}[Exact equality characterization]
\label{thm:equality}
Under \eqref{eq:admissible}, the following are equivalent:
\begin{enumerate}
\item $\cD_Q(P)=\delta_{n,Q}$;
\item $\Phi_1(P)=0$;
\item $\sum_{a\in\cB_1}P_a=I$;
\item the subspaces $U_a\operatorname{ran}P$, $a\in\cB_1$, are pairwise
orthogonal and decompose $\Hphys$;
\item for all $a,b\in\cB_1$,
\begin{equation}
 P U_a^\dagger U_bP=\mathbf 1_{a=b}P.
 \label{eq:nondegenerate-KL}
\end{equation}
\end{enumerate}
Thus equality holds exactly for nondegenerate perfect one-error quantum codes,
that is, pure codes of distance at least $3$
(Definition~\ref{def:pure-enumerator-distance}) whose rank $K=d/V$ meets the
quantum Hamming bound with equality.
\end{theorem}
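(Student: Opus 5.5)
The plan is to prove the chain $1\Leftrightarrow2\Rightarrow3\Rightarrow4\Rightarrow5\Rightarrow2$, using the earlier results for the analytic step and elementary operator facts for the rest.

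\emph{Step $1\Leftrightarrow2$.} This is exactly \eqref{eq:main-equality} of Theorem~\ref{thm:main}.

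\emph{Step $2\Rightarrow3$.} Since $KV=d$, we have $\alpha_1(P)=1$. Hence $\Phi_1(P)=\|T_1(P)-I\|_{\HS}^2$, and its vanishing gives $T_1(P)=\sum_{a\in\cB_1}P_a=I$.

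\emph{Step $3\Rightarrow4$.} Each $P_a$ is a rank-$K$ projector, and there are $V$ of them. The sum $\sum_aP_a=I$ is a sum of projectors equal to a projector. Compressing it by $P_b$ gives $\sum_{a\neq b}P_bP_aP_b=0$, a sum of positive operators, so each term vanishes. Then $P_aP_b=0$ because $\|P_aP_b\|_{\HS}^2=\Tr(P_bP_aP_b)$. So the ranges are pairwise orthogonal, and since they sum to $I$ they decompose $\Hphys$. (Counting dimensions, $KV=d$, is consistent with this.)

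\emph{Step $4\Rightarrow5$.} If $U_a\operatorname{ran}P\perp U_b\operatorname{ran}P$ for $a\neq b$, then $PU_a^\dagger U_bP=0$. For $a=b$ the left side is $P$ because $U_a$ is unitary; the phase conventions make $U_a^\dagger U_a=I$ exactly.

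\emph{Step $5\Rightarrow2$.} Condition \eqref{eq:nondegenerate-KL} says the Gram matrix $G_1=Y_1^\dagger Y_1$ equals $I_{KV}$. By the Gram identity after \eqref{eq:packing-defect}, the packing defect $\Psi_1(P)=\|G_1-I_{KV}\|_{\HS}^2$ is then zero. Since $KV=d$, we have $\Psi_1=\Phi_1$, so $\Phi_1(P)=0$. Alternatively, \eqref{eq:Phi-one-moments} gives $\Phi_1=K(QB_1+2B_2)$, and condition~5 forces $PU_cP=0$ for every $c$ with $1\leq\wt(c)\leq2$. Indeed, every such $c$ is a difference $b-a$ with $a,b\in\cB_1$, and $U_a^\dagger U_b$ is a phase times $U_{b-a}$. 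Hence $B_1=B_2=0$.

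\emph{Final sentence of the statement.} Condition~5 gives $PU_cP=0$ for $1\leq\wt(c)\leq2$, by the difference argument above. This is the pure distance-$\geq3$ condition of Definition~\ref{def:pure-enumerator-distance}, and $K=d/V$ is Hamming equality. Conversely, a pure code of distance at least $3$ gives $PU_cP=0$ at weights $1,2$, which yields condition~5 by the same difference argument.

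The only step needing real care is $3\Rightarrow4$, namely that a sum of projectors equal to the identity forces mutual orthogonality. The compression-positivity argument above handles it.
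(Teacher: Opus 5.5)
Your proof is correct and follows the paper's argument: Theorem~\ref{thm:main} for $1\Leftrightarrow2$, the positivity argument $\Tr(P_aP_b)=\|P_aP_b\|_{\HS}^2$ for $3\Rightarrow4$, the Knill--Laflamme reformulation, and the difference-of-labels argument for the final sentence. The only difference is cosmetic: you close the cycle with $5\Rightarrow2$ via $\Psi_1=\|G_1-I_{KV}\|_{\HS}^2=\Phi_1$ (or $\Phi_1=K(QB_1+2B_2)$), whereas the paper gets $4\Rightarrow3$ by rank counting ($VK=d$) and $4\Leftrightarrow5$ by conjugation.
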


\begin{proof}
The first two statements are equivalent by Theorem~\ref{thm:main}, and the
second and third by definition.  Suppose the third holds.  For fixed $a$,
\[
 0=\Tr\bigl(P_a(I-P_a)\bigr)
  =\sum_{b\ne a}\Tr(P_aP_b).
\]
Every summand is nonnegative, because
$\Tr(P_aP_b)=\|P_aP_b\|_{\HS}^2$.  Hence $P_aP_b=0$ for $a\ne b$.  Conversely,
$V$ mutually orthogonal rank-$K$ projections have total rank $VK=d$, so their
sum is $I$.  Conjugating the orthogonality relation by the Weyl operators gives
\eqref{eq:nondegenerate-KL}, the nondegenerate Knill--Laflamme condition for
the error set $\cB_1$ \cite{KnillLaflamme1997Theory}.  Every label of
weight $1$ or $2$ is $b-a$ for some $a\neq b$ in $\cB_1$, and
$U_a^\dagger U_b$ is a phase times $U_{b-a}$, so
\eqref{eq:nondegenerate-KL} says exactly that $PU_cP=0$ for
$1\leq\wt(c)\leq2$; under \eqref{eq:admissible}, $K=d/V$.
\end{proof}

\begin{remark}[Degeneracy]
The zero set of $\Phi_1$ consists of orthogonal tilings by the individual Pauli error
images.  By \eqref{eq:nondegenerate-KL}, a degenerate one-error-correcting
code of this rank, if one exists, has $PU_a^\dagger U_bP=c_{ab}P$ with some
$c_{ab}\ne0$ for $a\ne b$, hence $\Phi_1>0$, and does not attain the
benchmark; for binary stabilizer codes,
\cite[Corollary~2]{NemecTansuwannont2023HammingLike} excludes such codes
(Appendix~\ref{app:stabilizer}).  Away from $KV=d$, one must instead distinguish the covering defect
$\Phi_1$ from the packing defect $\Psi_1$ in \eqref{eq:packing-defect}.
\end{remark}

\section{Gram rigidity and transpose/Petz recovery}
\label{sec:rigidity}

We use the unhalved diamond norm, as in Lemma~\ref{lem:polar-linear-diamond}.
For a fixed error channel, QEC Gram matrices and transpose/Petz recovery are
standard tools in approximate quantum error correction
\cite{BenyOreshkov2010Approximate,ZhengEtAl2024NearOptimal,
NgMandayam2010Transpose,ElimelechAlbertBarg2026ErrorSet,
ElimelechAlbertBarg2026UniversalRecovery}.  Square-root
measurements and least-squares constructions provide the corresponding
finite-dimensional polar picture \cite{EldarForney2001SquareRoot}.  This
section converts Theorem~\ref{thm:main} into explicit control of the Gram
matrix and the transpose/Petz recovery for the uniform radius-one Weyl error
set.  Here the rank is fixed by arithmetic one-error
Hamming equality, so the error-image map is square.  This gives an exact
identity between the covering defect $\Phi_1(P)$ and the squared Gram defect
$\|G-I_{VK}\|_{\HS}^2$, and the estimates apply to
every projector of the prescribed rank.

Let $J:\mathbb C^K\to\Hphys$ be an isometry with $JJ^\dagger=P$, and let
$\operatorname{Enc}(\rho)\coloneqq J\rho J^\dagger$ denote the encoding channel.  At
Hamming equality, index the standard basis of $\mathbb C^V$ by
$\cB_1$, and define the square error-image map
\begin{equation}
 \begin{aligned}
 Y&:\mathbb C^V\otimes\mathbb C^K\longrightarrow\Hphys,\\
 Y(|a\rangle\otimes x)&\coloneqq U_aJx,\qquad a\in\cB_1.
 \end{aligned}
 \label{eq:Y-def}
\end{equation}
Its block Gram matrix is
\begin{equation}
G\coloneqq Y^\dagger Y=
 \bigl[J^\dagger U_a^\dagger U_bJ\bigr]_{a,b\in\cB_1}.
 \label{eq:Gram}
\end{equation}
For the uniform error channel with Kraus operators
$E_a\coloneqq V^{-1/2}U_aJ$, the standard QEC matrix is
\begin{equation}
 M_{\mathrm{QEC}}=\frac1V G.
 \label{eq:QEC-matrix}
\end{equation}
Here $G$ uses the tensor-factor order error label $\otimes$ logical;
exchanging the factors gives the convention used in
\cite{ElimelechAlbertBarg2026ErrorSet,ZhengEtAl2024NearOptimal}.

\begin{theorem}[Operator rigidity]
\label{thm:operator-rigidity}
Let $\lambda_1,\ldots,\lambda_d$ be the eigenvalues of $G$, and put
\begin{equation}
 \varepsilon\coloneqq\sqrt{\frac{d-1}{d}\Phi_1(P)}.
 \label{eq:epsilon}
\end{equation}
Then
\begin{gather}
 \begin{aligned}
 \Phi_1(P)
 &=\|YY^\dagger-I_{\Hphys}\|_{\HS}^2\\
 &=\|G-I_{VK}\|_{\HS}^2\\
 &=\sum_{i=1}^d(\lambda_i-1)^2,
 \end{aligned}
 \label{eq:Gram-Phi}\\
 |\lambda_i-1|\leq\varepsilon\quad(1\leq i\leq d),
 \label{eq:eigen-bound}\\
 \begin{aligned}
 \max\{0,1-\varepsilon\}I_{VK}
 &\leq G\leq(1+\varepsilon)I_{VK},\\
 \max\{0,1-\varepsilon\}I_{\Hphys}
 &\leq YY^\dagger\leq(1+\varepsilon)I_{\Hphys}.
 \end{aligned}
 \label{eq:frame-bounds}
\end{gather}
If $\varepsilon<1$, then
\begin{equation}
 \operatorname{cond}_2(Y)\leq
 \sqrt{\frac{1+\varepsilon}{1-\varepsilon}}.
 \label{eq:condition}
\end{equation}

Let $\widetilde Y:\mathbb C^V\otimes\mathbb C^K\to\Hphys$ be a unitary polar factor of
$Y$, and let $\mathsf U(d)$ denote the set of unitary maps between these two
$d$-dimensional spaces.  By the Fan--Hoffman nearest-unitary theorem
\cite{FanHoffman1955Metric}, it is a nearest unitary in Hilbert--Schmidt norm, and
\begin{align}
 \operatorname{dist}_{\HS}(Y,\mathsf U(d))^2
 &=\|Y-\widetilde Y\|_{\HS}^2
 =\sum_i(\sqrt{\lambda_i}-1)^2\label{eq:polar-exact}\\
 &=2d-2\Tr\sqrt G.\nonumber
\end{align}
Moreover,
\begin{equation}
 \begin{aligned}
 \frac{\Phi_1(P)}{(1+\sqrt{1+\varepsilon})^2}
 &\leq\|Y-\widetilde Y\|_{\HS}^2\\
 &\leq
 \frac{\Phi_1(P)}
 {(1+\sqrt{\max\{0,1-\varepsilon\}})^2}\\
 &\leq\Phi_1(P).
 \end{aligned}
 \label{eq:polar-bounds}
\end{equation}
\end{theorem}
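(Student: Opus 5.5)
The plan is to work entirely with the square map $Y$ of \eqref{eq:Y-def}. Since $KV=d$, $Y$ is a linear map between two $d$-dimensional spaces. By \eqref{eq:Y-def}, $YY^\dagger=\sum_{a\in\cB_1}U_aJJ^\dagger U_a^\dagger=T_1(P)$, and at Hamming equality $\alpha_1(P)=KV/d=1$. So $\Phi_1(P)=\|T_1(P)-I\|_{\HS}^2=\|YY^\dagger-I_{\Hphys}\|_{\HS}^2$ directly from \eqref{eq:Phi-D-def}. The operators $YY^\dagger$ and $G=Y^\dagger Y$ have the same nonzero spectrum with multiplicities. Both are square of size $d$, so their full spectra $\lambda_1,\ldots,\lambda_d$ (zeros included) coincide. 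This gives the two remaining equalities in \eqref{eq:Gram-Phi}. It is also consistent with the packing-defect identity \eqref{eq:packing-defect}, since $\Psi_1=\Phi_1$ when $KV=d$.

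For the eigenvalue bound \eqref{eq:eigen-bound}, I will use the trace constraint $\Tr G=KV=d$, so that $\sum_i(\lambda_i-1)=0$. Maximizing a single $|\lambda_j-1|$ subject to $\sum_i(\lambda_i-1)=0$ and $\sum_i(\lambda_i-1)^2=\Phi_1$ is a standard Samuelson-type estimate. By Cauchy--Schwarz on the other $d-1$ deviations,
\[
(\lambda_j-1)^2=\Bigl(\sum_{i\neq j}(\lambda_i-1)\Bigr)^2\leq(d-1)\bigl(\Phi_1-(\lambda_j-1)^2\bigr),
\]
which gives $(\lambda_j-1)^2\leq\frac{d-1}{d}\Phi_1=\varepsilon^2$. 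The frame bounds \eqref{eq:frame-bounds} then follow for both $G$ and $YY^\dagger$, together with positivity $\lambda_i\geq0$. The condition-number bound \eqref{eq:condition} is the ratio of the extreme singular values $\sqrt{\lambda_i}$.

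For the polar part, I will write $Y=\widetilde Y G^{1/2}$ with $\widetilde Y$ unitary; this is possible because $Y$ is square, and in the singular case the partial isometry is completed. Then $\|Y-\widetilde Y\|_{\HS}^2=\Tr(G^{1/2}-I)^2=\sum_i(\sqrt{\lambda_i}-1)^2$. Expanding the square and using $\Tr G=d$ gives $2d-2\Tr\sqrt G$. Nearest-unitary optimality is the cited Fan--Hoffman theorem. If a direct argument is wanted, for any unitary $W$ one has $\|Y-W\|_{\HS}^2=2d-2\operatorname{Re}\Tr(W^\dagger Y)$, and $|\Tr(W^\dagger Y)|\leq\|Y\|_1=\Tr\sqrt G$, so $\widetilde Y$ attains the minimum.

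Finally, for \eqref{eq:polar-bounds} I will use the factorization
\[
(\sqrt{\lambda_i}-1)^2=\frac{(\lambda_i-1)^2}{(\sqrt{\lambda_i}+1)^2}.
\]
By \eqref{eq:eigen-bound}, $\sqrt{\lambda_i}\in[\sqrt{\max\{0,1-\varepsilon\}},\sqrt{1+\varepsilon}]$, and summing gives both sides; the last inequality is trivial. No step is a genuine obstacle. The points needing care are the spectral identification of $G$ with $YY^\dagger$, which uses squareness $KV=d$ and is where the Hamming hypothesis enters, and the use of $\Tr G=d$ in the Samuelson bound to get the factor $(d-1)/d$ rather than $1$.
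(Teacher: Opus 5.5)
Your proposal is correct and follows the paper's proof step by step: $YY^\dagger=T_1(P)$ with full spectra matched by squareness, the Cauchy--Schwarz (Samuelson) bound from $\Tr G=d$, the polar factorization, and the identity $(\sqrt\lambda-1)^2=(\lambda-1)^2/(\sqrt\lambda+1)^2$ bounded through the eigenvalue window. The only addition is your self-contained trace-norm argument $|\Tr(W^\dagger Y)|\leq\Tr\sqrt G$ for nearest-unitary optimality, which is valid and stands in for the paper's citation of Fan--Hoffman.
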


\begin{proof}
By \eqref{eq:Y-def}, $YY^\dagger=T_1(P)$; the nonzero spectra of
$YY^\dagger$ and $Y^\dagger Y$ coincide, including multiplicity because both
are $d\times d$.  This proves \eqref{eq:Gram-Phi}.  Also
$\sum_i\lambda_i=\Tr G=KV=d$, so $x_i\coloneqq\lambda_i-1$ satisfy $\sum_i x_i=0$.
For each $i$,
\[
 x_i^2=\left(\sum_{j\ne i}x_j\right)^2
 \leq(d-1)\sum_{j\ne i}x_j^2
 =(d-1)(\Phi_1-x_i^2),
\]
which gives \eqref{eq:eigen-bound}.  The frame bounds and condition-number
bound follow.

The nearest-unitary formula is the standard singular-value form of the polar
decomposition, with an arbitrary unitary completion if $Y$ is singular.  For
each $\lambda\geq0$,
\[
 (\sqrt\lambda-1)^2=
 \frac{(\lambda-1)^2}{(\sqrt\lambda+1)^2}.
\]
Apply \eqref{eq:eigen-bound} to the denominator and sum to obtain
\eqref{eq:polar-bounds}.
\end{proof}

Lemma~\ref{lem:polar-decoder} applies with $\cE=\cB_1$ and $v=V$.  Since
$KV=d$, take the flag embedding to be the identity and
$D\coloneqq\widetilde Y^\dagger:\Hphys\to\mathbb C^V\otimes\mathbb C^K$.
By Lemma~\ref{lem:polar-decoder}(2), on the support of $T_1(P)=YY^\dagger$
its block rows satisfy
\begin{equation}
 (\langle a|\otimes I_K)D
 =J^\dagger U_a^\dagger T_1(P)^{+1/2}.
 \label{eq:Petz-blocks}
\end{equation}
Thus tracing out the error-label register gives precisely the transpose/Petz
recovery for the uniform radius-one error channel; the unitary completion of
$\widetilde Y$ specifies an arbitrary channel completion off that support
\cite{BenyOreshkov2010Approximate,ZhengEtAl2024NearOptimal,
ElimelechAlbertBarg2026ErrorSet}.

For channels with Kraus operators in
$\operatorname{span}\{U_a:a\in\cB_1\}$, coherent or stochastic,
\cite[Sections~3--4]{ElimelechAlbertBarg2026UniversalRecovery} establishes
universal recovery guarantees.  The following corollary bounds the diamond
error and infidelity of every such channel under the fixed polar recovery.
For the Pauli mixtures $\cN_{1,p}$, it adds the factor $p_\ast$, for every
probability vector and with the same recovery; in the
entanglement-fidelity bound the factor $p_\ast$ can also be read from
\cite[Eqs.~(26)--(27)]{ElimelechAlbertBarg2026UniversalRecovery}.

\begin{corollary}[Recovery under nonuniform radius-one Pauli noise]
\label{cor:recovery}
Under \eqref{eq:admissible}, let $p=(p_a)_{a\in\cB_1}$ be a probability distribution and put
$p_\ast\coloneqq\max_a p_a$.  Define
\begin{equation}
 \cN_{1,p}(\sigma)\coloneqq\sum_{a\in\cB_1}p_aU_a\sigma U_a^\dagger,
 \qquad
 \cR(\sigma)\coloneqq\Tr_{\mathbb C^V}(D\sigma D^\dagger).
 \label{eq:noise-recovery}
\end{equation}
Put $g\coloneqq\cD_Q(P)-\delta_{n,Q}$ and $c\coloneqq c_{n,Q}$, and let
$\Mis_p(D)$ denote the decoder mismatch of Lemma~\ref{lem:polar-decoder}(1)
for $\cE=\cB_1$ and $\iota=\mathrm{id}$.
Every channel $\cN$ whose Kraus operators lie in
$\operatorname{span}\{U_a:a\in\cB_1\}$ satisfies
\[
 \|\cR\circ\cN\circ\operatorname{Enc}-\id_K\|_\diamond
 \leq2\min\left\{1,\frac gc\right\},
 \qquad
 1-\Fent(I_K/K,\cR\circ\cN\circ\operatorname{Enc})\leq\frac g{cK}.
\]
Moreover,
\begin{equation}
 \begin{aligned}
 \|\cR\circ\cN_{1,p}\circ\operatorname{Enc}-\id_K\|_\diamond
 &\leq2\min\{1,\Mis_p(D)\}\\
 &\leq2\min\left\{1,\frac{p_\ast g}{c}\right\}.
 \end{aligned}
 \label{eq:diamond-bound}
\end{equation}
For the maximally mixed logical input,
\begin{equation}
 1-\Fent(I_K/K,\cR\circ\cN_{1,p}\circ\operatorname{Enc})
 \leq\frac{\Mis_p(D)}K
 \leq\frac{p_\ast g}{cK}.
 \label{eq:weighted-Fe-bound}
\end{equation}

For the uniform distribution $p_a=1/V$, the recovery $\cR$ is the
transpose/Petz recovery described in \eqref{eq:Petz-blocks}.  In this case
\begin{equation}
 \Fent(I_K/K,\cR\circ\cN_{1,\mathrm{unif}}\circ\operatorname{Enc})
 =\frac1{K^2V}
 \left\|\Tr_{\mathbb C^K}\sqrt G\right\|_{\HS}^2.
 \label{eq:Fe-exact}
\end{equation}
The exact identity \eqref{eq:Fe-exact} is the uniform-channel fidelity formula
of \cite{ZhengEtAl2024NearOptimal}.  Moreover,
\eqref{eq:weighted-Fe-bound} specializes to
\begin{equation}
 \begin{aligned}
 1-\Fent(I_K/K,\cR\circ\cN_{1,\mathrm{unif}}\circ\operatorname{Enc})
 &\leq\frac{\|Y-\widetilde Y\|_{\HS}^2}{KV}\\
 &=\frac{\|Y-\widetilde Y\|_{\HS}^2}{d}\\
 &\leq\frac{g}{cd}.
 \end{aligned}
 \label{eq:Fe-bound}
\end{equation}
\end{corollary}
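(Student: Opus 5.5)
The corollary is mostly a specialization of Lemma~\ref{lem:polar-decoder} with $\cE=\cB_1$, $v=V$, $\iota=\id$ and $D=\widetilde Y^\dagger$. The Hamming-specific input is one chain of inequalities:
\[
 \Tr(G^{1/2}-I_{VK})^2=\|Y-\widetilde Y\|_{\HS}^2\leq\Phi_1(P)\leq g/c .
\]
The middle inequality is \eqref{eq:polar-bounds}, equivalently \eqref{eq:polar-decoder-error} combined with \eqref{eq:Gram-Phi}. The last inequality is \eqref{eq:main-stability}, since $\Phi_1(P)=\|G-I_{VK}\|_{\HS}^2$ at $KV=d$. I will establish this chain first and then feed it into each item of Lemma~\ref{lem:polar-decoder}.

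For a general channel $\cN$ with Kraus operators in $\operatorname{span}\{U_a:a\in\cB_1\}$, item~3 of Lemma~\ref{lem:polar-decoder} gives a diamond error at most $2\min\{1,\Tr(G^{1/2}-I)^2\}$ and an entanglement infidelity at most $K^{-1}\Tr(G^{1/2}-I)^2$. Inserting the chain yields the first pair of bounds.

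For the Pauli mixture $\cN_{1,p}$, item~1 gives the diamond bound $2\min\{1,\Mis_p(D)\}$, the infidelity bound $\min\{1,\Mis_p(D)/K\}\leq\Mis_p(D)/K$, and $\Mis_p(D)\leq p_\ast\|Y-\widetilde Y\|_{\HS}^2$. The chain then gives \eqref{eq:diamond-bound} and \eqref{eq:weighted-Fe-bound}.

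For the uniform distribution, item~2 together with \eqref{eq:Petz-blocks} identifies $\cR$ with the transpose/Petz recovery and gives the exact formula \eqref{eq:Fe-exact}. Setting $p_\ast=1/V$ in \eqref{eq:weighted-Fe-bound} gives $\|Y-\widetilde Y\|_{\HS}^2/(KV)$. Then $KV=d$ and the chain give \eqref{eq:Fe-bound}.

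I expect no real obstacle; the argument is bookkeeping. One point needs checking. Lemma~\ref{lem:polar-decoder} requires $K\mid d$ and $vK\leq d$, and here $vK=d$ exactly, so the unitary polar factor $\widetilde Y$ of Theorem~\ref{thm:operator-rigidity} coincides with the isometric completion used in the lemma. I also need to confirm that the mismatch $\Mis_p(D)$ defined with $\iota=\id$ is the same quantity as in item~1.
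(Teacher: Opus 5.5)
Your proposal is correct and follows the paper's proof essentially step for step. The paper uses the same chain $\Tr(G^{1/2}-I_{VK})^2=\|Y-\widetilde Y\|_{\HS}^2\leq\Phi_1(P)\leq g/c$ from Theorems~\ref{thm:operator-rigidity} and~\ref{thm:main}, feeds it into items~3, 1 and~2 of Lemma~\ref{lem:polar-decoder} with $\cE=\cB_1$, $\iota=\mathrm{id}$ and $D=\widetilde Y^\dagger$, and finishes the uniform case with $p_\ast=1/V$ and $KV=d$; the two points you flag are harmless, since $K=d/V$ divides $d$ and at $VK=d$ the isometric completion is a unitary polar factor, completed arbitrarily on the kernel if $G$ is singular.
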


\begin{proof}
The span-channel bounds follow from Lemma~\ref{lem:polar-decoder}(3),
\eqref{eq:polar-exact}, \eqref{eq:polar-bounds} and Theorem~\ref{thm:main}.
Lemma~\ref{lem:polar-decoder}(1), with the identity flag embedding, gives the
first inequalities in \eqref{eq:diamond-bound} and
\eqref{eq:weighted-Fe-bound}, and
$\Mis_p(D)\leq p_\ast\|Y-\widetilde Y\|_{\HS}^2$.
Theorems~\ref{thm:operator-rigidity} and~\ref{thm:main} give
$\|Y-\widetilde Y\|_{\HS}^2\leq\Phi_1(P)\leq g/c$, which proves the second
inequalities.  For uniform $p$, Lemma~\ref{lem:polar-decoder}(2) with $v=V$
and $M_{\mathrm{QEC}}=G/V$ gives \eqref{eq:Fe-exact}.  Finally,
Lemma~\ref{lem:polar-decoder}(1) with $p_\ast=1/V$ gives
$\Mis_p(D)\leq\|Y-\widetilde Y\|_{\HS}^2/V$, and
$KV=d$ gives \eqref{eq:Fe-bound}.
\end{proof}

Appendix~\ref{app:noise} extends these estimates to arbitrary stochastic
Pauli channels through their outside-ball probability.

For a prime-dimensional stabilizer projector, the covering defect
$\Phi_1(P)$ is $K$ times an integer syndrome-collision count.  The resulting alternative is
\(\Phi_1(P)=0\) or
\(\Phi_1(P)\geq K\max\{8,2(q-1)\}\), and
Theorem~\ref{thm:main} converts it into a discrepancy gap.
Appendix~\ref{app:stabilizer} develops this arithmetic refinement, including
a five-qubit example showing that the binary bound is sharp.  The next
section returns to arbitrary projectors and determines the largest admissible
coefficient in \eqref{eq:main-stability} for five qubits.

\section{The five-qubit perfect code}
\label{sec:five}

The five-qubit perfect code lets us test both the sharpness of the
discrepancy bound and its consequences for recovery.  A Pauli rotation
attains the coefficient three with positive covering defect, and the same family
admits an exact calculation of the recovered logical channel.

Let $P_*$ denote the projector of the $[[5,1,3]]_2$ perfect code
\cite{LaflammeEtAl1996Perfect}.  Its parameters are
\[
 \begin{gathered}
 q=2,\quad Q=4,\quad n=5,\quad d=32,\\
 K=2,\quad V=16,\quad k=4.
 \end{gathered}
\]
We write $\cD_4$ for $\cD_Q$ at $Q=4$ and $W_{5,4}$ for $W_{n,Q}$ at these values.  One choice of stabilizer generators is
\[
 \mathtt{XZZXI},\qquad \mathtt{IXZZX},\qquad \mathtt{XIXZZ},\qquad \mathtt{ZXIXZ}.
\]
Direct stabilizer enumeration gives
\begin{equation}
 \begin{aligned}
 A_{P_*}(z)&=1+15z^4,\\
 B_{P_*}(z)&
      =1+30z^3+15z^4+18z^5.
 \end{aligned}
 \label{eq:five-enumerators}
\end{equation}
The sixteen radius-one Pauli errors have pairwise orthogonal code images
that exhaust the physical space, so $\Phi_1=0$.

For $w=4$, the Krawtchouk values and their cumulative sums are
\[
 \begin{aligned}
 (K_0^{(5,4)},\ldots,K_5^{(5,4)})&=(1,-1,-6,14,-11,3),\\
 (L_0,\ldots,L_5)&=(1,0,-6,8,-3,0).
 \end{aligned}
\]
Thus $W_{5,4}(4)=1+0+36+64+9=110$, and
\begin{equation}
 \cD_4(P_*)=\delta_{5,4}
 =\frac{K^2}{d}(V-1)W_{5,4}(4)
 =\frac{825}{4}.
 \label{eq:five-benchmark}
\end{equation}
For the remainder of this section, write
\[
 g(P)\coloneqq\cD_4(P)-\frac{825}{4}
\]
for the discrepancy excess above this benchmark.
The following values determine the one-error coefficient at these parameters:
\[
 \begin{gathered}
 (W(1),W(2),W(3),W(4),W(5))\\
 =(21286,1118,246,110,70),
 \end{gathered}
\]
so $a_{5,4}=136$, $\kappa_{5,4}=48$, and
$c_{5,4}=3$.  Theorem~\ref{thm:main} with $c_{5,4}=3$ therefore gives
\[
 \cD_4(P)-\frac{825}{4}\geq3\Phi_1(P)
\]
for every rank-two projector on five qubits, with equality at the perfect code
when both sides vanish.

The coefficient in this inequality is attained away from the zero set.

\begin{proposition}[Sharpness of the five-qubit coefficient]
\label{prop:five-sharp}
Let $H\coloneqq\mathtt{IIIXX}$ and, for $\theta\in\mathbb R$, define
\begin{equation}
 P_\theta\coloneqq e^{-i\theta H}P_*e^{i\theta H},
 \qquad u\coloneqq\sin^2(2\theta).
 \label{eq:five-rotation}
\end{equation}
Then
\begin{equation}
 \Phi_1(P_\theta)=16u,
 \qquad
 \cD_4(P_\theta)-\frac{825}{4}=48u=3\Phi_1(P_\theta).
 \label{eq:five-sharp}
\end{equation}
Consequently, $3$ is the largest coefficient that can replace $c_{5,4}$ in
\eqref{eq:main-stability} for all rank-two projectors on five qubits.
Since $KV_1=d$, \eqref{eq:packing-defect} identifies $\Phi_1$ with the squared
radius-one Gram defect, so the Gram constant $C_1^{\mathrm{Gram}}=1/c_{5,4}=1/3$ of
Remark~\ref{rem:QMDS-one-error-overlap} is sharp as well.
\end{proposition}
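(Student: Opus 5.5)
The plan is to compute the two enumerators $A_w(P_\theta)$ and $B_1(P_\theta),B_2(P_\theta)$ exactly, and then read off both quantities in \eqref{eq:five-sharp} from Lemma~\ref{lem:moments} and the exact remainder \eqref{eq:main-remainder}.

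\emph{Step 1: the rotation acts on Pauli labels.} Write $e^{-i\theta H}=\cos\theta\,I-i\sin\theta\,H$. For a Pauli $U_a$ commuting with $H$, conjugation leaves $U_a$ fixed. For an anticommuting $U_a$,
\[
 e^{i\theta H}U_ae^{-i\theta H}=\cos(2\theta)U_a+i\sin(2\theta)HU_a .
\]
Hence $\Tr(P_\theta U_a)=\Tr(P_*e^{i\theta H}U_ae^{-i\theta H})$. This equals $\Tr(P_*U_a)$ in the commuting case. In the anticommuting case it equals $\cos(2\theta)\Tr(P_*U_a)+i\sin(2\theta)\Tr(P_*HU_a)$.

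Only stabilizer elements up to phase, namely the fifteen weight-four elements and the identity, have nonzero trace against $P_*$, with $|\Tr(P_*S)|=K=2$. For an anticommuting $U_a$, the elements $U_a$ and $HU_a$ cannot both lie in the stabilizer group $\mathcal S$, since $H\notin\mathcal S$ ($H$ has weight two). So no cross term arises, and
\[
 |\Tr(P_\theta U_a)|^2=(1-u)|\Tr(P_*U_a)|^2+u|\Tr(P_*HU_a)|^2 .
\]
Summing over labels of weight $w$ gives $A_w(P_\theta)$ in terms of $A_w(P_*)$ and the weight distribution of the anticommuting part of $H\mathcal S$.

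\emph{Step 2: the weight bookkeeping.} I will list the fifteen nonidentity stabilizer elements (cyclic shifts of $\mathtt{XZZXI}$ and their products) and sort them by whether they commute with $\mathtt{IIIXX}$. I will then record the weights of the products $HS$ for the anticommuting ones. I expect eight anticommuting elements. Each $HS$ changes at most positions $4,5$ of $S$, so its weight lies in $\{2,3,4\}$, and this is where the deformed $A$-enumerator acquires low-weight terms.

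With $A_w(P_\theta)$ in hand, the first two MacWilliams moments \eqref{eq:first-moment}, \eqref{eq:second-moment} give $B_1$ and $B_1+\tfrac{2}{Q}B_2$. Then $\Phi_1=K(4B_1+2B_2)=\tfrac{K^2}{d}16\sum_w(w-4)^2A_w$, and I will check that this equals $16u$. As a cross-check I can compute $B_1,B_2$ directly: the weight-one and weight-two compressions $P_\theta U_aP_\theta$ become nonzero only through the $u$-terms.

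\emph{Step 3: the excess.} I will use \eqref{eq:main-remainder} with $\kappa=48$, $a=136$, $k=4$, $K^2/d=1/8$. The remainder is
\[
 \tfrac18\Bigl(\sum_w\rho(w)A_w+48S_2+88\cdot4B_1\Bigr),
 \qquad \tfrac18\cdot48S_2=3\Phi_1 .
\]
So equality $g=3\Phi_1$ requires two things: $B_1(P_\theta)=0$, and $A_w(P_\theta)$ supported where $\rho(w)=0$. By Proposition~\ref{prop:kappa}, $\rho$ vanishes at $w\in\{3,4,5\}$; at $w=5$ this uses that the minimum of $\eta$ is attained at $n$.

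$B_1=0$ is plausible, because single-qubit errors conjugated by $e^{i\theta H}$ remain in the span of Paulis of weight at most $2$, and $P_*$ has distance $3$. The remaining difficulty is the support condition: step 2 must show that no $HS$ of weight $2$ or below arises. I expect this support/weight check to be the main obstacle. If a weight-two term did appear, $\rho(2)>0$ would break the exact equality, so the computation must confirm the weights are all at least $3$.

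Alternatively, I can evaluate $\cD_4(P_\theta)=\tfrac18\sum_wA_wW(w)$ directly from the listed $W$ values and subtract $825/4$ to get $48u$.

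\emph{Sharpness.} Once $g(P_\theta)=3\Phi_1(P_\theta)$ holds with $\Phi_1>0$ for $u>0$, any coefficient $c>3$ in \eqref{eq:main-stability} fails at $P_\theta$. The Gram statement then follows from $\Phi_1=\|G_1-I_{32}\|_{\HS}^2$ via \eqref{eq:packing-defect} with $KV=d$.
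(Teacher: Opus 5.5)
\textbf{Verdict: there is a gap.} Your framework is sound and close to the paper's. The label-by-label formula $|\Tr(P_\theta U_a)|^2=(1-u)|\Tr(P_*U_a)|^2+u|\Tr(P_*HU_a)|^2$ holds. So does your argument that $B_1(P_\theta)=0$: conjugated weight-one errors lie in the span of Paulis of weight at most $2$, and the pure code $P_*$ compresses all of these to zero. Reading the excess off \eqref{eq:main-remainder} using $\rho(3)=\rho(4)=\rho(5)=0$ also works. The paper instead lists the eight products $HS$ and substitutes the resulting enumerator into \eqref{eq:Phi-spectral} and \eqref{eq:D-spectral}.

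The gap is Step 2, which is the actual content of the proposition. You never determine $A_{P_\theta}$, and your one concrete claim about it is false. The weight of $HS$ does not lie in $\{2,3,4\}$, because $S$ may have its identity factor at position $4$ or $5$. For example, $S=\mathtt{YYZIZ}$ anticommutes with $H$, and $HS=\mathtt{YYZXY}$ has weight $5$.

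Your prediction is also incompatible with the statement. Since $B_1(P_\theta)=0$, \eqref{eq:first-moment} gives $\sum_w(w-4)A_w(P_\theta)=0$. If the displaced mass sat only at weights $\leq4$, it would all have to sit at weight $4$. That would give $\sum_w(w-4)^2A_w=0$ and hence $\Phi_1(P_\theta)=0$. The weight-five terms are what produce $\Phi_1=16u$.

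The missing step has a short general proof, which also removes your weight-two obstacle.
\begin{itemize}
\item Since $B_2(P_*)=0$, the weight-two operator $H$ is not in the normalizer. So the stabilizer elements commuting with $H$ form an index-two subgroup, and exactly eight nonidentity elements anticommute with $H$.
\item Such an $S$ has exactly one factor in $\{\mathtt Y,\mathtt Z\}$ at positions $4,5$. Multiplying by $\mathtt{XX}$ there keeps that factor nonidentity and swaps $\mathtt I\leftrightarrow\mathtt X$ in the other position.
\item Because $S$ has weight four, $HS$ has weight $5$ if the identity factor of $S$ is at position $4$ or $5$. It has weight $3$ if that identity factor is among positions $1$--$3$.
\end{itemize}

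Hence $A_{P_\theta}$ is supported on $\{0,3,4,5\}$ and $\sum_w(w-4)^2A_w(P_\theta)=8u$. This gives $\Phi_1=\frac18\cdot16\cdot8u=16u$ and, from \eqref{eq:main-remainder}, $g=\frac18\cdot48\cdot8u=48u$. The first moment then forces the $4/4$ split between weights $3$ and $5$, recovering $A_{P_\theta}(z)=1+4uz^3+(15-8u)z^4+4uz^5$. With this step inserted, your remainder route is complete and avoids the paper's explicit listing.
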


\begin{proof}
Let $S_*$ be the signed stabilizer group generated by the four displayed
Pauli strings.  Its fifteen nonidentity elements have weight four.  The
operator $H$ commutes with seven of them and anticommutes with eight.  Up to
phase, the products $HS$ for the eight anticommuting elements are
\[
 \mathtt{IXZYI},\ \mathtt{XYIZI},\ \mathtt{IYXIZ},\ \mathtt{ZXIIY}
\]
of weight three and
\[
 \mathtt{XXYXZ},\ \mathtt{YYZXY},\ \mathtt{YXXZX},\ \mathtt{ZYYYX}
\]
of weight five.  These strings are distinct and lie outside $S_*$.  If
$S\in S_*$ anticommutes with $H$, then
\[
 e^{-i\theta H}Se^{i\theta H}
 =\cos(2\theta)S-i\sin(2\theta)HS,
\]
whereas a commuting $S$ is unchanged.  The Pauli expansion of the rotated
projector therefore gives
\begin{equation}
 A_{P_\theta}(z)=1+4uz^3+(15-8u)z^4+4uz^5.
 \label{eq:five-rotated-enumerator}
\end{equation}
Here the squared coefficients add without cross terms because the displayed
Pauli strings are all distinct.

At these parameters $K^2/d=1/8$ and $L_1(w)=16-4w$.  Hence
Theorem~\ref{thm:spectral} and \eqref{eq:five-rotated-enumerator} give
\[
 \Phi_1(P_\theta)=\frac18\bigl(4u\cdot16+4u\cdot16\bigr)=16u.
\]
Using the values of $W$ displayed above gives, similarly,
\[
 \cD_4(P_\theta)-\cD_4(P_*)
 =\frac18\bigl(4u\cdot246-8u\cdot110+4u\cdot70\bigr)
 =48u.
\]
For any $\theta$ with $u>0$, equality holds with positive covering defect, proving
optimality of the coefficient.
\end{proof}

The rotation in Proposition~\ref{prop:five-sharp} is also an exactly solvable
approximate-QEC example.  Up to local-Clifford and qubit-ordering conventions,
it is a two-qubit Pauli rotation on one edge, of the type used in the
five-qubit variational graphical architecture of
\cite{ShaoEtAl2026VGQEC}.  The next result evaluates the fixed polar recovery
analytically, with the code and recovery held fixed across all probability
distributions.

Use Hermitian Pauli representatives and number the qubits from left to right.
Choose an encoding $J_*:\mathbb C^2\to(\mathbb C^2)^{\otimes5}$ for $P_*$
such that the physical operators
\[
 \overline X\coloneqq\mathtt{XXXXX},\qquad
 \overline Z\coloneqq\mathtt{ZZZZZ},\qquad
 \overline Y\coloneqq i\overline X\overline Z=\mathtt{YYYYY}
\]
induce the logical Pauli matrices $X_L,Z_L,Y_L$, respectively.  With
$H\coloneqq\mathtt{IIIXX}$, set
\[
 J_\theta\coloneqq e^{-i\theta H}J_*,
 \qquad
 \operatorname{Enc}_\theta(\rho)\coloneqq J_\theta\rho J_\theta^\dagger,
 \qquad
 u\coloneqq\sin^2(2\theta),
\]
and let $\cR_\theta$ denote a polar recovery of
Corollary~\ref{cor:recovery} for $J_\theta$.  Write
\[
 \Lambda_{\theta,p}
 \coloneqq\cR_\theta\circ\cN_{1,p}\circ\operatorname{Enc}_\theta
\]
for a distribution $p=(p_a)_{a\in\cB_1}$, and define
\begin{align}
 \mathcal S_X&\coloneqq\{\mathtt{ZIIII},\mathtt{IIIYI},
          \mathtt{IIZII},\mathtt{IIIIY}\},\notag\\
 \mathcal S_Y&\coloneqq\{\mathtt{XIIII},\mathtt{IIIIZ},
          \mathtt{IIXII},\mathtt{IIIZI}\},\notag\\
 \eta_X&\coloneqq\sum_{a\in\mathcal S_X}p_a,\qquad
 \eta_Y\coloneqq\sum_{a\in\mathcal S_Y}p_a,\qquad
 \eta\coloneqq\eta_X+\eta_Y,\notag\\
 \xi_\theta&\coloneqq\frac{1-\sqrt{1-u}}2
 =\frac{1-\sqrt{1-g(P_\theta)/48}}2.
 \label{eq:five-xi}
\end{align}
For a unitary $L$, write
$\operatorname{Ad}_L(\rho)\coloneqq L\rho L^\dagger$.

\begin{proposition}[Exact logical channel for the five-qubit polar recovery]
\label{prop:five-exact-recovery}
For every $\theta$, every such recovery $\cR_\theta$ and every probability
distribution $p$ on $\cB_1$, the exact channel identity holds:
\begin{equation}
 \Lambda_{\theta,p}
 =(1-\xi_\theta\eta)\id_2{}
 +\xi_\theta\eta_X\operatorname{Ad}_{X_L}{}
 +\xi_\theta\eta_Y\operatorname{Ad}_{Y_L}.
 \label{eq:five-exact-logical-channel}
\end{equation}
Consequently,
\begin{align}
 \|\Lambda_{\theta,p}-\id_2\|_\diamond
 &=2\xi_\theta\eta,
 \label{eq:five-exact-diamond}\\
 1-\Fent(I_2/2,\Lambda_{\theta,p})
 &=\xi_\theta\eta.
 \label{eq:five-exact-fidelity}
\end{align}
For the uniform distribution on $\cB_1$, one has $\eta=1/2$, and hence
\begin{align}
 \|\Lambda_{\theta,\mathrm{unif}}-\id_2\|_\diamond
 &=\frac{1-\sqrt{1-g(P_\theta)/48}}2\notag\\
 &=\frac{g(P_\theta)}{192}+O(g(P_\theta)^2),\notag\\
 1-\Fent(I_2/2,\Lambda_{\theta,\mathrm{unif}})
 &=\frac{1-\sqrt{1-g(P_\theta)/48}}4\notag\\
 &=\frac{g(P_\theta)}{384}+O(g(P_\theta)^2).
 \label{eq:five-uniform-exact}
\end{align}
Moreover,
$A_2(P_\theta)=0$ and $B_2(P_\theta)=4u=g(P_\theta)/12$.
Consequently, every member with $g(P_\theta)>0$ fails exact correction of the
radius-one Pauli error set.
\end{proposition}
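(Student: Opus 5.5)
The plan is to compute the Gram matrix $G_\theta$ of the radius-one error images of $J_\theta$ explicitly. It should split into $2\times2$ blocks, which can be square-rooted by hand, and the recovered channel can then be read off from Lemma~\ref{lem:polar-linear-diamond}.

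\textbf{Step 1: rewrite the rotated error images.} Since $Y_\theta(|a\rangle\otimes x)=U_ae^{-i\theta H}J_*x=e^{-i\theta H}\bigl(e^{i\theta H}U_ae^{-i\theta H}\bigr)J_*x$, the overall unitary $e^{-i\theta H}$ does not affect the Gram matrix. If $U_a$ commutes with $H=\mathtt{IIIXX}$, the conjugated operator is just $U_a$. If it anticommutes, it equals $(\cos2\theta+i\sin2\theta\,H)U_a$. Among the radius-one errors, only $\mathtt{IIIZI},\mathtt{IIIYI},\mathtt{IIIIZ},\mathtt{IIIIY}$ anticommute with $H$.

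\textbf{Step 2: pair each such error with a partner.} For each anticommuting $a$, the weight-two string $HU_a$ acts on the perfect code, up to phase, as $U_bL$. Here $b\in\cB_1$ is a unique single-qubit error, and $L$ is a logical Pauli. Uniqueness comes from the tiling property of $P_*$ (Theorem~\ref{thm:equality}): every Pauli acts on $\operatorname{ran}P_*$ as (radius-one error)$\times$(stabilizer)$\times$(logical).

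I expect these partners to be $\mathtt{ZIIII},\mathtt{IIZII},\mathtt{XIIII},\mathtt{IIXII}$, all of which commute with $H$. The corresponding logicals should be $L=X_L$ for the pairs inside $\mathcal S_X$ and $L=Y_L$ for the pairs inside $\mathcal S_Y$. This identification of partners and logicals is the main computational obstacle. It is done by multiplying each $HU_a$ by stabilizer elements until it becomes a single-qubit error times $\overline X$, $\overline Y$ or $\overline Z$, carefully tracking phases with the Hermitian representatives fixed above.

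\textbf{Step 3: the Gram blocks.} Using $J_*^\dagger U_c^\dagger U_{c'}J_*=\mathbf 1_{c=c'}I_2$ for $c,c'\in\cB_1$, with $s\coloneqq\sin2\theta$, the following hold.
\begin{itemize}
\item Every diagonal block of $G_\theta$ is $I_2$.
\item For each pair $(a,b)$, the only nonzero off-diagonal block is $\langle a|G_\theta|b\rangle=\epsilon\, i s L^\dagger$ for a phase $\epsilon\in\{\pm1\}$, together with its adjoint. The $\cos2\theta$ cross term vanishes because $J_*^\dagger U_a^\dagger U_bJ_*=0$.
\item The eight errors commuting with $H$ and not appearing as partners give identity blocks.
\end{itemize}
Thus $G_\theta$ is a direct sum of eight $I_2$ blocks and four blocks $\begin{pmatrix}I&\epsilon isL^\dagger\\-\epsilon isL&I\end{pmatrix}$. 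Since $L$ is unitary, each such block has eigenvalues $1\pm s$. Its square root is therefore
\[
\alpha I\oplus\alpha I+\beta\,(\text{same off-diagonal pattern}),\qquad
\alpha=\tfrac{\sqrt{1+s}+\sqrt{1-s}}2,\quad \beta=\tfrac{\sqrt{1+s}-\sqrt{1-s}}2 .
\]
One checks $\alpha^2+\beta^2=1$ and $\beta^2=\frac{1-\sqrt{1-u}}{2}=\xi_\theta$.

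\textbf{Step 4: read off the logical channel.} For the Pauli mixture, take Kraus vectors $c_a=\sqrt{p_a}|a\rangle$. In the proof of Lemma~\ref{lem:polar-decoder}(1), the recovered Kraus operators are the blocks $(\langle b|\otimes I)\sqrt{G_\theta}\,(c_a\otimes I)$. So an error $a$ in one of the eight paired labels yields Kraus operators $\sqrt{p_a}\,\alpha I$ and $\sqrt{p_a}\,\beta L$ (up to phase), while any other error yields $\sqrt{p_a}\,I$. Because $Y_\theta=\widetilde Y_\theta\sqrt{G_\theta}$, the noisy states lie in $\operatorname{ran}\widetilde Y_\theta$, so the choice of unitary completion of $D$ is irrelevant. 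Summing over $a$ gives \eqref{eq:five-exact-logical-channel}.

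\textbf{Step 5: consequences.} For a Pauli channel, the unhalved diamond distance to $\id_2$ is twice the non-identity weight, and the entanglement infidelity equals that weight. This gives \eqref{eq:five-exact-diamond} and \eqref{eq:five-exact-fidelity}. Under the uniform distribution, $\eta=8/16=1/2$; inserting $u=g(P_\theta)/48$ from Proposition~\ref{prop:five-sharp} and expanding the square root gives \eqref{eq:five-uniform-exact}.

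Finally, $A_2(P_\theta)=0$ is read directly from \eqref{eq:five-rotated-enumerator}. $B_2(P_\theta)$ follows either from the $j=2$ MacWilliams identity \eqref{eq:quantum-macwilliams} applied to that enumerator, or directly by counting the weight-two strings $HU_a$ and their compressions. Either route should give $4u=g/12$. Since $B_2>0$ whenever $g>0$, some weight-two product $U_a^\dagger U_b$ with $a,b\in\cB_1$ has nonzero compression that is not a multiple of $P_\theta$. This is because the block $\epsilon isL^\dagger$ is a nontrivial logical, so the Knill--Laflamme condition fails for $\cB_1$.
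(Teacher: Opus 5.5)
Your proposal is correct and follows the paper's proof essentially step for step: conjugating by $e^{i\theta H}$ to get $G_\theta=I_{32}+s_\theta\Gamma$ with exactly four mixed pairs, taking $\sqrt{G_\theta}$ by functional calculus with $\beta_\theta^2=\xi_\theta$, reading the logical channel off the two flag branches independently of the unitary completion, and obtaining $B_2(P_\theta)=4u$ from the MacWilliams identity. The only slip is cosmetic: with Hermitian representatives $U_aHU_b$ is anti-Hermitian for the paired labels, so the blocks are $\pm s_\theta L$ rather than $\pm i s_\theta L^\dagger$; this changes neither the eigenvalues $1\pm s_\theta$ nor the channel, and your nonexactness argument via the nontrivial logical Gram block is an equally valid variant of the paper's Knill--Laflamme argument that exact correction would force $A_2=B_2$.
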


\begin{proof}
Let $Y_\theta$ be the radius-one error-image map associated with $J_\theta$
and put $G_\theta\coloneqq Y_\theta^\dagger Y_\theta$.
We compute this Gram matrix, take its positive square root, and read the
logical channel from the resulting orthogonal flag branches.  If
$F_{ab}\coloneqq U_a^\dagger U_b$, exact correction by $P_*$ gives
$J_*^\dagger F_{ab}J_*=\mathbf1_{a=b}I_2$.  When $F_{ab}$ anticommutes with
$H$,
\[
 e^{i\theta H}F_{ab}e^{-i\theta H}
 =\cos(2\theta)F_{ab}+i\sin(2\theta)HF_{ab};
\]
when it commutes with $H$, it is unchanged.  Since $J_*^\dagger F_{ab}J_*=0$
for $a\neq b$, the off-diagonal block $(a,b)$ of $G_\theta-I_{32}$ is
$\sin(2\theta)\,J_*^\dagger(iHF_{ab})J_*$ when $F_{ab}$ anticommutes with $H$,
and it vanishes otherwise; it is nonzero exactly when $HF_{ab}$ commutes with
every stabilizer, that is, when the syndromes of $U_a$ and $U_b$ differ by
the syndrome of $H$, which is nonzero because $H\neq\mathtt{IIIII}$ has weight
below the distance $3$.  The sixteen radius-one errors have distinct syndromes
exhausting the syndrome group, so this condition pairs each label with
exactly one other.  Among the labels in $\cB_1$, only $\mathtt{IIIYI}$, $\mathtt{IIIZI}$, $\mathtt{IIIIY}$ and $\mathtt{IIIIZ}$
anticommute with $H$, and each lies in a different one of the four pairs
displayed below; hence $F_{ab}$ anticommutes with $H$ for exactly these four
pairs.  This gives
\begin{equation}
 G_\theta=I_{32}+s_\theta\Gamma,\qquad s_\theta\coloneqq\sin(2\theta),
 \label{eq:five-Gram-affine}
\end{equation}
where the only nonzero upper-triangular $2$-by-$2$ logical blocks of the
Hermitian matrix $\Gamma$ are
\[
 \begin{array}{c|c|c}
 (a,b)&\text{block}&HF_{ab}\\\hline
 (\mathtt{XIIII},\mathtt{IIIIZ})&-Y_L&\mathtt{XIIXY}\\
 (\mathtt{ZIIII},\mathtt{IIIYI})& X_L&\mathtt{ZIIZX}\\
 (\mathtt{IIXII},\mathtt{IIIZI})&-Y_L&\mathtt{IIXYX}\\
 (\mathtt{IIZII},\mathtt{IIIIY})& X_L&\mathtt{IIZXZ}
 \end{array}
\]
with $HF_{ab}$ written up to phase.  Each listed $HF_{ab}$ is a weight-three
normalizer element equal, up to phase and a stabilizer factor, to
$\overline Y$ or $\overline X$; this fixes the logical class in the second
column.  The signs depend on these sign conventions and do not affect
\eqref{eq:five-exact-logical-channel}, where only $\operatorname{Ad}_L$
enters.  The lower-triangular blocks are their adjoints and all omitted blocks vanish.
Thus $\Gamma^2$ is the identity on the sixteen-dimensional space carried by
the eight labels in $\mathcal S_X\cup\mathcal S_Y$, zero on its orthogonal complement, and
$\Gamma^3=\Gamma$.  This also recovers
$\|G_\theta-I_{32}\|_{\HS}^2=16s_\theta^2=\Phi_1(P_\theta)$.

Put
\[
 \alpha_\theta\coloneqq\frac{\sqrt{1+s_\theta}+\sqrt{1-s_\theta}}2,\qquad
 \beta_\theta\coloneqq\frac{\sqrt{1+s_\theta}-\sqrt{1-s_\theta}}2.
\]
Functional calculus in \eqref{eq:five-Gram-affine} gives
\begin{align}
 \sqrt{G_\theta}
 &=I_{32}+\beta_\theta\Gamma
  +(\alpha_\theta-1)\Gamma^2,\notag\\
 \beta_\theta^2&=\xi_\theta,\qquad
 \alpha_\theta^2=1-\xi_\theta.
\end{align}
Writing the square polar decomposition as
$Y_\theta=\widetilde Y_\theta\sqrt{G_\theta}$ and
$D_\theta\coloneqq\widetilde Y_\theta^\dagger$, one has
$D_\theta Y_\theta=\sqrt{G_\theta}$.  This remains true when $|s_\theta|=1$:
any unitary completion of the polar partial isometry agrees on the support of
$\sqrt{G_\theta}$.

For an error label outside $\mathcal S_X\cup\mathcal S_Y$, the decoded isometry is exactly
$|a\rangle\otimes I_2$.  For each of the four paired blocks displayed above,
it has two orthogonal flag branches,
\[
 \alpha_\theta|a\rangle\otimes I_2
 \ \mathbin{\pm}\
 \beta_\theta|a'\rangle\otimes L,
 \qquad L\in\{X_L,Y_L\}.
\]
Tracing the flag therefore gives
$(1-\xi_\theta)\id_2{}+\xi_\theta\operatorname{Ad}_L$ for each label in
$\mathcal S_X\cup\mathcal S_Y$ and $\id_2$ for every other label.  Averaging with $p$ proves
\eqref{eq:five-exact-logical-channel}.  Its difference from $\id_2$ is
\[
 \xi_\theta\eta_X(\operatorname{Ad}_{X_L}-\id_2)
 +\xi_\theta\eta_Y(\operatorname{Ad}_{Y_L}-\id_2).
\]
The triangle inequality gives diamond norm at most $2\xi_\theta\eta$.
Applying the difference to one half of a maximally entangled state gives
the Bell-diagonal eigenvalues
$-\xi_\theta\eta,\xi_\theta\eta_X,\xi_\theta\eta_Y,0$, so the matching
lower bound holds.  The identity coefficient in the logical Pauli channel is
$1-\xi_\theta\eta$, proving \eqref{eq:five-exact-fidelity}.  Uniformity gives
$\eta_X=\eta_Y=1/4$, and Taylor expansion at $g=0$ proves
\eqref{eq:five-uniform-exact}.  Finally, applying
\eqref{eq:quantum-macwilliams} to
\eqref{eq:five-rotated-enumerator} gives $B_2(P_\theta)=4u$, while
$A_2(P_\theta)=0$.  Exact correction of $\cB_1$ would force
$A_2(P_\theta)=B_2(P_\theta)$ by the Knill--Laflamme conditions
\cite{KnillLaflamme1997Theory}, proving the final assertion.
\end{proof}

The general guarantee \eqref{eq:diamond-bound} specializes to
$\min\{2,g/24\}$ for every rank-two five-qubit projector under uniform
ball noise and its polar recovery.  The positive slope $1/192$ in
\eqref{eq:five-uniform-exact} shows that the linear order in $g$ is attained
within this class.

For a concrete point on this family, take
\(\sin(2\theta)=3/5\).  The discrepancy excess is \(432/25\).
Under uniform ball noise, the diamond error is \(1/10\) and the
entanglement infidelity at \(I_2/2\) is \(1/20\).  The general estimates in
Corollary~\ref{cor:recovery} give the respective upper bounds \(18/25\) and
\(9/50\).  At the same excess \(g=432/25\), these bounds apply to every
rank-two projector on five qubits with its corresponding polar recovery.
Corollary~\ref{cor:five-nonexact-benchmark} in Appendix~\ref{app:noise}
also gives the corresponding depolarizing-noise estimate.

\section{Balanced packing beyond perfect codes}
\label{sec:balanced-summary}

Perfect Hamming codes exhaust the physical space with their radius-one
error images.  Gottesman's \([[2^j,2^j-j-2,3]]_2\) family, \(j\geq3\),
instead packs these images orthogonally into part of that space
\cite{Gottesman1996Class,CalderbankEtAl1998GF4}: every syndrome occurs at
most once, so the images are pairwise orthogonal.  We call a stabilizer code
balanced when its radius-one syndrome multiplicities differ by at most one;
here this means multiplicities $0$ and $1$ (balanced packing).  For \(n=2^j\) and
\(K=2^n/(4n)\), the covering defect of its projector \(P_G\) is
\[
 \Phi_1(P_G)=K\frac{(3n+1)(n-1)}{4n}.
\]
Subtracting this value gives the packing defect \(\Psi_1\) from
\eqref{eq:packing-defect}.  Let \(\widehat h_n(2)\) be the coefficient of
\(K_2^{(n,4)}\) in the quadratic polynomial \(h_n\) interpolating \(W_{n,4}\)
at \(3n/4,3n/4+1,n\).  Lemma~\ref{lem:certificate-transfer} with this interpolant, together with
the sign checks of Lemma~\ref{lem:Gottesman-certificate}, yields the
following estimate.

\begin{mainresult}[Preview of Theorem~\ref{thm:Gottesman-stability}]
For \(n=2^j\), \(j\geq3\), and \(K=2^n/(4n)\),
\[
 \cD_4(P)-\cD_4(P_G)\geq\frac{\widehat h_n(2)}{2}\Psi_1(P)
 \qquad\text{for every rank-\(K\) projector \(P\)},
\]
with \(\widehat h_n(2)>0\), explicitly evaluated in \eqref{eq:Gottesman-c2} in
Appendix~\ref{sec:gottesman}.
\end{mainresult}

Equality in the discrepancy minimum is characterized by the trace enumerator
of \(P_G\), \(A_P(z)=1+(4n-4)z^{3n/4}+3z^n\)
(Lemma~\ref{lem:Gottesman-enumerator}); this condition implies orthogonal
packing, \(\Psi_1(P)=0\), which amounts to \(B_1=B_2=0\) (hence
\(A_1=A_2=0\), since \(A_w\leq B_w\)); we do not address whether \(B_1=B_2=0\) alone forces that enumerator for
every rank-\(K\) projector at these parameters.

Theorem~\ref{thm:Gottesman-stability} in Appendix~\ref{sec:gottesman}
gives the exact nonnegative remainder underlying this estimate.
Its certificate interpolates at the two target weights and an adjacent
integer, using the universal-energy construction of Cohn and Zhao
\cite{CohnZhao2014Energy}.  Thus balanced nonperfect packing fits the same
discrepancy-to-overlap analysis; the appendix contains the enumerator
calculation, equality argument, and the first two explicit lengths.

\section{Rank-one projectors and marginal-purity rigidity}
\label{sec:pure}

For rank-one projectors, discrepancy has a natural entanglement
interpretation.  Weyl coefficients determine subsystem purities, allowing
the excess above a formal benchmark to be written as a weighted sum of
marginal-purity defects.  Positivity of the weights identifies absolutely maximally
entangled states as minimizers whenever they exist.

Let $P=P_\psi\coloneqq|\psi\rangle\langle\psi|$ be rank one.  For
$S\subseteq[n]$, write
\[
 \rho_S\coloneqq\Tr_{S^c}P_\psi
\]
and let $I_S$ denote the identity on the $q^{|S|}$-dimensional subsystem.
For $0\leq s\leq n$, set
\begin{equation}
 \Pi_s(P_\psi)\coloneqq\sum_{\substack{S\subseteq[n]\\|S|=s}}\Tr(\rho_S^2).
 \label{eq:marginal-purity-sum}
\end{equation}
The coefficients below come from binomial inversion: $\omega_s$ expresses
$\cD_Q(P_\psi)$ through the marginal purities $\Pi_s(P_\psi)$, and $\gamma_s$
combines the complementary sizes $s$ and $n-s$, whose purities agree for a
pure state.  Define $f(0)\coloneqq0$, $f(w)\coloneqq W_{n,Q}(w)$ for
$1\leq w\leq n$, and
\begin{equation}
 \omega_s\coloneqq\sum_{w=s}^n(-1)^{w-s}\binom{n-s}{w-s}f(w),
 \qquad 0\leq s\leq n.
 \label{eq:omega-def}
\end{equation}
Put $m\coloneqq\lfloor n/2\rfloor$ and, for $1\leq s\leq m$, define
\begin{equation}
 \gamma_s\coloneqq
 \begin{cases}
  q^s\omega_s+q^{n-s}\omega_{n-s},&s<n/2,\\
  q^s\omega_s,&2s=n.
 \end{cases}
 \label{eq:gamma-def}
\end{equation}
Recall that $|\psi\rangle$ is an absolutely maximally entangled
$\operatorname{AME}(n,q)$ state when every reduction to at most $m$ parties
is maximally mixed \cite{HelwigCui2013AME}.  Finally, let
\begin{align}
 C_{n,q}&\coloneqq q^{-n}(\omega_0+q^n\omega_n),
 \label{eq:pure-constant}\\
 \delta^{\mathrm{pure}}_{n,q}
 &\coloneqq C_{n,q}+q^{-n}\sum_{s=1}^m
   \gamma_s\binom{n}{s}q^{-s}.
 \label{eq:pure-benchmark}
\end{align}

\begin{theorem}[Pure-state marginal-purity identity]
\label{thm:pure-marginal}
For every $q\geq2$ and $n\geq2$, the coefficients in
\eqref{eq:omega-def} satisfy $\omega_s>0$ for $1\leq s\leq n$.
Consequently $\gamma_s>0$ for $1\leq s\leq m$, and every rank-one
projector satisfies the exact identity
\begin{equation}
 \cD_Q(P_\psi)-\delta^{\mathrm{pure}}_{n,q}
 =q^{-n}\sum_{s=1}^m\gamma_s
  \sum_{\substack{S\subseteq[n]\\|S|=s}}
  \left\|\rho_S-\frac{I_S}{q^s}\right\|_{\HS}^2.
 \label{eq:pure-stability}
\end{equation}
Thus, if $\operatorname{AME}(n,q)$ states exist, then
$\delta^{\mathrm{pure}}_{n,q}$ is the pure-state minimum of $\cD_Q$, and
the minimizers are precisely the $\operatorname{AME}(n,q)$ states.  If no
such state exists, minimizing $\cD_Q$ is equivalent to minimizing the
strictly positive weighted marginal-purity potential on the right-hand side
of \eqref{eq:pure-stability}.
\end{theorem}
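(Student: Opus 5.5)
The plan is to rewrite $\cD_Q(P_\psi)$ through the marginal purities by binomial inversion, and then to show $\omega_s>0$ from the coefficient-norm form \eqref{eq:W-norm}.

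\textbf{Purities from the enumerator.} With $K=1$, Theorem~\ref{thm:spectral} gives $\cD_Q(P_\psi)=q^{-n}\sum_{w=1}^n A_w f(w)$, where $A_w=A_w(P_\psi)$. Pauli Parseval on a subsystem $S$ gives $\Tr\rho_S^2=q^{-|S|}\sum_{\operatorname{supp}(a)\subseteq S}|\langle\psi|U_a|\psi\rangle|^2$. Summing over $|S|=s$ then gives
\[
 q^s\Pi_s=\sum_{w=0}^s\binom{n-w}{s-w}A_w .
\]
This is \eqref{eq:QMDS-marginal-bridge} with $K=1$, without the centering. Binomial inversion of this triangular system expresses $\sum_w A_wf(w)$ as $\sum_s\omega_s q^s\Pi_s$, where $\omega_s$ is the $s$-th forward-difference coefficient of $f$ (with $f(0)=0$) at the right endpoint. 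This is exactly \eqref{eq:omega-def}: one checks $\sum_{s\le w}\binom{n-s}{w-s}\ldots$, i.e.\ $f(w)=\sum_{s\leq w}\binom{n-w}{s-w}\omega_s$, which is Newton's formula.

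\textbf{Using purity of the global state.} For a pure state $\Pi_s=\Pi_{n-s}$ and $\Pi_0=\Pi_n=1$. Pairing $s$ with $n-s$ gives
\[
 \cD_Q=q^{-n}\Bigl(\omega_0+q^n\omega_n+\sum_{s=1}^m\gamma_s\Pi_s\Bigr).
\]
Next write $\Pi_s=\sum_{|S|=s}\|\rho_S-I_S/q^s\|_{\HS}^2+\binom ns q^{-s}$. This yields \eqref{eq:pure-stability} with the benchmark \eqref{eq:pure-benchmark}. Once $\gamma_s>0$ is known, the AME conclusions follow at once, since the right side vanishes exactly when all marginals with $s\leq m$ are maximally mixed.

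\textbf{Positivity of $\omega_s$ (the main obstacle).} For $1\le s\le n$ we have $\omega_s=\sum_{j=0}^{n-s}(-1)^j\binom{n-s}{j}f(s+j)$. All arguments here are at least $1$, so $f=W$ and $\omega_s=(-1)^{n-s}\nabla^{n-s}W(s)$. This is positive by the strict complete monotonicity recorded in Section~\ref{sec:qmds} via the integral \eqref{eq:W-complete-monotone}. Explicitly, it equals
\[
 \frac1{2\pi}\int\alpha_Q^{n-1}(\beta/\alpha_Q)^{s-1}(1-\beta/\alpha_Q)^{n-s}\,d\varphi,
\]
whose integrand is nonnegative and positive on a set of positive measure. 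The case $s=n$ is $W(n)>0$ directly. Only $\omega_0$ involves $f(0)=0$, and it enters only the constant $C_{n,q}$, so no sign is needed there. The point requiring care is $Q\ge4$, which gives $0\le\beta/\alpha_Q\le1$. This holds since $|1+(Q-1)e^{i\varphi}|\ge Q-2\ge\ldots$; I would verify $\alpha_Q-\beta=Q(Q-2)+2Q\ldots$ directly, namely $\alpha_Q-\beta=Q^2-2Q+2Q\cos\varphi\cdot$ (computed as $(Q-1)^2+1+2(Q-1)\cos\varphi-2+2\cos\varphi=Q(Q-2)+2Q\cos\varphi\ge Q(Q-4)\ge0$).
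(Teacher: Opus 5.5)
Your proposal is correct and follows essentially the same route as the paper. Both arguments express $\Pi_s$ through the $A_w$ by Weyl--Parseval, apply binomial inversion to obtain the $\omega_s$, pair complementary marginals via $\Pi_s=\Pi_{n-s}$, and derive positivity of $\omega_s=\mu_{n-s}$ from the integral representation \eqref{eq:W-complete-monotone}, which you usefully make explicit as $\frac1{2\pi}\int_0^{2\pi}\alpha_Q^{n-1}(\beta/\alpha_Q)^{s-1}(1-\beta/\alpha_Q)^{n-s}\,d\varphi$ (one small slip: the Newton form of the inversion should read $f(w)=\sum_{s\geq w}\binom{n-w}{s-w}\omega_s$, not a sum over $s\leq w$).
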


The proof expands each marginal purity in Weyl coordinates, applies
binomial inversion, and pairs complementary subsystems.  For $s\geq1$,
substituting $u=w-s$ in \eqref{eq:omega-def} gives $\omega_s=\mu_{n-s}$ with
$\mu_j$ from \eqref{eq:QMDS-mu}, so the sign needed for the equality
statement is the complete monotonicity of Section~\ref{sec:qmds}.  The
complete derivation is in
Appendix~\ref{app:pure-calculations}.

At even length \(n=2m\geq4\), the rank-one quantum MDS benchmark with
\(\Delta=m+1\) equals \(\delta^{\mathrm{pure}}_{n,q}\);
Remark~\ref{rem:pure-QMDS-concordance} gives the coefficient-level
comparison in Appendix~\ref{app:pure-calculations}.

Identity~\eqref{eq:marginal-enumerator} in Appendix~\ref{app:pure-calculations}
is the unitary-enumerator relation
developed in the established connection between quantum error correction and
multipartite entanglement
\cite{Scott2004Multipartite,HuberEtAl2018AME}.  Facchi, Florio, Parisi, and
Pascazio define maximally multipartite entangled states through minimization
of the average purity of balanced reductions \cite{FacchiEtAl2008MMES}; AME
states and their applications are studied in \cite{HelwigCui2013AME}, and
their existence can be formulated as a pure-state marginal problem
\cite{YuEtAl2021Marginal}.  The potential in
  \eqref{eq:pure-stability} is a different, fixed multiscale weighting: every
  subsystem size up to $m$ enters with the positive coefficient dictated by the
  Pauli-ball spectral weight.  When AME states do not exist, minimizing it is
  a weighted version of the maximally multipartite entangled-state problem,
  and the pure-state minimum of $\cD_Q$ lies strictly above $\delta^{\mathrm{pure}}_{n,q}$.

For four qubits, where an AME state does not exist, the known
balanced-purity optimum
\cite{BandyopadhyayGourSanders2007Duality,GourWallach2010FourQubits}
gives \(\min\cD_4=155/4\), a gap of \(8\) above the formal benchmark.
For five qubits the minimum is \(2265/16\), attained exactly by
\(\operatorname{AME}(5,2)\) states.  Corollaries~\ref{cor:four-pure}
and~\ref{cor:five-rank-one} in Appendix~\ref{app:pure-calculations}
give the full formulas and equality arguments.

\section{Conclusions and open problems}
\label{sec:limitations}

The exact remainders connect a scalar discrepancy excess to the geometry
of error images.  At formal quantum MDS and one-error Hamming parameters,
they characterize benchmark attainment and quantify the failure of
orthogonality.  When $KV_e\leq d$, this control gives bounds linear in the
excess for a fixed polar recovery under noise with Kraus operators in the
span of $\cB_e$, with $2e<\Delta$ ($e=1$ in the Hamming case).  For
stochastic Pauli noise they are proportional to the largest probability,
including that of the identity, so they do not vanish with the noise.  The five-qubit family attains this linear order and gives
both a sharp Gram constant and an explicit recovered logical channel.
The Gottesman and pure-state results extend the same
discrepancy $\cD_Q$ to balanced nonperfect packing and multipartite
entanglement.

The remaining questions concern broader parameter regimes, the geometry of
minimizers, and the information retained by discrepancy.

\begin{enumerate}
\item \emph{Other ranks and higher radii.}  Our Gram bounds require the Singleton rank with
$2e<\Delta$, or $e=1$ at one-error Hamming or Gottesman parameters;
Hamming equality for $e\geq2$ and other ranks remain open.  A further problem is to construct sharper linear-programming certificates
for the spectral weight of Section~\ref{sec:enumerators} that are adapted to
higher radii and to the Shor--Laflamme constraints.  Such certificates could extend the bounds to
non-Singleton ranks and other equality regimes.

\item \emph{Balanced nonperfect packing.}
For a rank-$K$ stabilizer projector, the number of syndrome classes is
$M=d/K$, so the mean radius-$r$ multiplicity is $V_r/M=KV_r/d$.
Beyond the Gottesman family, which stabilizer families have all radius-$r$
syndrome multiplicities in $\{\lfloor V_r/M\rfloor,\lceil V_r/M\rceil\}$ for a
given radius $r$, and when does this single-radius balance minimize
$\cD_Q$?  Even for Gottesman parameters, we do not determine whether every
orthogonal packing of this rank is a minimizer, or classify all projectors with
enumerator \eqref{eq:Gottesman-equality-enumerator} in
Appendix~\ref{sec:gottesman}.

\item \emph{Pure states without absolute maximal entanglement.}
Beyond the four-qubit case treated in Section~\ref{sec:pure}, the weighted marginal-purity problem
\eqref{eq:pure-stability} asks for the pure-state minimum of $\cD_Q$, which
lies strictly above $\delta^{\mathrm{pure}}_{n,q}$, the minimizing states,
and explicit bounds on the distance to that set at other lengths and local
dimensions.

\item \emph{Degenerate codes.}  The radius-one covering defect $\Phi_1$
vanishes on nondegenerate perfect codes.  Can a defect on a suitable quotient of the
error algebra capture degenerate correctability?

\item \emph{Beyond a fixed Pauli error span.}
Appendix~\ref{app:noise} treats arbitrary stochastic Pauli distributions by
their probability outside the ball.  When $KV_e\leq d$, for coherent Kraus combinations
within the span of the ball, the entanglement infidelity is linear in the excess by
\cite[Theorem~7]{ElimelechAlbertBarg2026UniversalRecovery}, and so is the
diamond error by Corollaries~\ref{cor:QMDS-span-infidelity}
and~\ref{cor:recovery}.  Quantitative tail estimates for coherent Kraus components outside the
selected low-weight error span are not developed here.

\item \emph{Sharper recovery parameters.}
The recovery bounds replace operator norms by Hilbert--Schmidt norms, which
loses information.
Can the spectrum of $G_e$, or an operator-valued discrepancy norm, avoid
this loss and give diamond bounds linear in $g(P)$ for all channels in the
error span beyond the parameter sets treated here?  The exact uniform-channel Petz formula
\eqref{eq:Fe-exact} retains the full Gram matrix, while the
stochastic-channel bounds use $p_\ast$ and a Hilbert--Schmidt distance.
A further question is whether $p_\ast$ can be replaced by a quantity that
retains more information about the error probabilities.

\item \emph{Existence and nonattainment.}  Stabilizer arithmetic gives the
explicit gap \eqref{eq:stabilizer-gap} in Appendix~\ref{app:stabilizer} at
one-error Hamming parameters, and Corollary~\ref{cor:QMDS-stabilizer-gap} gives
one at quantum MDS parameters, both for prime $q$; for
arbitrary projectors,
compactness gives $\min g>0$ when the benchmark is unattained.  Determining
an explicit gap, and extending the analysis to nearby ranks and subprojectors
of a known minimizer, remain open.  The complement identity
\eqref{eq:complement-symmetry} transfers rank-$K$ answers to rank $d-K$.

\item \emph{Estimation.}  For qubits, two-copy Bell sampling provides direct
shadow-enumerator measurements \cite{MillerEtAl2026ShadowMeasurement}.  The
sampling and classical processing costs of estimating the all-radii
discrepancy to prescribed additive accuracy remain to be quantified.
\end{enumerate}

Selected finite diagnostic checks are archived in
\cite{Zabokritskiy2026PauliDiagnostics}; the proofs do not depend on these
checks.

\appendix
\setcounter{equation}{0}
\renewcommand{\theequation}{S\arabic{equation}}
\renewcommand{\theHequation}{S.\arabic{equation}}

\section{Enumerator calculations}
\label{app:enumerators}

Equations in the appendices are numbered (S1), (S2), \ldots.
The formal quantum MDS enumerator $A^*$ enters the exact remainder
through its binomial moments.  We establish those moments and match the
normalization to the known quantum MDS enumerator.

\begin{proof}[Proof of Lemma~\ref{lem:QMDS-formal-moments}]
Use the parameters in \eqref{eq:QMDS-parameters} and the formal
coefficients in \eqref{eq:QMDS-Astar}.
Put $z\coloneqq n-w$.  For $0\leq z\leq t-1$, writing
$Q^{t-z-h}-1=(Q-1)\sum_{i=0}^{t-z-h-1}Q^i$ and using the iterated Pascal
identity $\sum_{h=0}^{u}(-1)^h\binom{n-z}{h}=(-1)^u\binom{n-z-1}{u}$ rewrites
\eqref{eq:QMDS-Astar} as
\[
 A_{n-z}^*
 =\binom nz\sum_{h=0}^{t-z-1}(-1)^h\binom{n-z}{h}
       \bigl(Q^{t-z-h}-1\bigr).
\]
This is the binomial inverse (the pair
$\sum_{z\geq j}\binom zjx_z=y_j\Leftrightarrow
x_z=\sum_{j\geq z}(-1)^{j-z}\binom jzy_j$, together with
$\binom jz\binom nj=\binom nz\binom{n-z}{j-z}$) of
\[
 \sum_{z=j}^{t-1}\binom zjA_{n-z}^*
 =\binom nj\bigl(Q^{t-j}-1\bigr)
 \qquad(0\leq j<t).
\]
Adding the contribution $\binom njA_0^*=\binom nj$ proves
\eqref{eq:QMDS-formal-moments}; the case $j=t$ is immediate because the
positive-weight support has $z<t$.  Taking $j=0$ gives the mass in
\eqref{eq:QMDS-formal-transform}.  The binomial--Krawtchouk identity \eqref{eq:QMDS-binomial-Krawtchouk},
applied successively for $j=1,\ldots,t$, then forces the first $t$
nonconstant Krawtchouk transform coordinates to vanish.

For the final assertion, let $\widehat A_w,\widehat B_w$ be the unnormalized
enumerators of Huber and Grassl.  Direct comparison of definitions gives
$\widehat A_w=K^2A_w$ and $\widehat B_w=KB_w$.  In \cite{HuberGrassl2020QMDS},
$D$ is the local dimension, $K=D^k$, and $\alpha$ is the index in their
Eq.~(31).  Substituting
$D=q$, $k=\ell=n-2t$, and $\alpha=n-t=d_*-1$ in their Eq.~(31) yields
\eqref{eq:QMDS-Astar}; their purity result gives the stated $B$-zeros
\cite[Theorems~5 and~9 and Eq.~(31)]{HuberGrassl2020QMDS}.
\end{proof}

\section{The one-error spectral coefficient}
\label{app:coefficient}

The one-error stability theorem uses a scalar lower bound for the
all-radii spectral weight.  We reproduce the coefficient calculation
here so that the constant in Proposition~\ref{prop:kappa} can be
verified within this article.

\begin{proof}[Proof of Proposition~\ref{prop:kappa}]
With $n=Qs+1$ and $k=1+s(Q-1)$ as in Section~\ref{sec:stability},
write $W(w)\coloneqq W_{n,Q}(w)$ and $a\coloneqq a_{n,Q}$.
The proof compares $W$ with its quadratic interpolant at $k-1,k,n$, which
gives \eqref{eq:kappa-closed} and \eqref{eq:minorant}, and then bounds the
endpoint quotient by the curvature bound \eqref{eq:curvature}, which gives
\eqref{eq:kappa-bound}.

Since $s\geq1$, the nodes $k-1<k<n$ are distinct, and
\[
 H(w)\coloneqq W(k)-a(w-k)+\eta^*(w-k)(w-k+1),
 \qquad
 \eta^*\coloneqq\frac{W(n)-W(k)+as}{s(s+1)},
\]
is the quadratic interpolating $W$ at them.  By
\eqref{eq:W-complete-monotone}, $W(w)$ is the average over $\varphi$ of
$f_\varphi(w)\coloneqq\alpha_Q(\varphi)^{n-1}\rho(\varphi)^{w-1}$, where
$\rho\coloneqq\beta/\alpha_Q\in(0,1)$ for all but finitely many $\varphi$.
Since $H$ is linear in the interpolated values, $W-H$ is the average of the
interpolation errors of the quadratic interpolants of $f_\varphi$ at
$k-1,k,n$.  For each such $\varphi$, $f_\varphi'''=(\ln\rho)^3f_\varphi<0$,
and the error at an integer $w\in[1,n]$ outside the nodes is
$-f_\varphi'''(\xi)\,(w-k+1)(w-k)(n-w)/6>0$ for some $\xi$.  Hence
$W(w)>H(w)$ off $\{k-1,k,n\}$, with equality at the nodes; this is the
interpolation argument of \cite[Lemma~19]{CohnZhao2014Energy}.  Dividing by
$(w-k)(w-k+1)>0$ shows that the quotient in \eqref{eq:eta-def} exceeds
$\eta^*$ for $w\neq n$ and equals $\eta^*$ at $w=n$.  Thus the minimum is
attained uniquely at $w=n$, and $\eta_{n,Q}=\eta^*$.  Strict complete
monotonicity (Section~\ref{sec:qmds}) gives $a>0$ and $W(n)<W(k)$, so
$\eta^*<a/(s+1)<a$.  Hence $\kappa_{n,Q}=\eta^*$, which is
\eqref{eq:kappa-closed}, and $W\geq H$ is \eqref{eq:minorant}.

It remains to prove \eqref{eq:kappa-bound}.
Recall the coefficient-norm representation \eqref{eq:W-norm}, with
$u(z)=1+(Q-1)z$ and $b(z)=1-z$.
Put $R_Q(z)\coloneqq z^2+(Q-2)z+1$.  On $|z|=1$ one has
$|u|^2-|b|^2=Qz^{-1}R_Q(z)$, so taking second differences in
\eqref{eq:W-norm} and applying Parseval gives
\begin{equation}
 \begin{aligned}
 &W(w+2)-2W(w+1)+W(w)\\
 &\quad=Q^2\|u^{n-w-2}b^{w-1}R_Q\|_2^2\\
 &\quad\geq2Q^2.
 \end{aligned}
 \label{eq:curvature}
\end{equation}
The last inequality uses the distinct nonzero constant and leading
coefficients.  Write $C_j$ for the left side of \eqref{eq:curvature} at
$w=j$.  Telescoping from $W(k-1)-W(k)=a$ gives
\[
 W(n)-W(k)+as=\sum_{h=0}^{s-1}(s-h)C_{k-1+h}\geq2Q^2\,\frac{s(s+1)}2,
\]
so $\kappa_{n,Q}=\eta^*\geq Q^2$.
\end{proof}

\section{Stabilizer refinements}
\label{app:stabilizer}
\label{sec:stabilizer}

Stabilizer structure turns the covering defects into integer counts.
We first use scalar orbits for the quantum MDS gap, and then exploit
syndrome collisions for the stronger one-error bound.
Throughout this appendix, \(q\) is prime.
For phase-free labels \(a=(u,v)\) and \(b=(u',v')\) in
\(\mathbb F_q^{2n}\), we use the symplectic form
\[
 \langle a,b\rangle_{\mathrm{sp}}
 \coloneqq v\mathbin{\cdot}u'-u\mathbin{\cdot}v',
\]
and write \(S^{\perp_{\mathrm{sp}}}\) for the subspace orthogonal to
\(S\) under this form.

\subsection{Quantum MDS gap}

The exact remainder also has an arithmetic consequence unavailable for
arbitrary projectors.

\begin{corollary}[Stabilizer gap at quantum MDS parameters]
\label{cor:QMDS-stabilizer-gap}
Assume that \(q\) is prime and that \(P\) is an \([[n,\ell]]_q\) stabilizer
projector with the rank in \eqref{eq:QMDS-parameters}, and write
\(g(P)\coloneqq\cD_Q(P)-\delta^{\mathrm{QMDS}}_{n,q,\Delta}\).  Then either \(P\) is
quantum MDS and \(g(P)=0\), or
\begin{equation}
 g(P)\geq(q-1)\frac{K^2\mu_t}{q^n}.
 \label{eq:QMDS-stabilizer-gap}
\end{equation}
In particular, if no stabilizer quantum MDS code exists at the prescribed
parameters, the corresponding positive lower bound applies to every
stabilizer projector of that rank.
\end{corollary}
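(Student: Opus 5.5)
Starting from the exact remainder of Theorem~\ref{thm:QMDS}, the plan is to use stabilizer integrality to turn ``some forbidden coordinate is nonzero'' into ``that coordinate is at least an explicit positive quantum.'' By \eqref{eq:QMDS-stability},
\[
 g(P)\geq\frac{\mu_t}{2}\operatorname{Def}_\Delta(P),
\]
and $\operatorname{Def}_\Delta(P)$ is given, in its second form in \eqref{eq:QMDS-operator-defect}, by sums over Weyl labels of $|\Tr(PU_a)|^2/q^n$ and $\|PU_aP\|_{\HS}^2/Q^t$. For a stabilizer projector with stabilizer group $S$ (phase-free label subspace of dimension $n-\ell$, $K=q^\ell$), one has $P=q^{-(n-\ell)}\sum_{s\in S}\omega_sU_s$. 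Hence $|\Tr(PU_a)|^2=K^2$ exactly when $a\in S$ and vanishes otherwise, and $\|PU_aP\|_{\HS}^2=K$ exactly when $a\in S^{\perp_{\mathrm{sp}}}$ and vanishes otherwise.

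Suppose $P$ is not quantum MDS. By Theorem~\ref{thm:QMDS}, $B_s(P)\neq0$ for some $1\leq s\leq t$. So there is a nonzero label $a\in S^{\perp_{\mathrm{sp}}}$ with $\wt(a)\leq t$. Since $q$ is prime, the nonzero scalar multiples $ca$, $c\in\mathbb F_q^\times$, give $q-1$ distinct phase-free labels of the same weight, all lying in the subspace $S^{\perp_{\mathrm{sp}}}$. This scalar-orbit step is where primality enters, since it makes labels $\mathbb F_q$-vectors and the group a linear subspace. We then split into two cases.

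\emph{Case $a\notin S$.} The $B$-term alone gives $\operatorname{Def}_\Delta\geq(q-1)K/Q^t$, and $K/Q^t=K^2/q^n$ because $Q^t=q^n/K$. This yields $g\geq(q-1)K^2\mu_t/(2q^n)$, which is off by the factor $2$.

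\emph{Case $a\in S$.} Then the orbit contributes to both the $A$-sum and the $B$-sum (since $S\subseteq S^{\perp_{\mathrm{sp}}}$), provided $\wt(a)<d_*$. That holds because $t\leq n-t<d_*$ when $\ell\geq0$. Here $\operatorname{Def}_\Delta\geq(q-1)(K^2/q^n+K^2/q^n)$, giving the full bound.

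The expected obstacle is removing the factor $1/2$ in the first case. I would not use the averaged inequality there, but the single exact identity \eqref{eq:QMDS-two-gaps} with $m=t$. Its terms are nonnegative, and its $B$-coefficients satisfy $\widehat h_t(s)\geq\mu_t/Q^t$, by the $j=t$ summand used in the proof of Corollary~\ref{cor:QMDS-rigidity}. Dropping the $A$-part then gives
\[
 g(P)\geq K\sum_{s=1}^t\widehat h_t(s)B_s(P)\geq\frac{K\mu_t}{Q^t}\sum_{s\leq t}B_s\geq(q-1)\frac{K\mu_t}{Q^t}\cdot\frac{K}{K}.
\]
Here I use that $KB_s$ counts the labels in $S^{\perp_{\mathrm{sp}}}$ of weight $s$, each weighted by $K$, so $KB_s$ is $K$ times that count. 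The result is $g\geq(q-1)K^2\mu_t/q^n$ using $K/Q^t=K^2/q^n$. This handles both cases uniformly, so the case split becomes unnecessary.

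The final clause is then immediate. If no stabilizer quantum MDS code exists, every stabilizer projector of that rank falls into the second alternative.
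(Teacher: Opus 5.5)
Your final argument is correct and is essentially the paper's proof. You use the single $m=t$ identity in \eqref{eq:QMDS-two-gaps}, drop its nonnegative $A$-part, bound $\widehat h_t(s)\geq\mu_t/Q^t$ by the $j=t$ summand, and note that a nonzero normalizer layer is a union of full $\mathbb F_q^*$-orbits, so $B_s\geq q-1$ (the preliminary case split via the averaged bound \eqref{eq:QMDS-stability} and the stray factor $K/K$ can simply be deleted).
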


\begin{proof}[Proof of Corollary~\ref{cor:QMDS-stabilizer-gap}]
Let $S_0^{\mathrm{lab}}\subseteq\mathbb F_q^{2n}$ and
$N_0^{\mathrm{lab}}\coloneqq(S_0^{\mathrm{lab}})^{\perp_{\mathrm{sp}}}$ be the phase-free label sets
of the stabilizer and its normalizer.  With generator phases chosen to fix the
code, Weyl orthogonality gives
\begin{align*}
 |\Tr(PU_a)|^2
 &=
 \begin{cases}
 K^2,&a\in S_0^{\mathrm{lab}},\\
 0,&a\notin S_0^{\mathrm{lab}},
 \end{cases}\\
 \Tr(PU_aPU_a^\dagger)
 &=
 \begin{cases}
 K,&a\in N_0^{\mathrm{lab}},\\
 0,&a\notin N_0^{\mathrm{lab}}.
 \end{cases}
\end{align*}
Thus \(A_w\) and \(B_w\) count weight-$w$ stabilizer and normalizer labels,
respectively, in the normalization \eqref{eq:A-def} and \eqref{eq:B-def}.
Every nonzero weight layer is a union of
\(\mathbb F_q^*\)-orbits, so each positive coordinate is at least \(q-1\).
If \(P\) is not quantum MDS, Theorem~\ref{thm:QMDS} gives
\(B_s>0\) for some \(1\leq s\leq t\).  The case \(m=t\) of
\eqref{eq:QMDS-two-gaps} has nonnegative terms, and the \(j=t\) summand in
\eqref{eq:QMDS-hhat} gives \(\widehat h_t(s)\geq\mu_t/Q^t\); therefore
\[
 g(P)\geq K\frac{\mu_t}{Q^t}(q-1)
 =(q-1)\frac{K^2\mu_t}{q^n}.
\]
\end{proof}

\subsection{Syndrome collisions}

This subsection uses prime-$q$ stabilizer structure to express the radius-one
covering defect as an integer syndrome-collision count and derive the explicit
nonattainment gap in Corollary~\ref{cor:discrete-gap}.

Let $P$ be an
$[[n,\ell]]_q$ stabilizer-code projector, so $K=q^\ell$.  Put
$h\coloneqq n-\ell$.  Choose commuting
stabilizer generators $S_1,\ldots,S_h$ that fix the code, and let
$g_1,\ldots,g_h\in\mathbb F_q^{2n}$ be their phase-free Weyl labels.  Identify the
syndrome set with
$\mathcal S\coloneqq\mathbb F_q^h$, where
$M\coloneqq|\mathcal S|=q^h=d/K$.  Define the linear syndrome map by
\[
 \sigma(a)\coloneqq
 \bigl(\langle g_1,a\rangle_{\mathrm{sp}},\ldots,
       \langle g_h,a\rangle_{\mathrm{sp}}\bigr).
\]
Put $\omega\coloneqq e^{2\pi i/q}$ and
$S(c)\coloneqq\prod_{i=1}^hS_i^{c_i}$ for $c\in\mathbb F_q^h$.  The mutually
orthogonal rank-$K$ syndrome projectors are
\begin{equation}
 P_s\coloneqq\frac1M\sum_{c\in\mathbb F_q^h}\omega^{-c\cdot s}S(c),
 \qquad s\in\mathcal S,
 \label{eq:syndrome-projectors}
\end{equation}
so that $S_iP_s=\omega^{s_i}P_s$ and
$\sum_{s\in\mathcal S}P_s=I$.  The Weyl commutation rule gives
$S_iU_a=\omega^{\langle g_i,a\rangle_{\mathrm{sp}}}U_aS_i$; hence
\begin{equation}
 U_aPU_a^\dagger=P_{\sigma(a)}.
 \label{eq:syndrome-action}
\end{equation}
Define
\begin{equation}
 m_r(s)\coloneqq|\{a\in\cB_r:\sigma(a)=s\}|.
 \label{eq:multiplicity}
\end{equation}
Then
\begin{equation}
 \begin{aligned}
 T_r(P)&=\sum_s m_r(s)P_s,\\
 \Phi_r(P)&=K\sum_s\left(m_r(s)-\frac{V_r}{M}\right)^2.
 \end{aligned}
 \label{eq:stabilizer-variance}
\end{equation}
Thus \eqref{eq:stabilizer-variance} is the classical syndrome-multiplicity
variance of an additive code.

The linearity, scalar orbits, and symplectic self-orthogonality used below are
standard features of the geometric description of stabilizer codes
\cite{BierbrauerEtAl2008Geometry,BallCentellesHuber2023Geometry}.  The moment
identities in Lemma~\ref{lem:syndrome-moment} are their coordinate form.
Corollary~\ref{cor:discrete-gap} converts these standard constraints into an
explicit lower bound on $\sum_s(m_1(s)-1)^2$.

Primality is used precisely in this stabilizer refinement: it makes the chosen
Weyl labels and syndrome geometry $\mathbb F_q$-linear, so every nonzero layer
splits into full $\mathbb F_q^*$-orbits.  The arbitrary-projector results in
the preceding sections require only $q\geq2$.  For non-prime $q$ the labels
$\mathbb Z_q^2$ do not form an $\mathbb F_q$-vector space; for $q=p^m$,
stabilizer codes are naturally described in an $\mathbb F_q$-Pauli basis and
may be $\mathbb F_q$-linear or merely $\mathbb F_p$-additive.  Extending the
orbit-based constants to any non-prime $q$, including $\mathbb F_q$-linear
codes, remains open.

For binary stabilizer codes that actually correct one error, the stronger
Hamming-like bound of \cite[Corollary~2]{NemecTansuwannont2023HammingLike}
already forces nondegeneracy at the present equality parameters.
Corollary~\ref{cor:discrete-gap}
has a different scope: it applies to every stabilizer projector of the fixed
rank, without any distance or correction hypothesis.  In particular, the
positive-defect witness below is not a one-error-correcting code.

At the one-error parameters \eqref{eq:admissible}, $M=V$, and
\begin{equation}
 \frac{\Phi_1(P)}K=\sum_s(m_1(s)-1)^2
 =\sum_s m_1(s)(m_1(s)-1).
 \label{eq:integer-defect}
\end{equation}
The last expression counts ordered collisions.
The syndrome multiplicities are invariant under nonzero scalar multiplication.
In the binary case, stabilizer commutativity also imposes
a quadratic moment.

\begin{lemma}[Syndrome symmetries and moments]
\label{lem:syndrome-moment}
Under \eqref{eq:admissible}, put $d_s\coloneqq m_1(s)-1$.  Every stabilizer syndrome
profile satisfies
\begin{equation}
 \sum_{s\in\mathcal S}d_s=0,
 \qquad
 \sum_{s\in\mathcal S}d_s s=0
 \quad\text{in }\mathbb F_q^h.
 \label{eq:syndrome-moment}
\end{equation}
Moreover,
\begin{equation}
 d_{\alpha s}=d_s
 \quad(\alpha\in\mathbb F_q^*,\ s\ne0),
 \qquad q-1\mid d_0.
 \label{eq:syndrome-scalar}
\end{equation}
For every prime $q$, one also has
\begin{equation}
 \sum_{s\in\mathbb F_q^h}d_s ss^{\mathsf T}=0
 \quad\text{in }\mathbb F_q^{h\times h}.
 \label{eq:syndrome-quadratic}
\end{equation}
\end{lemma}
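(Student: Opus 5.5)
The whole argument rests on the syndrome map $\sigma$ being $\mathbb F_q$-linear in the label $a$, together with a description of the multiset $\{\sigma(a):a\in\cB_1\}$. I will treat $\sum_s m_1(s)F(s)=\sum_{a\in\cB_1}F(\sigma(a))$ as the basic counting identity. Since $M=V=|\cB_1|$, subtracting $\sum_sF(s)$ gives
\[
 \sum_s d_sF(s)=\sum_{a\in\cB_1}F(\sigma(a))-\sum_{s\in\mathbb F_q^h}F(s).
\]
Taking $F\equiv1$ gives $\sum_sd_s=V-M=0$.

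\textbf{Linear moment.} I will take $F(s)=s$. The full syndrome space sums to zero whenever $h\geq1$ (if $q=2$ and $h=1$ then $V=2$, which is impossible). So it suffices to show $\sum_{a\in\cB_1}\sigma(a)=0$. By linearity this equals $\sigma(\sum_{a\in\cB_1}a)$. Fix a coordinate $i$. The weight-one labels supported on $i$ run over all nonzero $(u_i,v_i)\in\mathbb F_q^2$, and their sum is $\sum_{x\in\mathbb F_q^2}x=0$. Hence $\sum_{a\in\cB_1}a=0$ and the moment vanishes.

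\textbf{Scalar invariance.} For $\alpha\in\mathbb F_q^*$, multiplication $a\mapsto\alpha a$ is a weight-preserving bijection of $\cB_1$ with $\sigma(\alpha a)=\alpha\sigma(a)$. Therefore $m_1(\alpha s)=m_1(s)$, and so $d_{\alpha s}=d_s$. For the second claim, $m_1(0)$ counts the identity plus the nonzero labels in $\cB_1$ with zero syndrome. Those nonzero labels form a union of free $\mathbb F_q^*$-orbits, so $m_1(0)\equiv1\pmod{q-1}$ and $q-1\mid d_0$.

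\textbf{Quadratic moment.} I will take $F(s)=ss^{\mathsf T}$. First I must show $\sum_{s\in\mathbb F_q^h}ss^{\mathsf T}=0$ in $\mathbb F_q$. Each diagonal entry is $q^{h-1}\sum_{x\in\mathbb F_q}x^2$, and each off-diagonal entry is $q^{h-2}(\sum_x x)^2$. The factor $\sum_x x^2$ vanishes for $q>3$ and also for $q=3$. For $q=2$ it equals $1$, but it carries the factor $q^{h-1}=2^{h-1}$, which is $0$ once $h\geq2$. The off-diagonal terms need separate care at $h=2$ only when $q=2$, and there $(\sum_x x)^2=1$ with $q^{0}=1$. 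Since $h\geq2$ always holds here (as $V\geq 16$ for $q=2$), I expect this to come out zero except possibly in small degenerate cases, which I will check directly.

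The remaining and main step is $\sum_{a\in\cB_1}\sigma(a)\sigma(a)^{\mathsf T}=0$. Its $(j,k)$ entry is
\[
 \sum_a\langle g_j,a\rangle_{\mathrm{sp}}\langle g_k,a\rangle_{\mathrm{sp}}=\sum_i\beta_i(g_j^{(i)},g_k^{(i)}),
\]
where $\beta_i(x,y)=\sum_{z\in\mathbb F_q^2}\omega_{\mathrm{sp}}(x,z)\omega_{\mathrm{sp}}(y,z)$ is the local form on coordinate $i$. I will compute $\beta_i$ explicitly. For $q$ odd it should equal $q^2$ times the symmetric form $x\cdot y$ with $\sum_z z_1z_2=0$, hence $0$ mod $q$. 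For $q=2$ the sum over the three nonzero $z$ of $\omega(x,z)\omega(y,z)$ should reduce to $[x=y\neq0]+\ldots$; I expect it to equal $\omega_{\mathrm{sp}}(x,y)$ plus a diagonal term. Summing over $i$ then gives $\langle g_j,g_k\rangle_{\mathrm{sp}}=0$, which is exactly where stabilizer commutativity enters, plus a diagonal term that must be absorbed by the matching $q=2$ correction from the full-space sum. Getting this $q=2$ bookkeeping consistent, including the diagonal entries where $\sum_x x^2\neq0$, is the obstacle. I will attack it by separately verifying the diagonal entries using $\sigma(a)_j^2=\sigma(a)_j$ over $\mathbb F_2$, which reduces them to the linear moment already proved.
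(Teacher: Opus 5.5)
\textbf{Verdict: there is a genuine gap, in the binary case of the quadratic moment; the rest is fine.}

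Your zeroth and linear moments, the scalar invariance, and the odd-$q$ quadratic moment are correct. They follow the paper's route: linearity of $\sigma$, the fact that each coordinate copy of $\mathbb F_q^2$ sums to zero, and free $\mathbb F_q^*$-orbits.

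The gap is the step $\sum_{a\in\cB_1}\sigma(a)\sigma(a)^{\mathsf T}=0$ for $q=2$, which you leave open. Worse, the form you anticipate for the local form would break the argument if it were true. Suppose $\beta_i(x,y)$ were $\omega_{\mathrm{sp}}(x,y)+[x=y\ne0]$. Then the off-diagonal entry $(j,k)$ would pick up $\sum_i[g_j^{(i)}=g_k^{(i)}\ne0]$, and commutativity says nothing about that quantity. Your idempotence trick $s_j^2=s_j$ is valid, but it only disposes of the entries with $j=k$. There is also no correction from the uniform sum that could absorb such a term. The matrix $\sum_{s\in\mathbb F_2^h}ss^{\mathsf T}$ has entries $2^{h-1}$ and $2^{h-2}$, and these vanish because $V\geq16$ forces $h\geq4$. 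So no degenerate cases remain to check either.

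The missing computation is one line. On one coordinate write $\omega_{\mathrm{sp}}(x,z)=x_2z_1-x_1z_2$. Then
\[
 \beta(x,y)=x_2y_2\sum_z z_1^2-(x_1y_2+x_2y_1)\sum_z z_1z_2+x_1y_1\sum_z z_2^2 .
\]
Over $\mathbb F_2$ one has $\sum_{z\in\mathbb F_2^2}z_1^2=\sum_z z_2^2=2\equiv0$ and $\sum_z z_1z_2=1$. Hence $\beta(x,y)=\omega_{\mathrm{sp}}(x,y)$ exactly: the would-be term at $x=y\ne0$ counts two vectors $z$ and is even. Summing over coordinates gives $\langle g_j,g_k\rangle_{\mathrm{sp}}=0$ for all $j,k$, diagonal included.

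This is the paper's computation in another form. The three nonidentity labels on coordinate $i$ have syndromes $x_i,z_i,x_i+z_i$. Over $\mathbb F_2$,
\[
 x_ix_i^{\mathsf T}+z_iz_i^{\mathsf T}+(x_i+z_i)(x_i+z_i)^{\mathsf T}=x_iz_i^{\mathsf T}+z_ix_i^{\mathsf T},
\]
and the sum of these over $i$ is the matrix of symplectic products of the generators.

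One minor slip: $\sum_{x\in\mathbb F_3}x^2=2\ne0$. This is harmless only because of the factor $3^{h-1}$ with $h\geq2$.
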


\begin{proof}[Proof of Lemma~\ref{lem:syndrome-moment}]
Recall
$\mathcal S=\mathbb F_q^h$, $h=n-\ell$, and $d_s=m_1(s)-1$.
At Hamming equality $q^h=M=V=1+n(q^2-1)$.  Since $n\geq3$, this forces
$h\geq2$.  The radius-one Pauli labels are the union of the $n$ coordinate
copies of $\mathbb F_q^2$, with their common zero counted once.  The sum of
the vectors in each copy is zero, and linearity of $\sigma$ therefore gives
\[
 \sum_{s\in\mathcal S}m_1(s)s
 =\sum_{a\in\cB_1}\sigma(a)=0.
\]
Every coordinate value occurs $q^{h-1}$ times in $\mathbb F_q^h$, so
$\sum_{s\in\mathcal S}s=0$ as well.  Subtracting these identities proves
\eqref{eq:syndrome-moment}.

Multiplication by any $\alpha\in\mathbb F_q^*$ permutes the nonzero
radius-one labels, while
$\sigma(\alpha a)=\alpha\sigma(a)$.  This proves the first assertion in
\eqref{eq:syndrome-scalar}.  The nonidentity labels of syndrome zero split
into scalar orbits of size $q-1$, so $m_1(0)-1=d_0$ is divisible by $q-1$.

It remains to prove \eqref{eq:syndrome-quadratic}.  Let
$x_i,z_i\in\mathbb F_q^h$ be the syndromes of the two standard basis labels
in coordinate $i$.  When $q=2$, the three nonidentity labels in that
coordinate have syndromes $x_i,z_i,x_i+z_i$, whence
\[
 \sum_{a\in\cB_1}\sigma(a)\sigma(a)^{\mathsf T}
 =\sum_{i=1}^n(x_i z_i^{\mathsf T}+z_i x_i^{\mathsf T})=0.
\]
The final equality is exactly the pairwise commutativity condition for the
chosen stabilizer generators.  If $q>2$, the contribution of one coordinate
is instead
\[
 \sum_{u,v\in\mathbb F_q}(ux_i+vz_i)(ux_i+vz_i)^{\mathsf T}=0,
\]
because $\sum_u u=0$ and the diagonal sums carry a factor $q=0$ in
$\mathbb F_q$.  Thus the radius-one second moment vanishes for every prime
$q$.  The uniform second moment vanishes as well: for $q=2$, admissibility
with $n\geq3$ forces $h\geq4$, and for $q>2$ this follows directly by
summing each coordinate over $\mathbb F_q^h$.  Subtracting the uniform
moment proves \eqref{eq:syndrome-quadratic}.
\end{proof}

These constraints force a nonconstant syndrome profile to have a definite
squared distance from the uniform profile.  Combining that arithmetic
separation with Theorem~\ref{thm:main} gives a positive discrepancy gap.

\begin{corollary}[Stabilizer nonattainment gap]
\label{cor:discrete-gap}
Under \eqref{eq:admissible}, a stabilizer projector satisfies either
$\Phi_1(P)=0$ or
\begin{equation}
 \Phi_1(P)\geq\gamma_qK,\qquad
 \cD_Q(P)-\delta_{n,Q}\geq\gamma_qKc_{n,Q}
 \geq\gamma_qK,
 \label{eq:stabilizer-gap}
\end{equation}
where
\begin{equation}
 \gamma_q\coloneqq\max\{8,2(q-1)\}.
 \label{eq:gamma-q}
\end{equation}
For $q=2$, the defect bound $\Phi_1(P)\geq8K$ is sharp already at the
five-qubit Hamming parameters.
\end{corollary}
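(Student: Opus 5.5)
The plan is to work entirely with the integer profile $d_s=m_1(s)-1$. By \eqref{eq:integer-defect}, $\Phi_1(P)/K=\sum_s d_s^2$, so once we show that a nonzero profile has $\sum_s d_s^2\geq\gamma_q$, the discrepancy bound follows at once from Theorem~\ref{thm:main}: $\cD_Q(P)-\delta_{n,Q}\geq c_{n,Q}\Phi_1(P)\geq\gamma_qKc_{n,Q}\geq\gamma_qK$, since $c_{n,Q}\geq1$. Thus everything reduces to a combinatorial lower bound on nonzero integer vectors $(d_s)_{s\in\mathbb F_q^h}$ that satisfy $d_s\geq-1$ and the constraints of Lemma~\ref{lem:syndrome-moment}.

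\emph{Bound $2(q-1)$.} Suppose $d\neq0$. Because $\sum_sd_s=0$, some $d_s$ is negative, so $d_s=-1$, and some $d_{s'}$ is positive. A nonzero syndrome carries the same value $d$ on its whole $\mathbb F_q^*$-orbit of size $q-1$, by \eqref{eq:syndrome-scalar}. If $s$ and $s'$ are both nonzero, this gives $\sum d_s^2\geq2(q-1)$. If $s=0$, then $q-1\mid d_0=-1$ forces $q=2$, and the bound $2(q-1)=2$ is trivial there. If $s'=0$, then $d_0\geq q-1$ contributes at least $(q-1)^2$, while the negative mass, which must total at least $q-1$, contributes at least $q-1$. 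For $q\geq3$ this already gives at least $2(q-1)$. For $q=2$ we only need $\geq2$, which is immediate.

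\emph{Bound $8$.} This is the main step and matters only for $q\leq5$. I would use the first and second moment identities. For $q=2$, let $N=\{s:d_s=-1\}$ and split the positive part $\sum_{d_s>0}d_s=|N|$. The conditions $\sum d_ss=0$ and $\sum d_sss^{\mathsf T}=0$ over $\mathbb F_2$ say that the multiset $D$ (each $s$ taken with multiplicity $d_s\bmod2$, and signs ignored mod $2$) has zero sum and zero ``quadratic sum''. Over $\mathbb F_2$, the diagonal of $ss^{\mathsf T}$ equals $s$, and the off-diagonal entries give $s_is_j$. So the mod-$2$ support set $T$ of $d$ must have zero sum and zero pairwise-product sums. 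A nonempty such set has size at least $4$ (indeed $T$ is like a code with all weights in a degree-$2$ Reed--Muller dual, which has minimum size $4$ for an affine-plane-type configuration). I would first check directly that sets of size $1,2,3$ fail: for size $3$, $s_1+s_2+s_3=0$ combined with $\sum s_is_i^{\mathsf T}=0$ forces $s_1s_2^{\mathsf T}+s_2s_1^{\mathsf T}=0$, hence $s_1,s_2$ dependent, hence equal or zero. If $T$ is empty, every $d_s$ is even, so $d_s\in\{0,2,4,\dots\}$; then $\sum d_s=0$ forces $d=0$. Otherwise $|T|\geq4$, and the values are odd there. Combining this with $\sum d_s=0$, $d_s\geq-1$, and parity, the cheapest configuration is four entries equal to $-1$ balanced by positives whose sum is $4$. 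Its square sum is minimized by four $+1$'s, giving $8$, or by one $+4$, giving $20$. A more careful parity count therefore yields $\sum d_s^2\geq8$. For $q=3,4$ ($q$ prime, so $q=3$) and $q=5$, $2(q-1)\leq8$, so I would redo the count using orbits: orbits of size $q-1$ combined with zero-sum force at least two orbits, giving $2(q-1)$. Reaching $8$ at $q=3$ would need the first-moment constraint to exclude a single $-1$ orbit paired with a single $+1$ orbit, since $\pm s$ orbits cancel in $\sum d_ss$ only under the right sign pattern. I expect this small-case analysis to be the delicate part.

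\emph{Sharpness.} For sharpness at $n=5$, $q=2$, I would exhibit an explicit $[[5,1]]_2$ stabilizer group that is not distance $3$ and whose $16$ radius-one syndromes hit $12$ classes once, $4$ classes twice, and leave $4$ classes empty, so that $\Phi_1=8K$. I would verify this by listing the syndromes.
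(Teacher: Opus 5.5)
Your reduction to the profile $d_s=m_1(s)-1$, the final appeal to Theorem~\ref{thm:main}, and your orbit argument for $\sum_sd_s^2\geq2(q-1)$ are sound and match the paper. The gap is in lifting $2(q-1)$ to $8$, which is the whole content of the corollary for $q\in\{2,3\}$.

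\textbf{The case $q=2$.} The bound $|T|\geq4$ on the odd support does not yield $8$. Your count tacitly takes all odd entries to be $-1$. But two entries $-1$ and two entries $+1$ satisfy $\sum_sd_s=0$, $d_s\geq-1$ and parity, and have $\sum_sd_s^2=4$. Your ``four $+1$'s'' configuration in fact has $|T|=8$. Also, $4$ is not the true minimum. Your constraints, plus $|T|$ even (from $\sum_sd_s=0$), say that $\mathbf 1_T$ is orthogonal to every polynomial of degree at most two on $\mathbb F_2^h$. Hence $\mathbf 1_T\in\operatorname{RM}(2,h)^\perp=\operatorname{RM}(h-3,h)$, whose minimum weight is $8$, attained by $3$-flats. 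An affine plane is not a codeword: for $T=\{0,e_1,e_2,e_1+e_2\}$ one gets $\sum_{s\in T}s_1s_2=1$. Note also that $T=\{0\}$ passes the linear and quadratic tests; only parity excludes it. Once $|T|\geq8$ is known, $\sum_sd_s^2\geq|T|\geq8$ is immediate, and this is the paper's argument.

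\textbf{The case $q=3$.} The first moment cannot do what you want. Since $d_{-s}=d_s$, each orbit $\{s,-s\}$ contributes $d_ss+d_{-s}(-s)=0$ to $\sum_sd_ss$. So that constraint holds automatically for every scalar-invariant profile, including one orbit at $+1$ and one at $-1$. The orbit formula leaves exactly two profiles below $8$:
\begin{itemize}
\item value $4$: $d_0=0$ and two orbits with $e_O=\pm1$;
\item value $6$: $d_0=2$ and one orbit with $e_O=-1$.
\end{itemize}
What excludes them is the quadratic moment \eqref{eq:syndrome-quadratic}. An orbit contributes $(1+4)e_Oss^{\mathsf T}=2e_Oss^{\mathsf T}$ in $\mathbb F_3^{h\times h}$, so $\sum_Oe_Os_Os_O^{\mathsf T}=0$. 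In the first profile this equates the rank-one matrices of two distinct orbits, hence forces the orbits to coincide. In the second it forces a nonzero rank-one matrix to vanish.

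\textbf{Sharpness example.} Your profile is miscounted: there are only $16$ syndromes and $16$ radius-one labels. The profile must have multiplicity $2$ on four syndromes, $1$ on eight and $0$ on four. You still need to write down and check a concrete generating set; the paper uses $\mathtt{XIXZY},\mathtt{IZZZZ},\mathtt{IXXII},\mathtt{IIIXX}$, whose odd support is exactly a $3$-flat.
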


\begin{proof}[Proof of Corollary~\ref{cor:discrete-gap}]
First suppose that $q$ is odd.  On each projective orbit
$O\subseteq\mathbb F_q^h\setminus\{0\}$, write $d_s\eqqcolon e_O$, and write
$d_0\eqqcolon(q-1)f$.  Equations \eqref{eq:syndrome-moment} and
\eqref{eq:syndrome-scalar} give
\[
 \begin{aligned}
 f+\sum_Oe_O&=0,\\
 \sum_s d_s^2
 &=(q-1)\left((q-1)f^2+\sum_Oe_O^2\right).
 \end{aligned}
\]
If $f=0$ and the profile is nonconstant, the nonzero integer vector
$(e_O)_O$ has sum zero, so $\sum_Oe_O^2\geq2$.  If $f\ne0$, then
$\sum_Oe_O^2\geq\sum_O|e_O|\geq|f|$, and the displayed expression is at
least $q(q-1)$.  In either case it is at least $2(q-1)$.

For $q\geq5$, this already gives \eqref{eq:gamma-q}.  When $q=3$, the only
positive values below $8$ left by the displayed formula are $4$ and $6$.
Indeed, if $f=0$ the value $2\sum_Oe_O^2<8$ forces $\sum_Oe_O^2=2$ (the
value $3$ is impossible for entries $\pm1$ summing to $0$); if $f\neq0$ it
forces $f=\pm1$ and a single orbit with $e_O=-f$.  Value $4$ would require $f=0$ and two distinct projective orbits with
coefficients $1$ and $-1$.  Equation \eqref{eq:syndrome-quadratic} would
then equate their rank-one matrices, forcing the two orbits to coincide.
Value $6$ would require $f=\pm1$ and a single nonzero orbit coefficient
$-f$; the same quadratic moment would force a nonzero rank-one matrix to
vanish.  Both are impossible, so the positive defect is at least $8$ also
for $q=3$.

Now suppose that $q=2$.  Reduce $(d_s)_s$ modulo two and let
$\Omega\coloneqq\{s:d_s\text{ is odd}\}$.  Equations
\eqref{eq:syndrome-moment} and \eqref{eq:syndrome-quadratic} say that the
 indicator of $\Omega$ is orthogonal to the evaluations on $\mathbb F_2^h$
 of every polynomial of degree at most two.  Hence it belongs to the binary
 Reed--Muller code
 $\operatorname{RM}(2,h)^\perp=\operatorname{RM}(h-3,h)$, whose minimum
distance is $8$
\cite{DelsarteGoethalsMacWilliams1970RM}
\cite[Ch.~13, Sec.~3, Thms.~3--4]{MacWilliamsSloane1977}.  If $\Omega$ is
nonempty, then $\sum_s d_s^2\geq|\Omega|\geq8$.  If $\Omega$ is empty,
every $d_s$ is even.  Since $d_s=m_1(s)-1\geq-1$, each $d_s$ is then
nonnegative, and $\sum_s d_s=0$ forces $d_s=0$ for every $s$.  Thus the
nonconstant case necessarily has $\Omega\ne\varnothing$.

The support bound $|\Omega|\geq8$ corresponds, after adjoining the zero
vector, to the smallest case $y=7$ (a plane) in
\cite[Proposition~5.5]{BierbrauerEtAl2008Geometry}.  Thus
the Reed--Muller/projective support structure used here is known; the new
content is its conversion into an explicit lower bound on $\Phi_1(P)/K$
and then into the discrepancy gap.

Equation \eqref{eq:integer-defect} and Theorem~\ref{thm:main} now give
\eqref{eq:stabilizer-gap}.  To see binary sharpness at the defect level,
take the length-five stabilizer generated by
\[
 \mathtt{XIXZY},\qquad \mathtt{IZZZZ},\qquad \mathtt{IXXII},\qquad \mathtt{IIIXX}.
\]
With syndrome coordinates ordered by these generators, its radius-one
profile is
\[
 m_1(s_1,s_2,s_3,s_4)=
 \begin{cases}
  2,&(s_3,s_4)=(0,0),\\
  0,&(s_3,s_4)=(1,1),\\
  1,&\text{otherwise}.
 \end{cases}
\]
Thus it has multiplicity two on four syndromes, one on eight syndromes, and
zero on four syndromes, and
$\Phi_1(P)/K=4+4=8$.
\end{proof}

At the common prime-$q$ parameters $\Delta=3$ and $n=Q+1$, the two
stabilizer gaps have the same benchmark and zero set.  Here
$c_{n,Q}=\mu_2/(2Q^2)$, so Corollary~\ref{cor:QMDS-stabilizer-gap} gives
$g(P)\geq2(q-1)Kc_{n,Q}$, whereas
Corollary~\ref{cor:discrete-gap} gives
$g(P)\geq\max\{8,2(q-1)\}Kc_{n,Q}$, which is strictly stronger only for
$q\in\{2,3\}$.
Corollary~\ref{cor:QMDS-stabilizer-gap} detects a first nonzero
low-weight normalizer layer through scalar orbits;
Corollary~\ref{cor:discrete-gap} uses the full
syndrome-collision profile and its linear and quadratic moment constraints.

\section{Recovery extensions and noise examples}
\label{app:noise}

The main text treats noise with Kraus operators in the span of a specified
ball.  Here we record outside-ball extensions and a comparison for coherent
errors, followed by a concrete point on the five-qubit family.  Each extension uses the
same polar recovery as its corresponding main-text result.

\subsection{Higher-radius extensions}

Use the quantum MDS parameters, the quantities \(g(P)\) and \(C_e\) of
Lemma~\ref{lem:QMDS-polar}, and the recovery \(\cR_e\) defined before
Corollary~\ref{cor:QMDS-span-infidelity}.

\begin{remark}[Coherent errors in the selected span]
\label{rem:QMDS-coherent-span}
For every channel whose Kraus operators lie in
$\operatorname{span}\{U_a:a\in\cB_e\}$,
Corollary~\ref{cor:QMDS-span-infidelity} bounds the diamond error of $\cR_e$
by $2\min\{1,C_eg(P)\}$.  For comparison, the operator-norm estimate
\eqref{eq:QMDS-delta} can be combined with the
approximate-isometry decoder theorem of Ma, Tan, and Wright
\cite[Theorem~1.10 and Lemma~2.4]{MaTanWright2025Haar}.  Their hypothesis is
that the singular values of the error-image map lie in $[1-\zeta,1+\zeta]$
for some $0\leq\zeta<1$ (their $\delta$), and their conclusion bounds one half of the diamond
error by $\zeta$ \cite[Definitions~1.2, 1.7 and 1.9]{MaTanWright2025Haar};
compare \cite[Remark~1]{ElimelechAlbertBarg2026UniversalRecovery}.  If
$C_eg(P)<1$, then \eqref{eq:QMDS-delta} gives this hypothesis with
$\zeta=1-\sqrt{1-\sqrt{C_eg(P)}}$.  Their result therefore applies to the
polar recovery
$\cR_e$ of Lemma~\ref{lem:polar-decoder} with $\cE=\cB_e$, for every
completely positive trace-preserving channel whose Kraus operators lie in
$\operatorname{span}\{U_a:a\in\cB_e\}$, with unhalved diamond error at most
\[
 2\left(1-\sqrt{1-\sqrt{C_eg(P)}}\right)
 \leq2\sqrt{C_eg(P)}.
\]
This bound is of order $g(P)^{1/2}$ and needs $C_eg(P)<1$; the entanglement
infidelity of these channels is at most $C_e\,g(P)/K$ by
\cite[Theorem~7]{ElimelechAlbertBarg2026UniversalRecovery}.
\end{remark}

\begin{corollary}[Full Pauli noise with an outside-ball tail]
\label{cor:QMDS-noise-tail}
Under the hypotheses of Corollary~\ref{cor:QMDS-recovery}, let
\begin{equation}
 \cN_p^{\mathrm{full}}(\sigma)\coloneqq\sum_{a\in(\mathbb Z_q^2)^n}
 p_aU_a\sigma U_a^\dagger
 \label{eq:full-Pauli-channel}
\end{equation}
be an arbitrary stochastic Pauli channel.  Define
\begin{equation}
 s_e\coloneqq\sum_{a\in\cB_e}p_a,
 \qquad \tau_e\coloneqq1-s_e,
 \qquad p_{e,\ast}\coloneqq\max_{a\in\cB_e}p_a.
 \label{eq:noise-tail-parameters}
\end{equation}
Thus $\tau_e$ is the probability mass outside the radius-$e$ ball and
$p_{e,\ast}$ is the largest unconditional probability of a Pauli label in the
ball, the identity included.  The same polar recovery
$\cR_e$ satisfies
\begin{gather}
 \begin{aligned}
 &\|\cR_e\circ\cN_p^{\mathrm{full}}\circ\operatorname{Enc}
       -\id_K\|_\diamond
 \leq 2\tau_e+2\min\!\left\{
 s_e,p_{e,\ast}C_eg(P)\right\}
 \end{aligned}
 \label{eq:noise-tail-diamond}\\
 \begin{aligned}
 &1-\Fent(I_K/K,\cR_e\circ\cN_p^{\mathrm{full}}
       \circ\operatorname{Enc})
 \leq \tau_e+\min\!\left\{
 s_e,\frac{p_{e,\ast}C_eg(P)}K\right\}.
 \end{aligned}
 \label{eq:noise-tail-fidelity}
\end{gather}
These bounds hold for every value of $g(P)$.  In particular, if $g(P)=0$, the
right-hand sides reduce to $2\tau_e$ and $\tau_e$, respectively.
\end{corollary}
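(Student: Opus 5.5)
The plan is to split the full Pauli channel into its in-ball and outside-ball parts and treat each part with results already in hand.

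Write
\[
 \cN_p^{\mathrm{full}}=s_e\,\cN_{e,\tilde p}+\tau_e\,\cN_{\mathrm{out}},
\]
where $\tilde p_a\coloneqq p_a/s_e$ for $a\in\cB_e$ is the conditional distribution on the ball, and $\cN_{\mathrm{out}}$ is the normalized Pauli mixture over labels outside $\cB_e$. If $s_e=0$ or $\tau_e=0$, the corresponding term is simply dropped. Composition with $\cR_e$ and $\operatorname{Enc}$ is linear, so the logical channel decomposes the same way:
\[
 \Lambda-\id_K=s_e\bigl(\Lambda_{\mathrm{in}}-\id_K\bigr)+\tau_e\bigl(\Lambda_{\mathrm{out}}-\id_K\bigr).
\]
Here $\Lambda_{\mathrm{out}}\coloneqq\cR_e\circ\cN_{\mathrm{out}}\circ\operatorname{Enc}$ is CPTP, so its diamond distance to $\id_K$ is at most $2$. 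This gives the term $2\tau_e$.

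For the in-ball term, apply Corollary~\ref{cor:QMDS-recovery} with the conditional distribution $\tilde p$. Its largest probability is $\tilde p_\ast=p_{e,\ast}/s_e$, so
\[
 \|\Lambda_{\mathrm{in}}-\id_K\|_\diamond\leq2\min\{1,\,p_{e,\ast}C_eg(P)/s_e\}.
\]
Multiplying by $s_e$ gives $2\min\{s_e,\,p_{e,\ast}C_eg(P)\}$, and the triangle inequality yields \eqref{eq:noise-tail-diamond}. The only point to check is that the recovery $\cR_e$ does not depend on $p$; this holds by Lemma~\ref{lem:polar-decoder}, so the same map serves both pieces.

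For the fidelity bound, use that $\Fent(I_K/K,\cdot)$ is affine in the channel: it equals $\langle\Omega|(\id\otimes\Lambda)(|\Omega\rangle\langle\Omega|)|\Omega\rangle$ for a maximally entangled $\Omega$. Therefore
\[
 1-\Fent(\Lambda)=s_e\bigl(1-\Fent(\Lambda_{\mathrm{in}})\bigr)+\tau_e\bigl(1-\Fent(\Lambda_{\mathrm{out}})\bigr).
\]
Bound the outside infidelity by $1$. For the inside infidelity, \eqref{eq:QMDS-diagonal-fidelity} with $\tilde p_\ast$ gives $\min\{1,\,p_{e,\ast}C_eg(P)/(s_eK)\}$. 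Multiplying by $s_e$ produces \eqref{eq:noise-tail-fidelity}.

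Finally, when $g(P)=0$ the minima vanish, leaving $2\tau_e$ and $\tau_e$. No smallness assumption on $g(P)$ enters at any step. I do not expect a genuine obstacle; the only care needed is the degenerate case $s_e=0$. Since $s_e\geq p_{e,\ast}$, that case forces $p_{e,\ast}=0$, and both in-ball contributions then vanish trivially.
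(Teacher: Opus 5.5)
Your proposal is correct and follows the paper's proof. Both arguments condition the distribution on $\cB_e$ and apply Corollary~\ref{cor:QMDS-recovery}, with maximal conditional probability $p_{e,\ast}/s_e$, to the same noise-independent recovery $\cR_e$; both bound the outside-ball channel trivially and combine the two parts by convexity of the unhalved diamond norm and linearity of entanglement fidelity, treating $s_e=0$ separately.
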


\begin{proof}
If $s_e=0$, the claims are the standard upper bounds $2$ and $1$.  Suppose
$s_e>0$ and condition the error distribution on $\cB_e$.  Its maximum
probability is $p_{e,\ast}/s_e$, so
Corollary~\ref{cor:QMDS-recovery} bounds the recovered conditional channel by
\[
 2\min\!\left\{1,
 \frac{p_{e,\ast}}{s_e}C_eg(P)\right\}
\]
in diamond norm and bounds its entanglement infidelity by
$\min\{1,p_{e,\ast}C_eg(P)/(s_eK)\}$.  The recovered full channel is the convex
combination of this conditional channel with weight $s_e$ and a recovered
outside-ball channel with weight $\tau_e$.  Convexity of the unhalved diamond
norm contributes at most $2\tau_e$ from the latter channel, proving
\eqref{eq:noise-tail-diamond}.  Entanglement fidelity at a fixed input is
linear in the channel and lies in $[0,1]$, which proves
\eqref{eq:noise-tail-fidelity}.
\end{proof}

Because $p_{e,\ast}$ is at least the identity probability, the discrepancy
term does not vanish with the noise; the recovery itself is exact on
noise-free input when $PU_aP=0$ for $1\leq\wt(a)\leq e$, but not in general
(Section~\ref{sec:qmds-recovery}).

\begin{remark}[Independent depolarizing noise]
\label{rem:depolarizing-tail}
For $0\leq p_{\mathrm{dep}}\leq1$, consider the single-qudit Pauli channel
\[
 \mathcal P_{p_{\mathrm{dep}}}(\rho)
 \coloneqq(1-p_{\mathrm{dep}})\rho+
 \frac{p_{\mathrm{dep}}}{q^2-1}
 \sum_{b\in\mathbb Z_q^2\setminus\{0\}}U_b\rho U_b^\dagger,
\]
where $U_b$ is a single-qudit phase-free Weyl operator, and take its
independent $n$-fold product.  Then
\begin{align}
 \tau_e(p_{\mathrm{dep}})
 &=\sum_{w=e+1}^n\binom nw p_{\mathrm{dep}}^w
 (1-p_{\mathrm{dep}})^{n-w},
 \label{eq:depolarizing-tail}\\
 p_{e,\ast}(p_{\mathrm{dep}})
 &=\max_{0\leq w\leq e}(1-p_{\mathrm{dep}})^{n-w}
 \left(\frac{p_{\mathrm{dep}}}{q^2-1}\right)^w.
 \label{eq:depolarizing-pmax}
\end{align}
For $0\leq p_{\mathrm{dep}}\leq1-q^{-2}$, the maximum in
\eqref{eq:depolarizing-pmax} is $(1-p_{\mathrm{dep}})^n$.  At the exact
five-qubit code, $e=1$ and $g(P)=0$.  Write
\[
 \Lambda_{\mathrm{dep}}
 \coloneqq\cR_1\circ\mathcal P_{p_{\mathrm{dep}}}^{\otimes5}
 \circ\operatorname{Enc}.
\]
The guarantees become
\[
 \begin{aligned}
 \|\Lambda_{\mathrm{dep}}-\id_2\|_\diamond
 &\leq2\tau_1(p_{\mathrm{dep}}),\\
 1-\Fent(I_2/2,\Lambda_{\mathrm{dep}})
  &\leq\tau_1(p_{\mathrm{dep}}),
 \end{aligned}
\]
where
\[
 \tau_1(p_{\mathrm{dep}})
 =10p_{\mathrm{dep}}^2-20p_{\mathrm{dep}}^3
 +15p_{\mathrm{dep}}^4-4p_{\mathrm{dep}}^5.
\]
The pure tail estimate at $g(P)=0$ is standard for an exactly correctable
error ball \cite{KnillLaflamme1997Theory}; enumerator-based finite-length
fidelity bounds for depolarizing noise are developed in
\cite{Ashikhmin2014Fidelity}.  Corollary~\ref{cor:QMDS-noise-tail} adds the
explicit discrepancy penalty for a general projector.  The estimate is a
finite-block recovery guarantee, not a fault-tolerance threshold.
\end{remark}

\subsection{One-error extensions}

We now use the Hamming parameters \eqref{eq:admissible} and the recovery
of Corollary~\ref{cor:recovery}.

The same recovery also applies to a full Pauli distribution.  Separating
the radius-one ball from its tail gives an in-ball term controlled by
discrepancy and an outside-ball term controlled by its probability.

\begin{corollary}[Full Pauli noise with a radius-one tail]
\label{cor:one-error-noise-tail}
Under \eqref{eq:admissible}, let
\[
 \cN_p^{\mathrm{full}}(\sigma)
 \coloneqq\sum_{a\in(\mathbb Z_q^2)^n}p_aU_a\sigma U_a^\dagger
\]
be a stochastic Pauli channel, and define
\[
 s_1\coloneqq\sum_{a\in\cB_1}p_a,\qquad
 \tau_1\coloneqq1-s_1,\qquad
 p_{1,\ast}\coloneqq\max_{a\in\cB_1}p_a.
\]
For $g\coloneqq\cD_Q(P)-\delta_{n,Q}$ and $c\coloneqq c_{n,Q}$, the same recovery $\cR$
as in Corollary~\ref{cor:recovery} satisfies
\begin{gather}
 \begin{aligned}
 &\|\cR\circ\cN_p^{\mathrm{full}}\circ\operatorname{Enc}
       -\id_K\|_\diamond
 \leq2\tau_1+
 2\min\left\{s_1,\frac{p_{1,\ast}g}{c}\right\}
 \end{aligned}
 \label{eq:one-error-tail-diamond}\\
 \begin{aligned}
 &1-\Fent(I_K/K,\cR\circ\cN_p^{\mathrm{full}}
       \circ\operatorname{Enc})
 \leq\tau_1+
 \min\left\{s_1,\frac{p_{1,\ast}g}{cK}\right\}.
 \end{aligned}
 \label{eq:one-error-tail-fidelity}
\end{gather}
The estimates hold for every $g\geq0$.  At the five-qubit parameters,
$c=3$ and $K=2$, so the discrepancy term in
\eqref{eq:one-error-tail-fidelity} is $p_{1,\ast}g/6$.
\end{corollary}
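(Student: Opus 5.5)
The plan mirrors the proof of Corollary~\ref{cor:QMDS-noise-tail}, with Corollary~\ref{cor:recovery} in the role of Corollary~\ref{cor:QMDS-recovery}. If $s_1=0$, both claims follow from the universal bounds $2$ on the unhalved diamond distance between channels and $1$ on entanglement infidelity. So assume $s_1>0$ and decompose
\[
 \cN_p^{\mathrm{full}}=s_1\cN_{1,\tilde p}+\tau_1\cN_{\mathrm{out}},
 \qquad \tilde p_a\coloneqq p_a/s_1\ (a\in\cB_1).
\]
Here $\cN_{\mathrm{out}}$ is the normalized Pauli channel supported on labels outside $\cB_1$; when $\tau_1=0$ its term is simply absent. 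The key observation is that $\tilde p$ is a probability vector on $\cB_1$ with largest entry $p_{1,\ast}/s_1$.

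Since the recovery $\cR$ and the encoding $\operatorname{Enc}$ are fixed and linear, the recovered logical channel splits accordingly:
\[
 \cR\circ\cN_p^{\mathrm{full}}\circ\operatorname{Enc}
 =s_1\Lambda_{\mathrm{in}}+\tau_1\Lambda_{\mathrm{out}},
\]
where $\Lambda_{\mathrm{in}}\coloneqq\cR\circ\cN_{1,\tilde p}\circ\operatorname{Enc}$ and $\Lambda_{\mathrm{out}}$ is the recovered outside-ball channel. Write $\id_K=s_1\id_K+\tau_1\id_K$ and apply the triangle inequality. The outside term contributes at most $2\tau_1$, since $\Lambda_{\mathrm{out}}$ is a channel. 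For the inside term, \eqref{eq:diamond-bound} applied to $\tilde p$ gives
\[
 s_1\|\Lambda_{\mathrm{in}}-\id_K\|_\diamond\leq 2s_1\min\{1,p_{1,\ast}g/(s_1c)\}=2\min\{s_1,p_{1,\ast}g/c\}.
\]
Together these give \eqref{eq:one-error-tail-diamond}.

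For the fidelity bound, use that $\Fent(I_K/K,\cdot)$ is affine in the channel and takes values in $[0,1]$. Then
\[
 1-\Fent=s_1(1-\Fent(\Lambda_{\mathrm{in}}))+\tau_1(1-\Fent(\Lambda_{\mathrm{out}}))
 \leq s_1\min\{1,p_{1,\ast}g/(s_1cK)\}+\tau_1.
\]
The inside estimate here is \eqref{eq:weighted-Fe-bound} for $\tilde p$, capped by $1$, and it yields \eqref{eq:one-error-tail-fidelity}. At five qubits, Section~\ref{sec:five} gives $c_{5,4}=3$ and $K=2$, so the discrepancy term becomes $p_{1,\ast}g/6$.

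There is no real obstacle. The only point needing care is that Corollary~\ref{cor:recovery} is used with a recovery that does not depend on the distribution: the same $\cR$ serves the conditional channel $\cN_{1,\tilde p}$, which is what justifies the mixture decomposition. The bounds hold for every $g\geq0$ because each inside estimate is already capped by $s_1$.
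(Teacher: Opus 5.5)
Your proposal is correct and follows the paper's own proof: condition the distribution on $\cB_1$, apply Corollary~\ref{cor:recovery} with maximal conditional probability $p_{1,\ast}/s_1$, and combine the in-ball and outside-ball recovered channels using convexity (the triangle inequality) of the unhalved diamond norm and linearity of entanglement fidelity, with the case $s_1=0$ covered by the universal bounds $2$ and $1$. You also state explicitly that $\cR$ does not depend on the distribution, which is what justifies applying it to the conditional channel; the paper uses this point without spelling it out.
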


\begin{proof}
As in the proof of Corollary~\ref{cor:QMDS-noise-tail}, for $s_1>0$,
condition the distribution on $\cB_1$ and apply
Corollary~\ref{cor:recovery} with maximum conditional probability
$p_{1,\ast}/s_1$.  The full recovered channel is a convex combination of this
conditional channel with weight $s_1$ and an outside-ball channel with weight
$\tau_1$.  Convexity of the unhalved diamond norm and linearity of
entanglement fidelity give the displayed bounds.  The case $s_1=0$ reduces to
the universal bounds $2$ and $1$.
\end{proof}

At the common parameters $\Delta=3$, $n=Q+1$ and $e=1$, choosing the same
unitary completion makes this recovery the recovery $\cR_1$ of
Corollary~\ref{cor:QMDS-recovery}.  By
Remark~\ref{rem:QMDS-one-error-overlap}, $C_1^{\mathrm{Gram}}=1/c_{n,Q}$, so
Corollaries~\ref{cor:one-error-noise-tail} and~\ref{cor:QMDS-noise-tail} then
give the same bounds.  The quantum MDS corollaries retain their
broader higher-distance and higher-radius scope.

\begin{remark}[Coherent radius-one errors]
\label{rem:one-error-coherent-span}
The $p_\ast$ refinement above concerns Pauli mixtures; for every channel in
$\operatorname{span}\{U_a:a\in\cB_1\}$, Corollary~\ref{cor:recovery} gives
diamond error at most $2\min\{1,g/c\}$.  For comparison, put
$\gamma\coloneqq\sqrt{g/c}$; by \eqref{eq:eigen-bound} and
Theorem~\ref{thm:main}, $\|G-I_{VK}\|_{\mathrm{op}}\leq\gamma$.  When
$\gamma<1$, the approximate-isometry theorem of
Ma, Tan, and Wright \cite[Theorem~1.10]{MaTanWright2025Haar}, applied to the
polar recovery $\cR$ of Corollary~\ref{cor:recovery} with
$\zeta=1-\sqrt{1-\gamma}$ as in Remark~\ref{rem:QMDS-coherent-span}, gives,
for every channel $\cN$ whose Kraus operators lie
in $\operatorname{span}\{U_a:a\in\cB_1\}$, the bound
\[
 \|\cR\circ\cN\circ\operatorname{Enc}-\id_K\|_\diamond
 \leq2\bigl(1-\sqrt{1-\gamma}\bigr)\leq2\gamma.
\]
This bound is of order $g^{1/2}$; the entanglement infidelity of these
channels is at most $g/(cK)$ by
\cite[Theorem~7]{ElimelechAlbertBarg2026UniversalRecovery}.
Errors outside that span and general local noise require a larger
error-set analysis.  The recovery is an existence construction at the operator
level; efficient circuit synthesis, gate counts, and hardware resources lie
outside its scope.
\end{remark}

\subsection{A concrete five-qubit noise example}

Choosing a rotation with rational sine and cosine gives a concrete
comparison between the exact logical error and the general discrepancy
guarantee.  The same example also illustrates the addition of an
outside-ball noise contribution.

\begin{corollary}[A five-qubit example beyond exact correction]
\label{cor:five-nonexact-benchmark}
Choose $\theta_\circ\in(0,\pi/4)$ by
$\sin(2\theta_\circ)=3/5$ and put $P_\circ\coloneqq P_{\theta_\circ}$.  Abbreviate
$\cR_\circ\coloneqq\cR_{\theta_\circ}$ and
$\operatorname{Enc}_\circ\coloneqq\operatorname{Enc}_{\theta_\circ}$.  Then
\[
 g(P_\circ)=\frac{432}{25}>0,\qquad
 \xi_{\theta_\circ}=\frac1{10}.
\]
Moreover, $A_2(P_\circ)=0<B_2(P_\circ)=36/25$, so the radius-one Pauli
error set is not exactly correctable.  For uniform radius-one Pauli noise,
\begin{equation}
 \Lambda_{\theta_\circ,\mathrm{unif}}
 =\frac{19}{20}\id_2{}
  +\frac1{40}\operatorname{Ad}_{X_L}{}
  +\frac1{40}\operatorname{Ad}_{Y_L},
 \label{eq:five-benchmark-channel}
\end{equation}
and therefore
\[
 \|\Lambda_{\theta_\circ,\mathrm{unif}}-\id_2\|_\diamond=\frac1{10},
 \qquad
 \Fent(I_2/2,\Lambda_{\theta_\circ,\mathrm{unif}})=\frac{19}{20}.
\]
The general discrepancy estimates
\eqref{eq:diamond-bound}--\eqref{eq:weighted-Fe-bound} yield the upper
bounds $18/25$ and $9/50$, respectively, for the diamond error and entanglement
infidelity at this same point.

For independent depolarizing noise with $0\leq p_{\mathrm{dep}}\leq1$, use the
single-qubit channel
\[
 \mathcal P_{p_{\mathrm{dep}}}
 =(1-p_{\mathrm{dep}})\id_2{}
 +\frac{p_{\mathrm{dep}}}{3}
  \bigl(\operatorname{Ad}_X{}+\operatorname{Ad}_Y{}+
        \operatorname{Ad}_Z\bigr)
\]
and define
\[
 \tau_1\coloneqq1-(1-p_{\mathrm{dep}})^4(1+4p_{\mathrm{dep}}),
 \qquad
 \eta_1\coloneqq\frac{8p_{\mathrm{dep}}}{3}(1-p_{\mathrm{dep}})^4.
\]
Here $\tau_1$ is the probability of at least two physical errors, while
$\eta_1$ is the total probability of the eight active weight-one errors in
$\mathcal S_X\cup\mathcal S_Y$.
The same recovery satisfies
\begin{align}
 \|\cR_\circ\circ\mathcal P_{p_{\mathrm{dep}}}^{\otimes5}
       \circ\operatorname{Enc}_\circ-\id_2\|_\diamond
 &\leq2\tau_1+\frac{\eta_1}{5},
 \notag\\
 1-\Fent(I_2/2,\cR_\circ\circ\mathcal P_{p_{\mathrm{dep}}}^{\otimes5}
       \circ\operatorname{Enc}_\circ)
 &\leq\tau_1+\frac{\eta_1}{10}.
 \label{eq:five-depolarizing-benchmark}
\end{align}
For example, at $p_{\mathrm{dep}}=0.1$ the right-hand sides are, respectively,
$0.197912$ and $0.098956$.
\end{corollary}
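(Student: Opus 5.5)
The plan is to specialize Propositions~\ref{prop:five-sharp} and~\ref{prop:five-exact-recovery} at the point $\sin(2\theta_\circ)=3/5$. The general bounds come from Corollary~\ref{cor:recovery}, and the depolarizing estimate comes from the conditioning argument of Corollary~\ref{cor:one-error-noise-tail}, with the exact in-ball channel used in place of the generic bound.

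\emph{Step 1: the parameters at $\theta_\circ$.} Here $u=\sin^2(2\theta_\circ)=9/25$. Proposition~\ref{prop:five-sharp} gives $g(P_\circ)=48u=432/25$. Since $\sqrt{1-u}=4/5$, definition \eqref{eq:five-xi} gives $\xi_{\theta_\circ}=(1-4/5)/2=1/10$. The last assertion of Proposition~\ref{prop:five-exact-recovery} gives $A_2(P_\circ)=0$ and $B_2(P_\circ)=4u=36/25$. The Knill--Laflamme argument quoted there then shows that $\cB_1$ is not exactly correctable.

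\emph{Step 2: uniform noise and the general bounds.} For the uniform distribution, $\eta_X=\eta_Y=1/4$ and $\eta=1/2$. Substituting into \eqref{eq:five-exact-logical-channel} gives $1-\xi\eta=19/20$ and $\xi\eta_X=\xi\eta_Y=1/40$, which is \eqref{eq:five-benchmark-channel}. Equations \eqref{eq:five-exact-diamond} and \eqref{eq:five-exact-fidelity} then give diamond error $2\xi\eta=1/10$ and fidelity $19/20$. For the comparison, take $p_\ast=1/16$, $c=c_{5,4}=3$ and $K=2$ in \eqref{eq:diamond-bound} and \eqref{eq:weighted-Fe-bound}. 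The diamond bound is $2p_\ast g/c=2\cdot\frac1{16}\cdot\frac{432}{25}\cdot\frac13=18/25$, and the infidelity bound is $p_\ast g/(cK)=9/50$.

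\emph{Step 3: depolarizing noise.} The product channel $\mathcal P_{p_{\mathrm{dep}}}^{\otimes5}$ is a stochastic Pauli channel. The identity has probability $(1-p_{\mathrm{dep}})^5$, and each of the fifteen weight-one labels has probability $(1-p_{\mathrm{dep}})^4p_{\mathrm{dep}}/3$.
\begin{itemize}
\item The in-ball mass is $s_1=(1-p_{\mathrm{dep}})^4(1+4p_{\mathrm{dep}})$, so $\tau_1=1-s_1$ as displayed.
\item Write the recovered channel as $s_1\Lambda_{\theta_\circ,p'}+\tau_1\Lambda_{\mathrm{out}}$, where $p'$ is the conditional distribution on $\cB_1$ and $\Lambda_{\mathrm{out}}$ is some channel.
\item By Proposition~\ref{prop:five-exact-recovery}, $s_1(\Lambda_{\theta_\circ,p'}-\id_2)=\xi_{\theta_\circ}\bigl(\eta_X'(\operatorname{Ad}_{X_L}-\id_2)+\eta_Y'(\operatorname{Ad}_{Y_L}-\id_2)\bigr)$. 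Here $\eta_X'$ and $\eta_Y'$ are the unconditional masses of $\mathcal S_X$ and $\mathcal S_Y$, and $\eta_X'+\eta_Y'=8\cdot(1-p_{\mathrm{dep}})^4p_{\mathrm{dep}}/3=\eta_1$.
\item This term therefore contributes at most $2\xi\eta_1=\eta_1/5$ in diamond norm and exactly $\xi\eta_1=\eta_1/10$ to the entanglement infidelity.
\item The tail contributes at most $2\tau_1$ in diamond norm, by convexity and the universal bound $2$, and at most $\tau_1$ to the infidelity, by linearity of $\Fent$ and $\Fent\geq0$.
\end{itemize}
This gives \eqref{eq:five-depolarizing-benchmark}. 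At $p_{\mathrm{dep}}=0.1$, $(0.9)^4=0.6561$, so $\tau_1=0.08146$ and $\eta_1=0.17496$. These produce the two stated numerical values.

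The only delicate point is Step 3. One must apply Proposition~\ref{prop:five-exact-recovery} to the normalized conditional distribution and then rescale by $s_1$, rather than use the $p_\ast$ bound. This rescaling is what gives the coefficient $\eta_1/5$ instead of a looser multiple of $g$. Everything else is arithmetic.
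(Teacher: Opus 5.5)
Your proposal is correct and follows the paper's proof. You read off $g(P_\circ)$, $\xi_{\theta_\circ}$, $A_2$, $B_2$ and the uniform logical channel from Propositions~\ref{prop:five-sharp} and~\ref{prop:five-exact-recovery}, substitute $p_\ast=1/16$, $c=3$, $K=2$ into Corollary~\ref{cor:recovery}, and for depolarizing noise split off the outside-ball mass $\tau_1$ while using the exact in-ball channel (rescaled by $s_1$) to obtain $\eta_1/5$ and $\eta_1/10$; your arithmetic, including $0.197912$ and $0.098956$ at $p_{\mathrm{dep}}=0.1$, checks out.
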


\begin{proof}
The values of $g$, $\xi$, $A_2$, and $B_2$ follow from
\eqref{eq:five-sharp}, \eqref{eq:five-xi}, and
Proposition~\ref{prop:five-exact-recovery}.
Equation~\eqref{eq:five-benchmark-channel} is the uniform specialization of
Proposition~\ref{prop:five-exact-recovery}.

Under independent depolarizing noise, the probability of at least two
errors is $\tau_1$.  Each of the eight active weight-one errors in
$\mathcal S_X\cup\mathcal S_Y$ has probability
$p_{\mathrm{dep}}(1-p_{\mathrm{dep}})^4/3$, giving total active mass $\eta_1$.
The in-ball contribution follows exactly from
\eqref{eq:five-exact-diamond}--\eqref{eq:five-exact-fidelity}; the
outside-ball channel contributes at most $2\tau_1$ and $\tau_1$,
respectively.  Substitution of $p_{\mathrm{dep}}=0.1$ gives the displayed numbers.
\end{proof}

\section{Balanced Gottesman codes}
\label{sec:gottesman}

Gottesman's family illustrates a different packing regime: its radius-one
error images are mutually orthogonal but leave part of the physical Hilbert
space uncovered.  We subtract the balanced-packing value from the covering
defect and ask how the all-radii discrepancy controls the remaining overlap.
Put
\begin{equation}
 \begin{aligned}
 n&\coloneqq2^j,\qquad j\geq3,\\
 K&\coloneqq\frac{2^n}{4n},\qquad M\coloneqq\frac{2^n}{K}=4n,\\
 a_G&\coloneqq\frac{3n}{4},
 \end{aligned}
 \label{eq:Gottesman-parameters}
\end{equation}
and abbreviate $W(w)\coloneqq W_{n,4}(w)$.

The weight enumerator of this family is known from the stabilizer/additive
code construction; see Calderbank--Rains--Shor--Sloane
\cite[Theorem~10]{CalderbankEtAl1998GF4} and Gottesman
\cite[Section~III and Appendices~B--C]{Gottesman1996Class}.  The next lemma
records that enumerator in the normalization used here and recalls the
balanced syndrome count.  We then derive its exact discrepancy remainder and
stability inequality.

\begin{lemma}[Gottesman enumerator and balanced syndromes]
\label{lem:Gottesman-enumerator}
The stabilizer projector $P_G$ of Gottesman's
$[[n,n-j-2,3]]_2$ code has
\begin{equation}
 \begin{aligned}
 A_{P_G}(z)&=1+(4n-4)z^{a_G}+3z^n,\\
 B_1(P_G)&=B_2(P_G)=0.
 \end{aligned}
 \label{eq:Gottesman-enumerator}
\end{equation}
Its radius-one syndrome multiplicities consist of $3n+1$ ones and $n-1$
zeros.  Consequently
\begin{equation}
 \Phi_1(P_G)=K\frac{(3n+1)(n-1)}{4n}.
 \label{eq:Gottesman-Phi-one}
\end{equation}
\end{lemma}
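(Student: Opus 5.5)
The plan is to read everything off the stabilizer and normalizer label sets, using the dictionary from the proof of Corollary~\ref{cor:QMDS-stabilizer-gap}: $A_w(P_G)$ is the number of weight-$w$ stabilizer labels, and $B_w(P_G)$ is the number of weight-$w$ normalizer labels. For the trace enumerator I will use Gottesman's check matrix. Its first two rows are $X^{\otimes n}$ and $Z^{\otimes n}$. Each of the remaining $j$ rows has, on qubit $i$, the symbol $X^{c_i}Z^{c'_i}$, where the columns come from the binary expansions of $i$ and of its image under a fixed-point-free linear map. Under the standard $\mathbb F_4$ identification, a stabilizer element is $\alpha\mathbf 1+\beta$-part $+\ell(i)$, with $\ell$ an $\mathbb F_2$-linear functional of $i\in\mathbb F_2^j$ valued in $\mathbb F_4$.

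For the weight count, the $j$-generator part spans the codewords of a simplex/first-order Reed--Muller-type $\mathbb F_4$-additive code. I will check that each nonzero element of the span of the last $j$ generators is a function $i\mapsto\ell(i)\in\mathbb F_4$ that takes each of its four values on exactly $n/4$ positions. This holds because two independent $\mathbb F_2$-linear functionals are equidistributed, and the fixed-point-free condition makes the two functionals independent. Adding a constant $c\in\mathbb F_4$ through the first two generators keeps weight $3n/4$. That gives $4(2^j-1)=4n-4$ elements of weight $a_G$. The remaining three nonidentity elements, $X^{\otimes n},Y^{\otimes n},Z^{\otimes n}$, have weight $n$. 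Together this gives $A_{P_G}(z)=1+(4n-4)z^{a_G}+3z^n$. I expect this equidistribution, which is where the hypothesis on Gottesman's map enters, to be the main obstacle. I would cite \cite[Theorem~10]{CalderbankEtAl1998GF4} for it if the direct check becomes messy.

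For $B_1=B_2=0$, distance three means the code is pure, i.e.\ no normalizer element has weight $1$ or $2$. I will check this directly via syndromes. A weight-one error $X_i$, $Z_i$ or $Y_i$ has syndrome $(1,0,h_X(i))$, $(0,1,h_Z(i))$ or $(1,1,h_X(i)+h_Z(i))$ on the two global generators and the $j$ others. The $3n$ syndromes are distinct and nonzero: the first two bits separate the Pauli type, and within a type, $i\mapsto h(i)$ is injective because the map is invertible. Distinctness of the weight-one syndromes says exactly that no weight-$\le2$ nonidentity Pauli commutes with the stabilizer. So $B_1=B_2=0$; equivalently, these are the MacWilliams transforms of $A$ by Proposition~\ref{prop:macwilliams}, which gives a consistency check.

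Finally, the multiplicities. The $V_1=3n+1$ radius-one labels, including the identity with syndrome $0$, have pairwise distinct syndromes in a set of size $M=4n$. So there are $3n+1$ ones and $n-1$ zeros. Substituting into \eqref{eq:stabilizer-variance} with mean $V_1/M=(3n+1)/(4n)$ gives
\[
 \frac{\Phi_1(P_G)}K=(3n+1)\Bigl(1-\tfrac{3n+1}{4n}\Bigr)^2+(n-1)\Bigl(\tfrac{3n+1}{4n}\Bigr)^2=\frac{(3n+1)(n-1)}{4n},
\]
which is \eqref{eq:Gottesman-Phi-one}. As a cross-check, \eqref{eq:Phi-spectral} applied to the enumerator gives the same value.
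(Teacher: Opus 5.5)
Your proposal is correct and follows essentially the paper's route. Both arguments use the check-matrix form with the global $X$ and $Z$ rows, show that each local Pauli symbol occurs exactly $n/4$ times for every nonzero combination $u$ of the last $j$ rows (translated by the global rows), deduce $B_1=B_2=0$ from distinct weight-one syndromes, and substitute the $3n+1$ occupied and $n-1$ empty syndrome classes into the multiplicity-variance formula.

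The one difference is in the equidistribution step. The paper derives the joint equidistribution of $(u\cdot\pi(x),u\cdot x)$ from the three marginal equidistributions of $u\cdot x$, $u\cdot\pi(x)$ and $u\cdot(x+\pi(x))$, so it needs only that $\pi$ and $x\mapsto x+\pi(x)$ are bijections. Your linear-independence argument additionally uses linearity of $\pi$, which holds in Gottesman's construction. It also needs invertibility of $\pi$ alongside fixed-point-freeness, since otherwise $u\cdot\pi(x)$ could be the zero functional. Conversely, for the $Y_i$ syndromes it is fixed-point-freeness, not invertibility of $\pi$, that gives distinctness.
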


\begin{proof}[Proof of Lemma~\ref{lem:Gottesman-enumerator}]
Here $n=2^j$ and $M=4n$, as in \eqref{eq:Gottesman-parameters}.
The enumerator is also the specialization of the known family in
\cite[Theorem~10]{CalderbankEtAl1998GF4}.  For a direct normalization and
syndrome check, Gottesman's construction has $j+2$ stabilizer generators
\cite[Section~III and Appendices~B--C]{Gottesman1996Class}.  In its standard
check-matrix form, two generators are the global $X$ and $Z$ rows.  Label
each coordinate by the last $j$ syndrome bits $x\in\mathbb F_2^j$ of its $X$ error,
and write $\pi(x)$ for those of its $Z$ error.  Gottesman's construction makes
both $\pi$ and $x\mapsto x+\pi(x)$ permutations of $\mathbb F_2^j$.  If
$u\in\mathbb F_2^j\setminus\{0\}$ selects a nonzero combination of the last
$j$ stabilizer rows, then each of
\[
 u\mathbin{\cdot}x,\qquad u\mathbin{\cdot}\pi(x),\qquad
 u\mathbin{\cdot}(x+\pi(x))
\]
is equidistributed on $\mathbb F_2$.  These three equations imply that the pair
$(u\cdot\pi(x),u\cdot x)$ takes each value of $\mathbb F_2^2$ exactly $n/4$
times.  Multiplication by either global row only translates these four
values.  Hence each local Pauli symbol occurs $n/4$ times.
Thus the $4n-4$ stabilizer elements involving a nonglobal row have weight
$3n/4$, while the three nonidentity products of the two global rows have
weight $n$.  This proves the first identity.  The code is nondegenerate and
has distance three, so its normalizer has no elements of weights one or two,
which gives $B_1=B_2=0$.

The identity and the $3n$ nonidentity single-qubit Pauli errors have distinct
syndromes.  They occupy $3n+1$ of the $M=4n$ syndrome classes once each,
leaving $n-1$ holes.  Substitution in
\eqref{eq:stabilizer-variance} gives \eqref{eq:Gottesman-Phi-one}.
\end{proof}

Since the target enumerator has $B_1=B_2=0$, we use a quadratic certificate:
after MacWilliams transformation, its nonconstant terms contribute only
$B_1$ and $B_2$ to the energy.
We interpolate $W$ at the target weights $a_G,n$ and the adjacent lattice
point $a_G+1$; the next lemma verifies the sign of the residual.  Let
\begin{equation}
 h_n(w)\eqqcolon\widehat h_n(0)+\widehat h_n(1)K_1^{(n,4)}(w)+\widehat h_n(2)K_2^{(n,4)}(w)
 \label{eq:Gottesman-h}
\end{equation}
be this quadratic interpolant.  Direct substitution gives
\begin{align}
 \widehat h_n(2)&=\frac{1}{2n(n-4)}
 \Bigl(4(W(n)-W(a_G))\notag\\
 &\qquad-n(W(a_G+1)-W(a_G))\Bigr),
 \label{eq:Gottesman-c2}\\
 \widehat h_n(1)&=3\widehat h_n(2)+\frac{W(a_G)-W(a_G+1)}4,\notag\\
 \widehat h_n(0)&=W(a_G)+\frac{3n}{2}\widehat h_n(2).
 \label{eq:Gottesman-c1c0}
\end{align}
\begin{lemma}[Cohn--Zhao certificate at the Gottesman nodes]
\label{lem:Gottesman-certificate}
For the interpolant \eqref{eq:Gottesman-h},
\begin{equation}
 \begin{aligned}
 r_n(w)&\coloneqq W(w)-h_n(w)\geq0,\\
 r_n(w)=0&\Longleftrightarrow w\in\{a_G,a_G+1,n\}.
 \end{aligned}
 \label{eq:Gottesman-residual}
\end{equation}
Moreover, with $\lambda_i\coloneqq W(i)-W(i+1)$, the nonconstant Krawtchouk coefficients
satisfy
\begin{equation}
 \widehat h_n(2)=\frac{n\lambda_{a_G}-4\sum_{i=a_G}^{n-1}\lambda_i}{2n(n-4)}>0,
 \qquad \widehat h_n(1)=3\widehat h_n(2)+\frac{\lambda_{a_G}}{4}>2\widehat h_n(2).
 \label{eq:Gottesman-coefficient-positive}
\end{equation}
\end{lemma}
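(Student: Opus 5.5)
The plan is to follow the interpolation-error argument already used in the proof of Proposition~\ref{prop:kappa}, now at the nodes $a_G<a_G+1<n$. By \eqref{eq:W-complete-monotone}, for every integer $1\leq w\leq n$,
\[
 W(w)=\frac1{2\pi}\int_0^{2\pi}f_\varphi(w)\,d\varphi,
 \qquad
 f_\varphi(x)\coloneqq\alpha_4(\varphi)^{n-1}\rho(\varphi)^{x-1},
 \qquad
 \rho\coloneqq\beta/\alpha_4 .
\]
The Krawtchouk polynomials $K_0,K_1,K_2$ span the polynomials in $w$ of degree at most two. Hence $h_n$ is the quadratic interpolant of $W$ at the three nodes. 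Because interpolation is linear, $r_n(w)$ is the $\varphi$-average of the quadratic interpolation errors of the $f_\varphi$. Wherever $0<\rho<1$ (all but finitely many $\varphi$), we have $f_\varphi'''=(\ln\rho)^3f_\varphi<0$. The error formula then gives
\[
 f_\varphi(w)-p_\varphi(w)=-\frac{f_\varphi'''(\xi)}6\,(w-a_G)(w-a_G-1)(w-n).
\]
For integers $w\in[1,n]$ off the nodes, $w<a_G$ or $a_G+1<w<n$, so $(w-a_G)(w-a_G-1)(w-n)<0$ and the error is strictly positive. It vanishes at the nodes. This proves \eqref{eq:Gottesman-residual}, including the characterization of the zero set.

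For the coefficients, I would first confirm the closed forms \eqref{eq:Gottesman-c2}--\eqref{eq:Gottesman-c1c0}. Write $h_n$ in the Newton basis at $a_G,a_G+1$ and convert using
\[
 K_1^{(n,4)}(w)=3n-4w,
 \qquad
 2K_2^{(n,4)}(w)=K_1^2-2K_1-3n .
\]
The $w^2$ coefficient of $K_2$ is $8$. Equating it with the second divided difference $[a_G,a_G+1,n]W$ gives $\widehat h_n(2)=[a_G,a_G+1,n]W/8$. With $n-a_G=n/4$ and $n-a_G-1=(n-4)/4$, this divided difference is
\[
 \frac{\dfrac{W(n)-W(a_G)}{n/4}-\bigl(W(a_G+1)-W(a_G)\bigr)}{(n-4)/4}.
\]
Writing $W(n)-W(a_G)=-\sum_{i=a_G}^{n-1}\lambda_i$ and $W(a_G+1)-W(a_G)=-\lambda_{a_G}$ produces the $\lambda$-form in \eqref{eq:Gottesman-coefficient-positive}. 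The linear coefficient follows from the $w$-coefficient. The $w$-coefficient of $K_2$ involves $-(3n-1)\cdot4\cdot$ const, and matching slopes against $-4\widehat h_n(1)$ gives $\widehat h_n(1)=3\widehat h_n(2)+\lambda_{a_G}/4$. This is routine bookkeeping, and I would check it at one value such as $n=8$.

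Positivity is where the content lies. Since $[a_G,a_G+1,n]W$ is a positive combination (an integral over the B-spline kernel) of $W''$ on $[a_G,n]$, I would apply the same averaging. For each $\varphi$, $[a_G,a_G+1,n]f_\varphi=f_\varphi''(\xi)/2=(\ln\rho)^2f_\varphi(\xi)/2>0$, so $\widehat h_n(2)>0$. Equivalently, in the discrete form, $\lambda_i$ is strictly decreasing by strict complete monotonicity, so the average of $\lambda_{a_G},\ldots,\lambda_{n-1}$ lies strictly below $\lambda_{a_G}$. That gives $n\lambda_{a_G}>4\sum_{i=a_G}^{n-1}\lambda_i$, since the sum has $n/4$ terms. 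Finally, $\lambda_{a_G}>0$ by monotonicity of $W$, which gives $\widehat h_n(1)>3\widehat h_n(2)>2\widehat h_n(2)$.

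The main obstacle I expect is not the sign argument but getting the Krawtchouk change of basis and the constants in \eqref{eq:Gottesman-c2} exactly right.
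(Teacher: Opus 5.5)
Your proposal is correct and follows the paper's proof. The residual sign comes from the same $\varphi$-averaged quadratic interpolation-error argument used for Proposition~\ref{prop:kappa}, and $\widehat h_n(2)>0$ comes from the strict decrease of the $\lambda_i$ over the $n/4\geq2$ terms; your divided-difference derivation of \eqref{eq:Gottesman-c2} and of $\widehat h_n(1)=3\widehat h_n(2)+\lambda_{a_G}/4$ just spells out the paper's ``direct substitution''.

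There is one sign slip to fix in the remainder formula. It should read $f_\varphi(w)-p_\varphi(w)=\frac{1}{6}f_\varphi'''(\xi)(w-a_G)(w-a_G-1)(w-n)$, or equivalently the paper's form with $-f_\varphi'''(\xi)$ and $(n-w)$. With your extra minus sign the error would be negative, contradicting the correct conclusion you draw from it.
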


\begin{proof}
Since $h_n$ is the quadratic interpolant of $W$ at $a_G,a_G+1,n$, the
argument in the proof of Proposition~\ref{prop:kappa}, with $Q=4$ and these
nodes in place of $k-1,k,n$, gives $r_n(w)>0$ off $\{a_G,a_G+1,n\}$ and
$r_n=0$ at the nodes.  This is the interpolation argument in the proof of
\cite[Lemma~19]{CohnZhao2014Energy}, in strict form.

For the coefficient statement, strict decrease and strict convexity give
$\lambda_{a_G}>\lambda_{a_G+1}>\cdots>\lambda_{n-1}>0$.  Substituting
$W(n)-W(a_G)=-\sum_{i=a_G}^{n-1}\lambda_i$ in \eqref{eq:Gottesman-c2} gives the stated
formula for $\widehat h_n(2)$, which is positive because
$4\sum_{i=a_G}^{n-1}\lambda_i<4(n-a_G)\lambda_{a_G}=n\lambda_{a_G}$.  The formula for $\widehat h_n(1)$ follows from
\eqref{eq:Gottesman-c1c0}.
\end{proof}

The nodes $\{a_G,a_G+1,n\}$ form a pair covering of the target distribution
\eqref{eq:Gottesman-enumerator} in the sense of
\cite[Definition~20]{CohnZhao2014Energy}, none of them below $3n/4$, and that
distribution is a $2$-design; hence \cite[Proposition~22]{CohnZhao2014Energy}
already shows that it is a universally optimal quasicode, and the lower bound
$\cD_4(P)\geq\delta_n^G$ below is that classical result lifted by
Proposition~\ref{prop:macwilliams}.  What we add is the exact remainder
\eqref{eq:Gottesman-exact-gap} and its control of $\Psi_1$.

Define the benchmark and recall the packing defect
\eqref{eq:packing-defect} at these parameters:
\begin{align}
 \delta_n^G
 &\coloneqq\frac{K^2}{2^n}\bigl((4n-4)W(a_G)+3W(n)\bigr),
 \label{eq:Gottesman-benchmark}\\
 \Psi_1(P)
 &=\Phi_1(P)-K\frac{(3n+1)(n-1)}{4n}
 \notag\\
 &=K\bigl(4B_1(P)+2B_2(P)\bigr).
 \label{eq:Gottesman-Psi}
\end{align}
In particular, $\Psi_1(P)\geq0$ for every projector of the rank in
\eqref{eq:Gottesman-parameters}; the subtracted term is the radius-one
covering defect of $P_G$.

\begin{theorem}[Gottesman balanced-packing stability]
\label{thm:Gottesman-stability}
At the parameters \eqref{eq:Gottesman-parameters}, every rank-$K$ projector
$P$ satisfies the exact identity
\begin{equation}
 \begin{aligned}
 \cD_4(P)-\delta_n^G
 &=\frac{K^2}{2^n}\sum_{w=1}^nr_n(w)A_w(P)\\
 &\quad+K\widehat h_n(1)B_1(P)+K\widehat h_n(2)B_2(P),
 \end{aligned}
 \label{eq:Gottesman-exact-gap}
\end{equation}
and therefore
\begin{equation}
 \cD_4(P)-\delta_n^G
 \geq\frac{\widehat h_n(2)}{2}\Psi_1(P)\geq0.
 \label{eq:Gottesman-stability}
\end{equation}
Equality with the benchmark, $\cD_4(P)=\delta_n^G$, and equality in the
first inequality of \eqref{eq:Gottesman-stability} each hold if and only if
\begin{equation}
 A_P(z)=1+(4n-4)z^{3n/4}+3z^n.
 \label{eq:Gottesman-equality-enumerator}
\end{equation}
Thus every Gottesman projector $P_G$ attains the global minimum, and every
minimizer has the displayed enumerator.
\end{theorem}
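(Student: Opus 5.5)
The plan is to apply Lemma~\ref{lem:certificate-transfer} with $m=2$, the quadratic interpolant $h_n$ of \eqref{eq:Gottesman-h}, residual $r=r_n$, and reference vector $a=A(P_G)$. The first task is to check the hypotheses. By Lemma~\ref{lem:Gottesman-enumerator}, $a_0=1$, and the total mass is $1+(4n-4)+3=4n=2^n/K=d/K$. The vector is supported on $\{0,a_G,n\}$, and $r_n$ vanishes there by Lemma~\ref{lem:Gottesman-certificate}. The transform coordinates $s=1,2$ vanish because, by the first identity in \eqref{eq:quantum-macwilliams}, $\sum_wA_w(P_G)K_s^{(n,4)}(w)=(d/K)B_s(P_G)=0$. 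The resulting $\delta$ is exactly $\delta_n^G$ of \eqref{eq:Gottesman-benchmark}. Then \eqref{eq:certificate-transfer} is literally \eqref{eq:Gottesman-exact-gap}.

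For \eqref{eq:Gottesman-stability}, I would drop the first, nonnegative sum, which is nonnegative by \eqref{eq:Gottesman-residual} and $A_w\geq0$. By \eqref{eq:Gottesman-coefficient-positive}, $\widehat h_n(1)>2\widehat h_n(2)$, so
\[
K\widehat h_n(1)B_1+K\widehat h_n(2)B_2\geq K\widehat h_n(2)(2B_1+B_2)=\tfrac{\widehat h_n(2)}2K(4B_1+2B_2)=\tfrac{\widehat h_n(2)}2\Psi_1(P),
\]
using \eqref{eq:Gottesman-Psi} and $B_1,B_2\geq0$. Formula \eqref{eq:Gottesman-Psi} itself comes from \eqref{eq:overlap-form} with $c_1(1)=4$ and $c_1(2)=2$, giving $\Tr T_1^2=K(1+4B_1+2B_2)\cdot$ (with $V_1=3n+1$). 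Explicitly, $\Psi_1=\Tr T_1^2-KV_1=K(4B_1+2B_2)$, and $\Phi_1-\Psi_1=KV_1(1-KV_1/d)=K(3n+1)(n-1)/(4n)$.

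The equality cases need the most care. For benchmark equality, I would first show that every term in \eqref{eq:Gottesman-exact-gap} vanishes. This forces $B_1=B_2=0$, hence $A_1=A_2=0$ (since $A_w\leq B_w$), and it forces $A_w=0$ off $\{a_G,a_G+1,n\}$ by strictness of $r_n$. The remaining unknowns are $A_{a_G}$, $A_{a_G+1}$, and $A_n$. I would then pin them down with three linear constraints: the mass $\sum_{w\geq1}A_w=4n-1$, and the $j=1,2$ MacWilliams identities with $B_1=B_2=0$, namely $\sum_wA_wK_s(w)=0$ for $s=1,2$. The main obstacle is checking that this $3\times3$ system in Krawtchouk values at $a_G,a_G+1,n$ is nonsingular. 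I would handle it by rewriting the constraints in the basis $1,(w-a_G),(w-a_G)(w-a_G-1)$, which gives a Vandermonde-type system on distinct nodes. Its unique solution is the Gottesman one with $A_{a_G+1}=0$. Conversely, if the enumerator \eqref{eq:Gottesman-equality-enumerator} holds, then $B_1=B_2=0$ by the transform, and every term vanishes.

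For equality in the first inequality of \eqref{eq:Gottesman-stability}, the slack consists of the $r_n$-sum plus $K(\widehat h_n(1)-2\widehat h_n(2))B_1$. Its vanishing forces $B_1=0$ and support in $\{a_G,a_G+1,n\}$. To conclude, I still need $B_2=0$. Here I would use the moment identities: the mass constraint and the $s=1$ identity (with $B_1=0$) fix two combinations, and the $s=2$ identity expresses $B_2$ linearly through $A_{a_G+1}$. Nonnegativity plus a sign computation at these nodes should force $A_{a_G+1}=0$ and $B_2=0$, which I expect to be the delicate calculation. Equality then reduces to the benchmark case. Finally, attainment by $P_G$ follows from Lemma~\ref{lem:Gottesman-enumerator}.
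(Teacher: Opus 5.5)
Your proposal is correct and follows the paper's proof: the same certificate transfer with $a=A(P_G)$, the same bound via $\widehat h_n(1)>2\widehat h_n(2)$, and the same reduction of the equality case to the three node masses. Your separate Vandermonde argument for benchmark equality is redundant, since benchmark equality already forces equality in the first inequality. The sign computation you deferred is exactly the paper's last step, and it does close. With $x_1\coloneqq A_{a_G+1}$, the mass and $s=1$ identities give $4x_1+nA_n=3n$, and the $s=2$ identity then reduces to $4(n-4)x_1=-8nB_2$, so $n\geq8$ and nonnegativity force $x_1=B_2=0$.
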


\begin{proof}
Apply Lemma~\ref{lem:certificate-transfer} with $Q=4$, $m=2$,
$\widehat h=\widehat h_n$, $r=r_n$, and $a$
the target distribution in \eqref{eq:Gottesman-enumerator}.  It has mass
$4n=2^n/K$, its first two dual coordinates vanish because
$B_1(P_G)=B_2(P_G)=0$, and $r_n$ vanishes on its support $\{a_G,n\}$ by
\eqref{eq:Gottesman-residual}.  The resulting benchmark is $\delta_n^G$, which
proves \eqref{eq:Gottesman-exact-gap}.  Equations
\eqref{eq:Gottesman-residual} and
\eqref{eq:Gottesman-coefficient-positive} make every term nonnegative.
Moreover,
\[
 K\widehat h_n(1)B_1+K\widehat h_n(2)B_2
 \geq K(2\widehat h_n(2)B_1+\widehat h_n(2)B_2)=\frac{\widehat h_n(2)}{2}\Psi_1,
\]
which proves \eqref{eq:Gottesman-stability}.

Equality with the benchmark forces equality in the first inequality of
\eqref{eq:Gottesman-stability}.  If the latter holds, then the display above,
in which $\widehat h_n(1)>2\widehat h_n(2)$, and \eqref{eq:Gottesman-residual}
give $B_1=0$ and $A_w$ supported, apart from zero, on $a_G,a_G+1,n$.  Write
the three masses as $x_0,x_1,x_2$.  The mass and first two MacWilliams
equations reduce to
\[
 \begin{aligned}
 x_0+x_1+x_2&=4n-1,\\
 4x_1+nx_2&=3n,\\
 4(n-4)x_1&=-8nB_2.
 \end{aligned}
\]
Since $n\geq8$ and $x_1,B_2\geq0$, they give $x_1=B_2=0$, $x_2=3$, and
$x_0=4n-4$.  Conversely, this enumerator has $B_1=B_2=0$ and is supported on
zeros of $r_n$, so it gives $\cD_4(P)=\delta_n^G$ and $\Psi_1(P)=0$.
\end{proof}

\begin{remark}[The first two lengths]
\label{rem:Gottesman-small}
For $n=8$, one has
\[
 h_8=7560+768K_1^{(8,4)}+72K_2^{(8,4)},
 \qquad \delta_8^G=49446,
\]
and \eqref{eq:Gottesman-exact-gap} is
\begin{align*}
 \cD_4(P)-49446
 &=\frac14\bigl(65795584A_1+1728512A_2\\
 &\qquad{}+150528A_3+20480A_4\\
 &\qquad{}+2560A_5\bigr)\\
 &\qquad{}+6144B_1+576B_2.
\end{align*}
Thus
\[
 \cD_4(P)-49446\geq36\left(\Phi_1(P)-\frac{175}{4}\right).
\]
By Theorem~\ref{thm:Gottesman-stability}, equality in this stability bound
holds exactly when $A_P(z)=1+28z^6+3z^8$.

For $n=16$, the target enumerator is $1+60z^{12}+3z^{16}$, and evaluating
\eqref{eq:Gottesman-c2}, \eqref{eq:Gottesman-c1c0} and
\eqref{eq:Gottesman-benchmark} gives
\[
 \begin{aligned}
 \widehat h_{16}(1)&=25054464,\qquad\widehat h_{16}(2)=\frac{4553008}{3},\\
 \delta_{16}^G&=331392418560.
 \end{aligned}
\]
The compact stability form is
\[
 \cD_4(P)-331392418560
 \geq\frac{2276504}{3}\bigl(\Phi_1(P)-11760\bigr).
\]
The same quaternary two-weight parameters also arise from the difference-matrix
construction in \cite[Section~2, Eq.~(2)]{BoyvalenkovEtAl2021TwoWeight};
the displays above give the Pauli-discrepancy stability constants at these
parameters.
\end{remark}

\section{Marginal-purity calculations}
\label{app:pure-calculations}

We give the coefficient-positivity argument for the pure-state identity,
compare the even-length benchmark with the quantum MDS remainder, and
work out the four- and five-qubit cases.
As in Section~\ref{sec:pure}, \(P_\psi\) has rank one, \(q\geq2\) and
\(n\geq2\); \(\Pi_s\), \(\omega_s\), and \(\gamma_s\) are defined in
\eqref{eq:marginal-purity-sum}--\eqref{eq:gamma-def}.

\begin{proof}[Proof of Theorem~\ref{thm:pure-marginal}]
The Weyl expansion on a fixed subsystem $S$ gives
\[
 \Tr(\rho_S^2)=q^{-|S|}
 \sum_{\operatorname{supp}(a)\subseteq S}|\Tr(P_\psi U_a)|^2.
\]
After summing over all $s$-subsets and counting the subsets that contain a
fixed support of size $w$, this becomes the rank-one form of the identity
\eqref{eq:QMDS-marginal-bridge}, which holds for every projector, written
with its $w=0$ term:
\begin{equation}
 \Pi_s(P_\psi)=q^{-s}\sum_{w=0}^s
 \binom{n-w}{s-w}A_w(P_\psi).
 \label{eq:marginal-enumerator}
\end{equation}
If $X_s\coloneqq q^s\Pi_s$, binomial inversion of \eqref{eq:marginal-enumerator} says
\begin{equation}
 \sum_{w=0}^nf(w)A_w=\sum_{s=0}^n\omega_sX_s.
 \label{eq:binomial-inversion-pure}
\end{equation}
Since $K=1$, Theorem~\ref{thm:spectral} and $f(0)=0$ therefore give
\begin{equation}
 \cD_Q(P_\psi)=q^{-n}\sum_{s=0}^nq^s\omega_s\Pi_s(P_\psi).
 \label{eq:pure-marginal-D}
\end{equation}

It remains to prove the positivity needed to pair the complementary
marginals.  For $s\geq1$, substituting $u=w-s$ in \eqref{eq:omega-def} gives
$\omega_s=\mu_{n-s}$ with $\mu_j$ from \eqref{eq:QMDS-mu}, and $\mu_j>0$ for
$0\leq j\leq n-1$ by the strict complete monotonicity of $W_{n,Q}$ in
Section~\ref{sec:qmds}.  Thus $\omega_s>0$.

For a pure global state, complementary reductions have the same nonzero
spectrum.  Hence $\Pi_s=\Pi_{n-s}$, while $\Pi_0=\Pi_n=1$.  Pairing the terms in
\eqref{eq:pure-marginal-D} therefore gives
\begin{equation}
 \cD_Q(P_\psi)=C_{n,q}+q^{-n}\sum_{s=1}^m\gamma_s\Pi_s(P_\psi).
 \label{eq:pure-paired-D}
\end{equation}
Finally,
\[
 \Tr(\rho_S^2)-q^{-s}
 =\left\|\rho_S-q^{-s}I_S\right\|_{\HS}^2.
\]
Subtracting \eqref{eq:pure-benchmark} from
\eqref{eq:pure-paired-D} proves \eqref{eq:pure-stability}.  Every coefficient
is positive, so equality holds exactly when every reduction of at most $m$
parties is maximally mixed, which is the definition of an
$\operatorname{AME}(n,q)$ state.  The last assertion follows from the same
identity and compactness of the pure-state space.
\end{proof}

\begin{remark}[Rank-one quantum MDS concordance]
\label{rem:pure-QMDS-concordance}
Let $m\geq2$, $n\coloneqq2m$, $K\coloneqq1$, and $\Delta\coloneqq m+1$.  Pure quantum MDS projectors
are then
exactly $\operatorname{AME}(2m,q)$ states, and the two formal benchmarks agree:
\[
 \delta^{\mathrm{QMDS}}_{2m,q,m+1}
 =\delta^{\mathrm{pure}}_{2m,q}.
\]
Indeed, $A_w(P_\psi)=B_w(P_\psi)$, and the two exact remainders agree:
\[
 \begin{aligned}
 &q^{-n}\sum_{w=1}^m\overline R(w)A_w(P_\psi)\\
 &\quad+\sum_{w=1}^m\overline h_wA_w(P_\psi)\\
 &\quad=
 q^{-n}\sum_{s=1}^m\gamma_s
 \sum_{\substack{S\subseteq[n]\\|S|=s}}
 \left\|\rho_S-\frac{I_S}{q^s}\right\|_{\HS}^2.
 \end{aligned}
\]
To compare coefficients directly, use $t=m$, $\omega_s=\mu_{n-s}$
for $s\geq1$, and the definitions
\eqref{eq:QMDS-hhat}, \eqref{eq:QMDS-averages}, and
\eqref{eq:gamma-def}, together with \eqref{eq:QMDS-Newton}.
For $1\leq w\leq m$, averaging the two consecutive residuals and
coefficients, then setting $s=n-j$ in the residual tail, gives
\[
\begin{aligned}
 q^{-n}\overline R(w)+\overline h_w
 &=q^{-n}\sum_{s=w}^{m-1}
   (\mu_{n-s}+q^{n-2s}\mu_s)\binom{n-w}{s-w}
   +q^{-n}\mu_m\binom{n-w}{m-w}\\
 &=q^{-n}\sum_{s=w}^m\gamma_s q^{-s}\binom{n-w}{s-w}.
\end{aligned}
\]
Here the two half-weight middle terms combine because $n=2m$.
Expanding the marginal defects by \eqref{eq:marginal-enumerator} therefore
identifies the two remainders as linear forms in $A_1,\ldots,A_m$.
The exact identities \eqref{eq:QMDS-exact-gap} and
\eqref{eq:pure-stability}, applied to any pure state, then give equality
of the benchmarks; no AME existence is needed.
The quantum MDS coordinates retain nontrivial information about the
error-image Gram matrix and its polar geometry.  For $K=1$, however, the
logical input is one-dimensional, so the associated channel-recovery error
bounds are tautological.  The marginal coordinates expose the
marginal-purity defects directly.
The equality of the formal benchmarks remains valid when the common equality
set is empty.
\end{remark}

The first qubit length without an AME state admits a closed solution of this
weighted problem.  The exact optimum of the balanced two-qubit purity and its
equality conditions were established previously
\cite{BandyopadhyayGourSanders2007Duality,GourWallach2010FourQubits}.  The
corollary below records the corresponding Pauli-ball discrepancy formula and
gives a short shadow-enumerator derivation.

\begin{corollary}[Four-qubit pure-state minimum]
\label{cor:four-pure}
Every rank-one projector $P_\psi$ on four qubits satisfies
\begin{equation}
 \cD_4(P_\psi)=\frac{123}{4}+96A_1(P_\psi)+4A_2(P_\psi)
 \geq\frac{155}{4}.
 \label{eq:four-rank-one}
\end{equation}
Equality holds if and only if
\begin{equation}
 A_1(P_\psi)=0,\qquad A_2(P_\psi)=2.
 \label{eq:four-equality}
\end{equation}
In particular, the formal pure-state benchmark is
$\delta^{\mathrm{pure}}_{4,2}=123/4$, and the unavoidable discrepancy gap is
$8$.
The equality conditions \eqref{eq:four-equality} are equivalently that
every one-qubit reduction is maximally mixed and
\begin{equation}
 \frac1{6}\sum_{\substack{S\subseteq[4]\\|S|=2}}\Tr(\rho_S^2)=\frac13.
 \label{eq:four-average-purity}
\end{equation}
The Higuchi--Sudbery state attains equality.
\end{corollary}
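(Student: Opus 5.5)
The plan is to specialize the pure-state identity of Theorem~\ref{thm:pure-marginal} (equivalently \eqref{eq:D-spectral} with $K=1$) to $n=4$, $q=2$, $Q=4$, and then reduce the minimization to a two-variable linear problem. That problem is controlled by shadow-enumerator positivity.

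\emph{Step 1: the formula.} First compute $W_{4,4}(w)$ for $w=1,\dots,4$ from the coefficient-norm representation \eqref{eq:W-norm}, namely $W_{4,4}(w)=\|(1+3z)^{4-w}(1-z)^{w-1}\|_2^2$. Then $\cD_4(P_\psi)=\frac1{16}\sum_{w\ge1}A_wW(w)$. For a pure state, $A_w=B_w$, and the MacWilliams relations \eqref{eq:quantum-macwilliams} with $K=1$ together with the mass $\sum_wA_w=16$ let me eliminate $A_3$ and $A_4$ in terms of $A_1$ and $A_2$. Concretely, for four qubits the complementary purities give $\Pi_1=\Pi_3$. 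Using \eqref{eq:marginal-enumerator}, this yields two linear relations: one expresses $A_3$ through $A_1$, and one expresses $A_4$ through $A_1,A_2$ (the second is the mass identity). Substituting these should produce the affine form $\frac{123}{4}+96A_1+4A_2$. Matching the constant with $\delta^{\mathrm{pure}}_{4,2}$ from \eqref{eq:pure-benchmark} provides a consistency check. Equivalently, \eqref{eq:pure-stability} gives $16\,\cD_4-16\delta^{\mathrm{pure}}=\gamma_1\sum_{|S|=1}\|\cdot\|^2+\gamma_2\sum_{|S|=2}\|\cdot\|^2$, and these sums are linear in $A_1,A_2$.

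\emph{Step 2: the lower bound $A_2\ge 2$ when $A_1$ is small.} This is the main obstacle. The bound $\cD_4\ge 155/4$ amounts to $96A_1+4A_2\ge 8$. I would invoke the Rains shadow inequalities, i.e.\ nonnegativity of the shadow enumerator $S(z)=A(\tfrac{z-1}{2}\cdot\ldots)$, in the normalization of \eqref{eq:A-def}. Its top coefficient for $n=4$ should give a linear inequality of the form $S_4\propto 16-\ldots\ge0$ in $A_1,A_2,A_3,A_4$. After the eliminations of Step~1, I expect this to become $c_1A_1+A_2\ge2$ with $c_1\le 24$. That is exactly the Higuchi--Sudbery/Gour--Wallach statement that the average two-qubit purity is at least $1/3$. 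If the shadow coefficient does not directly give the right constant, I would combine it with $A_1\ge0$ and the known optimum from \cite{GourWallach2010FourQubits}.

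\emph{Step 3: equality and the example.} Equality requires $A_1=0$ (so the one-qubit marginals are maximally mixed) and $A_2=2$. Translating through \eqref{eq:marginal-enumerator} gives $\Pi_2=\frac14(6+3A_1+A_2)=2$, which is \eqref{eq:four-average-purity}. Finally, I would verify $A_1=0$ and $A_2=2$ for the Higuchi--Sudbery state, either by direct computation of its two-qubit purities (all equal to $1/3$) or by citing \cite{BandyopadhyayGourSanders2007Duality}.
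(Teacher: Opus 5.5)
Your route is the paper's. You specialize \eqref{eq:D-spectral} to $n=4$ and eliminate $A_3,A_4$. Your relations $\Pi_1=\Pi_3$ plus the mass are equivalent to the paper's MacWilliams fixed-point equations. Note, though, that they give $A_3=12+A_1-2A_2$, which involves $A_2$ as well as $A_1$, and $A_4=3-2A_1+A_2$. You then apply a shadow inequality and translate the equality conditions via \eqref{eq:marginal-enumerator}; Steps~1 and~3, including the check $\delta^{\mathrm{pure}}_{4,2}=123/4$, go through as you describe.

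The one place where your plan, as written, does not work is Step~2. The top (weight-$4$) shadow coefficient is the positive combination $\sum_w 3^{4-w}A_w$, which after your elimination is $120+28A_1+4A_2$, so its nonnegativity is vacuous. The informative coefficient is the weight-zero one in Rains' convention, $\sum_w(-1)^wA_w=16\Tr(P_\psi\widetilde P_\psi)\geq0$ with $\widetilde P_\psi=\sigma_y^{\otimes4}P_\psi^{\mathsf T}\sigma_y^{\otimes4}$. After elimination it reads $4(A_2-A_1-2)\geq0$, so your $c_1$ is $-1$. Then $\cD_4(P_\psi)-\tfrac{155}{4}=100A_1+4(A_2-A_1-2)$ is a sum of two nonnegative terms, which gives the bound and the equality case $A_1=0$, $A_2=2$ at once.

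Your fallback also works. The known bound $\Pi_2\geq2$ is $3A_1+A_2\geq2$, i.e.\ $c_1=3$, which gives $96A_1+4A_2\geq 8+84A_1$ with equality forcing $A_1=0$, $A_2=2$. However, that route imports the balanced-purity optimum from the literature. The weight-zero shadow inequality instead proves the stronger inequality $A_2\geq A_1+2$ directly, and so re-derives that optimum.
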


\begin{proof}
For a pure state $A=B$, and the MacWilliams fixed-point equations give
\begin{equation}
 A_3=12+A_1-2A_2,\qquad A_4=3-2A_1+A_2.
 \label{eq:four-elimination}
\end{equation}
Direct Krawtchouk evaluation gives
\[
 \begin{gathered}
 (W_{4,4}(1),W_{4,4}(2),W_{4,4}(3),W_{4,4}(4))\\
 =(1540,116,36,20).
 \end{gathered}
\]
Substitution in Theorem~\ref{thm:spectral} proves the identity in
\eqref{eq:four-rank-one}.

The qubit shadow inequality (equivalently, the state-inversion inequality)
\cite{HuberEtAl2018AME} gives
\begin{equation}
 0\leq\sum_{w=0}^4(-1)^wA_w
 =4(A_2-A_1-2).
 \label{eq:four-shadow}
\end{equation}
Indeed, if
$\widetilde P_\psi\coloneqq\sigma_y^{\otimes4}P_\psi^{\mathsf T}
\sigma_y^{\otimes4}$, then Pauli expansion gives
$16\Tr(P_\psi\widetilde P_\psi)=\sum_w(-1)^wA_w\geq0$.
Thus $A_2\geq A_1+2$, and
\[
 \cD_4(P_\psi)
 \geq\frac{155}{4}+100A_1
 \geq\frac{155}{4}.
\]
Both inequalities are equalities precisely under
\eqref{eq:four-equality}.

Equation~\eqref{eq:marginal-enumerator} gives
\[
 \Pi_1=2+\frac{A_1}{2},\qquad
 \Pi_2=\frac{6+3A_1+A_2}{4}.
\]
Since each marginal-purity defect is nonnegative, $A_1=0$ is equivalent to
all four one-qubit reductions being maximally mixed.  Under this condition,
$A_2=2$ is equivalent to $\Pi_2=2$, which is
\eqref{eq:four-average-purity}.

For $\zeta\coloneqq e^{2\pi i/3}$, the normalized Higuchi--Sudbery vector
\cite{HiguchiSudbery2000Couples} is
\[
 \begin{aligned}
 &\frac1{\sqrt6}\bigl(
 |0011\rangle+|1100\rangle\\
 &\qquad+\zeta(|0101\rangle+|1010\rangle)\\
 &\qquad+\zeta^2(|0110\rangle+|1001\rangle)\bigr).
 \end{aligned}
\]
This state has every one-qubit reduction equal to $I_2/2$ and every two-qubit reduction of
purity $1/3$, so it attains the bound.  Higuchi and Sudbery introduced and
analyzed this state, and Facchi et al. numerically identified the balanced
purity value $1/3$ \cite{HiguchiSudbery2000Couples,FacchiEtAl2008MMES}.  The
exact purity optimum and equality conditions agree with
\cite{BandyopadhyayGourSanders2007Duality,GourWallach2010FourQubits}; the
calculation above converts them into the explicit Pauli-ball discrepancy
identity and value in \eqref{eq:four-rank-one}.
\end{proof}

For five qubits, the rank-one MacWilliams fixed point gives a closed formula
in the two lowest enumerator coefficients.  It yields an explicit instance
of the marginal-purity minimization problem.  Recall that $P_*$ denotes the
rank-two five-qubit perfect-code projector of Section~\ref{sec:five}.

\begin{corollary}[Pure five-qubit minimizers]
\label{cor:five-rank-one}
For every unit vector $|\psi\rangle\in(\mathbb C^2)^{\otimes5}$ and
$P_\psi\coloneqq|\psi\rangle\langle\psi|$,
\begin{equation}
 \cD_4(P_\psi)=\frac{2265}{16}
 +674A_1(P_\psi)+26A_2(P_\psi).
 \label{eq:five-rank-one}
\end{equation}
Consequently,
\begin{equation}
 \min_{\rank P=1}\cD_4(P)=\frac{2265}{16},
 \label{eq:five-rank-one-minimum}
\end{equation}
and equality holds exactly for $2$-uniform states, equivalently for
$\operatorname{AME}(5,2)$ states.  In particular, every rank-one
subprojector of $P_*$ attains the minimum.
\end{corollary}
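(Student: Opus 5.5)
The plan follows the four-qubit case (Corollary~\ref{cor:four-pure}). For a rank-one projector, $A_w(P_\psi)=B_w(P_\psi)$, since $\|P_\psi U_aP_\psi\|_{\HS}^2=|\Tr(P_\psi U_a)|^2$. So Proposition~\ref{prop:macwilliams} says the vector $A=(A_0,\dots,A_5)$ is a fixed point of the normalized MacWilliams transform:
\[
 A_j=\tfrac1{32}\sum_{w}A_wK_j^{(5,4)}(w).
\]
Together with $A_0=1$, I will use these fixed-point equations to express $A_3,A_4,A_5$ linearly in $A_1,A_2$. The total mass $\sum_wA_w=32$ is one of the resulting relations. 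The fixed-point space of the involutive transform has dimension $3$, which matches three free coordinates $A_0,A_1,A_2$. I would solve the $j=3,4,5$ equations directly, or equivalently use the binomial-moment form \eqref{eq:QMDS-binomial-Krawtchouk} for $j=0,1,2$.

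Next I substitute into Theorem~\ref{thm:spectral} with $K^2/d=1/32$ and the values $(W(1),\dots,W(5))=(21286,1118,246,110,70)$ listed in Section~\ref{sec:five}. This gives
\[
 \cD_4(P_\psi)=\tfrac1{32}\sum_{w\geq1}A_wW(w)
\]
as an affine function of $A_1,A_2$. The constant term should be $2265/16$, and the coefficients of $A_1$ and $A_2$ should be $674$ and $26$, which proves \eqref{eq:five-rank-one}. The only real work is this bookkeeping of the elimination, so it is the step where an arithmetic slip is most likely. As a consistency check, I will compare the constant with $\delta^{\mathrm{pure}}_{5,2}$ from Theorem~\ref{thm:pure-marginal}, since $A_1=A_2=0$ corresponds to an AME state.

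Since $A_1,A_2\geq0$, we get $\cD_4(P_\psi)\geq2265/16$, with equality if and only if $A_1=A_2=0$. By \eqref{eq:marginal-enumerator}, or directly from the marginal bridge \eqref{eq:QMDS-marginal-bridge}, this is equivalent to every one- and two-qubit reduction being maximally mixed. That is exactly $2$-uniformity, which is $\operatorname{AME}(5,2)$ because $\lfloor5/2\rfloor=2$.

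Attainment comes from the five-qubit code. For any unit vector $\psi\in\operatorname{ran}P_*$ and any $a$ with $1\leq\wt(a)\leq2$, we have $P_*U_aP_*=0$, because $B_1(P_*)=B_2(P_*)=0$ by \eqref{eq:five-enumerators}. Hence $\langle\psi|U_a|\psi\rangle=0$, so $A_1(P_\psi)=A_2(P_\psi)=0$. Every rank-one subprojector of $P_*$ therefore attains the minimum, which also shows that the minimum is achieved and that \eqref{eq:five-rank-one-minimum} holds.
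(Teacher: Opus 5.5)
Your proposal is correct and follows the paper's proof essentially step for step. The paper uses the rank-one fixed-point MacWilliams equations at $j=0,1,2$ to obtain $A_3=10+2A_1-A_2$, $A_4=15-A_2$ and $A_5=6-3A_1+A_2$, substitutes these into the spectral identity with the listed values of $W$, and reads off equality from $A_1=A_2=0$ via the marginal relation; the only cosmetic difference is that it gets $P_*U_aP_*=0$ for $1\leq\wt(a)\leq2$ from Theorem~\ref{thm:equality} rather than from $B_1(P_*)=B_2(P_*)=0$ in the listed enumerator, which is equivalent.
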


\begin{proof}
For a rank-one projector, $A_w=B_w$ term by term.  Thus the MacWilliams
transform \eqref{eq:quantum-macwilliams} is a fixed-point relation.
In this five-qubit calculation, write $W(w)\coloneqq W_{5,4}(w)$.  The fixed-point
equations at $j=0,1,2$ are
\begin{align*}
 \sum_{w=0}^{5}A_w&=32,\\
 32A_1&=15+11A_1+7A_2\\
      &\quad+3A_3-A_4-5A_5,\\
 32A_2&=90+42A_1+10A_2\\
      &\quad-6A_3-6A_4+10A_5.
\end{align*}
Solving for the three upper weights gives
\begin{equation}
 \begin{aligned}
 A_3&=10+2A_1-A_2,\\
 A_4&=15-A_2,\\
 A_5&=6-3A_1+A_2.
 \end{aligned}
 \label{eq:five-rank-one-elimination}
\end{equation}
Substitution into \eqref{eq:D-spectral}, using
$(W(1),\ldots,W(5))=(21286,1118,246,110,70)$, yields
\begin{align*}
 \cD_4(P_\psi)
 &=\frac1{32}\sum_{w=1}^5A_wW(w)\\
 &=\frac{2265}{16}+674A_1+26A_2,
\end{align*}
which proves \eqref{eq:five-rank-one}.  Since $A_1,A_2\geq0$, equality in
\eqref{eq:five-rank-one-minimum} is equivalent to $A_1=A_2=0$.

For a pure state, this vanishing says that the expectation of every
nonidentity Pauli operator supported on at most two qubits is zero.  Pauli
expansion of the reduced density operators shows that this is equivalent to
every one- and two-qubit reduction being maximally mixed.  For five qubits,
that is precisely $2$-uniformity, or absolute maximal entanglement.  Finally,
Theorem~\ref{thm:equality} gives
$P_*UP_*=0$ for every nonidentity Pauli $U$ of weight at most two.  Hence
every unit vector in $\operatorname{ran}P_*$ has $A_1=A_2=0$.
\end{proof}

The local-unitary classes of five-qubit absolutely maximally entangled states
have recently been determined in \cite{Tan2025FiveQubitAME}.
Corollary~\ref{cor:five-rank-one} gives the corresponding explicit
two-coefficient formula for the all-radii discrepancy.

\newif\ifdeclarations
\declarationsfalse
\ifdeclarations
\section*{Declarations}\pdfbookmark[1]{Declarations}{declarations}

\paragraph{Funding.}
No funding was received for this work.

\paragraph{Competing interests.}
The author declares no competing interests.

\paragraph{Data availability.}
The finite parameter values are specified in the article. Outputs of the
selected diagnostic checks are archived with the code in
\cite{Zabokritskiy2026PauliDiagnostics}.

\paragraph{Code availability.}
An exact-arithmetic script and its expected output for selected scalar
constants and finite parameter instances are archived in
\cite{Zabokritskiy2026PauliDiagnostics}. These checks are supplementary
diagnostics; the mathematical proofs do not rely on their execution.
\fi
\pdfbookmark[1]{References}{references}
\begingroup
\interlinepenalty=10000
\bibliography{references}
\endgroup

\end{document}